\begin{document}
\newcommand{\commentout}[1]{}

\newcommand{\nwc}{\newcommand}
\nwc{\ba}{\begin{array}}
\nwc{\bal}{\begin{align}}
\nwc{\bea}{\begin{eqnarray}}
\nwc{\beq}{\begin{eqnarray}}
\nwc{\bean}{\begin{eqnarray*}}
\nwc{\beqn}{\begin{eqnarray*}}
\nwc{\beqast}{\begin{eqnarray*}}

\nwc{\ea}{\end{array}}
\nwc{\eal}{\end{align}}
\nwc{\eea}{\end{eqnarray}}
\nwc{\eeq}{\end{eqnarray}}
\nwc{\eean}{\end{eqnarray*}}
\nwc{\eeqn}{\end{eqnarray*}}
\nwc{\eeqast}{\end{eqnarray*}}

\nwc{\ep}{\varepsilon}
\nwc{\ept}{\epsilon}

\nwc{\calS}{\mathcal{S}}
\nwc{\calT}{\mathcal{T}}
\nwc{\calL}{\mathcal{L}}
\nwc{\calJ}{\mathcal{J}}
\nwc{\calR}{\mathcal{R}}
\nwc{\calE}{\mathcal{E}}
\nwc{\calN}{\mathcal{N}}
\nwc{\calG}{\mathcal{G}}

\nwc{\tcalL}{\tilde\calL}

\nwc{\nwt}{\newtheorem}

\nwt{cor}{Corollary}
\nwt{proposition}{Proposition}
\nwt{corollary}{Corollary}
\nwt{theorem}{Theorem}
\nwt{summary}{Summary}
\nwt{lemma}{Lemma}
\nwt{remark}{Remark}
\nwt{definition}{Definition}
\nwc{\nn}{\nonumber}

\nwc{\diag}{{\rm diag}}

\nwc{\RR}{\mathbb{R}}
\nwc{\CC}{\mathbb{C}}
\nwc{\ZZ}{\mathbb{Z}}
\nwc{\EE}{\mathbb{E}}
\nwc{\PP}{\mathbb{P}}
\nwc{\lam}{\lambda}

\newcommand{\om}{\omega}
\newcommand{\rank}{\text{rank}}
\newcommand{\supp}{\mathcal{S}}

\nwc{\talpha}{\tilde\alpha}
\nwc{\tbeta}{\tilde\beta}
\nwc{\tom}{\tilde\om}
\nwc{\tx}{\tilde{x}}
\nwc{\tg}{\tilde g}
\nwc{\tf}{\tilde f}
\nwc{\widehatom}{\widehat\Gamma}
\nwc{\halpha}{\widehat\alpha}
\nwc{\hbeta}{\widehat\beta}
\nwc{\hsigma}{\widehat\sigma}
\nwc{\hg}{\widehat{g}}
\nwc{\hf}{\widehat{f}}
\nwc{\hR}{\widehat{R}}
\nwc{\hb}{\widehat b}
\nwc{\hF}{\widehat{F}}
\nwc{\hn}{\widehat{n}}
\nwc{\barg}{\bar{g}}
\nwc{\barf}{\bar{f}}
\nwc{\barG}{\bar{G}}
\nwc{\bg}{\mathbf{g}}
\nwc{\bff}{\mathbf{f}}
\nwc{\bh}{\mathbf{h}}
\nwc{\bz}{\mathbf{z}}
\nwc{\bp}{\mathbf{p}}

\nwc{\DR}{{\rm DR}}

\nwc{\adj}{\text{adj}}
\nwc{\trace}{\text{trace}}

\nwc{\wtR}{\widetilde{R}}

\nwc{\smax}{{\sigma_{\max}}}
\nwc{\smaxsq}{{\sigma^2_{\max}}}
\nwc{\smin}{{\sigma_{\min}}}
\nwc{\sminsq}{{\sigma^2_{\min}}}
\nwc{\gammamax}{\gamma_{\max}}
\nwc{\gammamin}{\gamma_{\min}}
\nwc{\alphamax}{\alpha_{\max}}
\nwc{\alphamin}{\alpha_{\min}}
\def\onegroup{\underbrace{0 \hskip\arraycolsep\cdots \hskip\arraycolsep 0}_{k-2}}

\title{Sensor Calibration for Off-the-Grid Spectral Estimation}
\author{Yonina C. Eldar\thanks{Department of EE Technion, Israel Institute of Technology, Haifa. Email: yonina@ee.technion.ac.il}
\and 
Wenjing Liao \thanks{School of Mathematics, Georgia Institute of Technology. Email: wenjing.liao@math.gatech.edu}
\and
Sui Tang \thanks{Department of Mathematics, Johns Hopkins University. Email: stang@math.jhu.edu}
}

\maketitle

\begin{abstract}
This paper studies sensor calibration in spectral estimation where the true frequencies are located on a continuous domain. We consider a uniform array of sensors that collects measurements whose spectrum is composed of a finite number of frequencies, where each sensor has an unknown calibration parameter. Our goal is to recover the spectrum and the calibration parameters simultaneously from multiple snapshots of the measurements. 
In the noiseless case with an infinite number of snapshots, we prove uniqueness of this problem up to certain trivial, inevitable ambiguities based on an algebraic method, as long as there are more sensors than frequencies. We then analyze the sensitivity of this algebraic technique with respect to the number of snapshots and noise. 

We next propose an optimization approach that makes full use of the measurements by minimizing a non-convex objective which is non-negative and continuously differentiable over all calibration parameters and Toeplitz matrices. We prove that, in the case of infinite snapshots and noiseless measurements, the objective vanishes only at equivalent solutions to the true calibration parameters and the measurement covariance matrix. The objective is minimized using Wirtinger gradient descent which is proven to converge to a critical point.
We show empirically that this critical point provides a good approximation of the true calibration parameters and the underlying frequencies.


\end{abstract}

{\bf Keywords:} sensor calibration, spectral estimation, frequencies on a continuous domain, uniqueness, stability, algebraic methods and an optimization approach.

\section{Introduction}


High-performance systems in signal processing often require precise calibration of sensors. However, such advanced sensors can be very expensive and difficult to build. It is, therefore, beneficial to use the measurements themselves to  adjust calibration parameters and perform signal recovery simultaneously. 
We treat sensor calibration in spectral estimation where the frequencies of interest are located on a continuous domain.

A uniform array of $N$ sensors collects measurements whose spectrum is composed of $s$ spikes located at $\calS:=\{\om_j\in [0,1)\}_{j=1}^s$with amplitudes $x(t) := \{x_j(t) \}_{j=1}^s\in \CC^s$ at time $t$. Each sensor has an unknown calibration parameter $g_n,\ n=0,\ldots,N-1$. The measurement vector at the array output can then be written as
\beq
\label{eqprob1}
y_e(t) = GAx(t)+e(t),
\eeq
where $y_e(t) = \{y_{e,n}(t)\}_{n=0}^{N-1} \in \CC^N$ and $y_{e,n}(t)$ is the noisy measurement collected by the $n$-th sensor at time $t$, $G = \diag(g) \in \CC^{N \times N}$ with $g = \{g_n\}_{n=0}^{N-1} \in \CC^N$ is the calibration matrix, $e(t) \in \CC^N$ is the noise vector at time $t$, and  $A \in \CC^{N \times s}$ is the sensing matrix with elements
\beq
\label{eqA}
A_{n, j} =  \frac{1}{\sqrt N}{e^{2\pi i n \om_j }}
\eeq
where $n=0,\ldots,N-1$ and $j=1,\ldots,s$.
Our goal is to recover the spectrum $\calS$ and the calibration parameters $g$ simultaneously from noisy measurements $\{y_e(t), t \in\Gamma \}$ where $L:=\#\Gamma$ is the number of snapshots. 

Spectral estimation modeled by \eqref{eqprob1} is a fundamental problem in imaging and signal processing. It is widely used in speech analysis, direction of arrival (DOA) estimation, array imaging and remote sensing. For example, in array imaging \cite{FannjiangYan,ArrayImaging,SchD,Sch}, we assume there are $s$ sources located at $\{\om_j\}_{j=1}^s$ with amplitudes $\{x_j(t)\}_{j=1}^s$, and use a uniform array of sensors to collect measurements $\{y_e(t), \ t \in \Gamma\}$. Our goal is to recover the source locations and the sensor calibration parameters simultaneously from the measurements. 

\commentout{
\begin{figure}
\centering
\commentout{
\begin{tikzpicture}[scale = 0.5]
  \draw[thick] (10,-5) -- (10,5);
   \draw [fill=blue,blue, ultra thick] (10,4.2) circle [radius = 0.2];
   \draw [fill=blue,blue, ultra thick] (10,2.2) circle [radius = 0.2];
   \draw [fill=blue,blue, ultra thick] (10,1) circle [radius = 0.2];
   \draw [fill=blue,blue, ultra thick] (10,0) circle [radius = 0.2];
   \draw [fill=blue,blue, ultra thick] (10,-1.5) circle [radius = 0.2];
   \draw [fill=blue,blue, ultra thick] (10,-2.8) circle [radius = 0.2];
   \draw [fill=blue,blue, ultra thick] (10,-4) circle [radius = 0.2];
   \node[right] at (11,2) {point sources};
   \node[right] at (11,0) {location: $\omega_j$};
   \node[right] at (11,-2) {amplitude: $x_j(t)$};

   \draw[<->] (-6,-6) -- (10,-6);
   \node[above] at (2,-6) {far field};
   
   \draw[thick] (-5,-2) -- (-5,2);
   \node at (-4.7,1.5) {$\rhd$};
   \node at (-4.7,0.5) {$\rhd$};
   \node at (-4.7,-0.5) {$\rhd$};
   \node at (-4.7,-1.5) {$\rhd$};
   
   \node[left] at (-6,0) {sensors};
\end{tikzpicture}
}
\subfigure[inverse scattering]{
\includegraphics[width = 0.4\textwidth]{FigDemo/InverseScattering2.png}
}
\subfigure[Direction-Of-Arrival (DOA)]{
\includegraphics[width = 0.4\textwidth]{FigDemo/DOA.png}
}
\caption{Source localization in array imaging}
\label{FigSource}
\end{figure}

\begin{enumerate}

\item Inverse source and inverse scattering \cite{FannjiangYan,FannjiangScatteringI,FannjiangScatteringII}: in remote sensing, we assume there are $s$ sources located at $\{\om_j\}_{j=1}^s$ with amplitudes $\{x_j(t)\}_{j=1}^s$, and set a uniform array of sensors to collect measurements $\{y_e(t), \ t \in \Gamma\}$. Our goal is to recover the source locations and the sensor calibration parameters simultaneously from the measurements.


\item Direction-Of-Arrival (DOA) estimation \cite{ArrayImaging,SchD,Sch}: in signal processing literature, direction of arrival (DOA) denotes the direction from which usually a propagating wave arrives at a point, where usually an array of sensors (also called antenna elements) are located (see Figure \ref{FigSource} (b)). The goal is to find the direction relative to the array where the wave propagates from, which has
wide-range applications such as radar, sonar and computed tomography. By letting $\omega_j =  \Delta/\lambda \cos \phi_j$ where $\Delta$ is the spacing between sensors, $\lambda$ is the wave length, and $\phi_j$ is the direction of the $j$-th wave, we can model the DOA estimation exactly by the inverse problem in \eqref{eq:model3}.

\item Spectral analysis \cite{stoica1997introduction}: spectral analysis considers the problem of determining the spectral content (i.e., the distribution of power over frequency) of a time series from a finite set of measurements. It finds applications in many diverse fields, such as economics, astronomy, medicine, seismology, etc. For example, in speech analysis, spectral models of voice signals are useful in better
understanding the speech production process, and - in addition - can be used for both speech synthesis (or compression) and speech recognition. When the signals dealt with can be well described by a superposition of sines and cosines, classical spectral analysis is exactly the same as the inverse problem in \eqref{eq:model3}.
\end{enumerate} 
}

When all sensors are perfectly calibrated ($g$ is known), many methods have been proposed to recover frequencies located on a continuous domain, such as Prony's method \cite{Prony}, MUSIC \cite{SchD,Sch}, ESPRIT \cite{Esprit}, $\ell_1$ minimization \cite{CandesCPAM,Tang}, and  greedy algorithms \cite{SpectralCS,FL}. We refer the reader to \cite{Eldar_SamplingBook,Stoica} for a comprehensive review.
In this paper, the problem is complicated by the fact that each sensor has an unknown calibration parameter.

The calibration problem modeled by \eqref{eqprob1} 
has been considered in \cite{PKailath,WRS93, WRM94, FriedlanderWeiss} with the assumptions that the underlying frequencies have random and uncorrelated amplitudes, frequencies and noises are independent, and noises at different sensors are uncorrelated. 
In  \cite{PKailath}, Paulraj and Kailath investigated DOA estimation using a uniform linear array in the presence of unknown calibration parameters. By exploiting the Toeplitz structure of the measurement covariance in the case of perfect sensors (which we refer to as the case in which all calibration parameters are equal to $1$), they proposed an algebraic method relying on a least squares solution of two linear systems of equations for the calibration amplitudes and phases, respectively. 
However, the issue of phase wrapping in the set of equations for the calibration phase estimation is not taken into account and can degrade the DOA estimator performance (see Section \ref{SecAlgebraicFull}). This method is called the full algebraic method in our paper.
A similar approach is followed in \cite{WRM94}. 
In \cite{LiEr} Li and Er showed that, the bias in the full algebraic method with finite snapshots of measurements is nonzero, and if the problem of phase wrapping is resolved, then the variance for the calibration phases is $O(1/\sqrt L)$ where $L$ is the number of snapshots.
However, it remains unclear how to resolve phase wrapping in the full algebraic method.
Furthermore, sensitivity to noise and sensitivity in spectral estimation are not treated in \cite{LiEr}.
In \cite{WRS93}, Wylie, Roy, and Schmitt proposed a partial algebraic technique
which successfully avoids the problem of phase wrapping by removing redundancy in the linear system. A shortcoming of their approach is that a large part of the measurements are not used in the recovery process. 

In \cite{FriedlanderWeiss}, an alternating algorithm is proposed by Friedlander and Weiss for the same calibration problem modeled by \eqref{eqprob1}.
This algorithm is based on a two-step procedure. First, one assumes that the calibration parameters are known, and estimates the frequencies through the MUSIC algorithm. Then one solves an optimization problem to obtain the best calibration parameters based on the recovered frequencies. However, no performance guarantee for this approach is provided. In addition, this algorithm does not perform as well as the partial algebraic method in the presence of noise.

%



\subsection{Our contributions}
We begin by studying uniqueness of the calibration problem given by \eqref{eqprob1} and show that there are certain inevitable ambiguities in this problem. We characterize all trivial ambiguities, and prove that, both the spectrum and the calibration parameters are uniquely determined from infinite snapshots of noiseless measurements up to a trivial ambiguity, as long as there are more sensors than frequencies. Our proof is based on the algebraic methods proposed in \cite{PKailath,WRS93}. 
We then present a sensitivity analysis of the partial algebraic method \cite{WRS93} with respect to the number of snapshots $L$ and noise level $\sigma$. In particular, Theorem \ref{thmagstability} shows that, if the underlying frequencies are separated by $1/N$ (the standard resolution in spectral estimation), then the reconstruction error of the calibration parameters in the partial algebraic method is $O\left(L^{-\frac 1 2}(C_1+C_2\max(\sigma,\sigma^2))\right)$ for some constants $C_1,C_2>0$. This rate is verified by numerical experiments.
As for frequency localization, we prove that, the noise-space correlation function whose $s$ smallest local minima correspond to the recovered frequencies in the MUSIC algorithm, is perturbed by at most $O\left( L^{-\frac 1 2}(C_3+C_4\max(\sigma,\sigma^2))\right)$ for some constants $C_3,C_4>0$.
%


The partial algebraic method in \cite{WRS93} only exploits partial entries of the measurement covariance matrix and while full algebraic method in \cite{PKailath} is affected by phase wrapping. We therefore propose an optimization approach to make full use of the measurements.
We consider an objective function composed of two terms: one is a quadratic loss and the other is a penalty which prevents calibration parameters going to $\infty$ and frequency amplitudes decreasing to $0$ (or vice versa). This objective is continuously differentiable but non-convex. 
We propose to minimize it over all possible calibration parameters and Toeplitz matrices by Wirtinger gradient descent \cite{CandesWirtinger}. We prove that, Wirtinger gradient descent converges to a critical point, and show empirically that this critical point provides a good approximation to the true calibration parameters and the underlying frequencies.

Finally we perform a systematic numerical study to compare the partial algebraic method \cite{WRS93}, the alternating algorithm \cite{FriedlanderWeiss}, and our optimization approach. With respect to the stability to $L$ and $\sigma$, our algorithm has the best performance.

In summary, the main contributions of this paper are (i) characterizing all trivial ambiguities of the calibration problem modeled by \eqref{eqprob1} and proving uniqueness  up to a trivial ambiguity; (ii) presenting a sensitivity analysis of the partial algebraic method; (iii) proposing an optimization approach with superior numerical performance over previous methods.

\subsection{Related work}


Recently, many works addressed the following single-snapshot calibration problem: 
\beq
\label{modelling}
y = \diag(g) \Phi x_0 + e
\eeq
where $g \in \CC^{N}, x_0 \in \CC^M$ are the unknown calibration parameters and signal of interest respectively, $\Phi \in \CC^{N \times M}$ is a given sensing matrix, $e \in \CC^N$ represents noise, and $y \in \CC^{N}$ is the measurement vector \cite{RombergBD,LingStrohmer,LLSW}. The goal is to recover $g$ and $x_0$ from $y$.  Without additional assumptions, solutions to this problem are not unique since there are more unknowns than equations.  In \cite{RombergBD}, Ahmed, Recht, and Romberg assumed that $g$ lies in a known subspace, and used a lifting technique to transform the problem into that of recovering a rank-one matrix from an underdetermined system of linear equations.
They proved explicit conditions under which nuclear norm minimization is guaranteed to recover the original solution in the case where $\Phi$ is a random Gaussian matrix. 
In \cite{LingStrohmer}, Ling and Strohmer extended the framework in \cite{RombergBD} to allow for sparse signals, and a random Gaussian matrix or a random partial Discrete Fourier Transform (DFT) matrix.
The latter is closely related to spectral estimation assuming a discretized frequency grid with spacing equal to $1/N$. It is well known that when the underlying frequencies are on a continuous domain, this discretization may cause a large gridding error 
\cite{ChiMismatch,SpectralCS,FLSPIE,FL}. Here we do not discretize the frequencies in order to avoid gridding errors.
The lifting technique has been applied on a wide range of blind deconvolution and sensor calibration models \cite{Remi,ADemanet,cosse2017blind,ChiLift,yang2016super}, among which \cite{ChiLift,yang2016super} are mostly related with our model.
In \cite{ChiLift}, Chi considered a slightly different single-snapshot model: 
\beq
\label{modelchi}
y = \diag(g) A x
\eeq 
where $g \in \CC^{N}$ contains unknown calibration parameters, $A$ is the same as in \eqref{eqA}, $x \in \CC^s$ is an unknown amplitude vector, and $y \in \CC^{N}$ is the measurement vector. The goal is to recover $g$, $x$, and the frequencies $\{\om_j\}_{j=1}^s$ from $y$. 
This problem is the same as our model in \eqref{eqprob1} with a single snapshot. Chi solved \eqref{modelchi} using a lifting technique and atomic norm minimization, and proved that, in the noiseless case where $g$ lies in a  random subspace of dimension $K$ with coherence parameter $\mu$, and if the underlying frequencies are separated by $4/N$, then exact recovery is guaranteed with high probability as long as $N \ge C \mu K^2 s^2$ up to a log factor. Chi's result is generalized by Yang et al. in \cite{yang2016super} to the model
\beq
y(n)=\sum_{j=1}^s x_je^{2\pi n i w_j}g_j(n), \ n=0,\cdots, N-1,
\eeq  
where $x_j \in \mathbb{C}$, $w_j \in [0,1)$ and $\{g_j(n)\}$ are the unknown amplitude, location, and samples of the waveform associated with the $j$th complex exponential. It is assumed that (i) all $g_j$ live in a common random subspace of dimension $K$ with coherence parameter $\mu$; (ii) the $\omega_j$'s are separated by $4/N$. Yang et al. proved exact recovery with high probability as long as $N \ge C \mu sK$ up to a log factor. Noise was not considered in \cite{ChiLift,yang2016super}.

Since the lifting technique greatly increases computational complexity, many non-convex optimization approaches have been proposed to address problems in signal processing, such as phase retrieval \cite{CandesWirtinger,BE_STFT,WangTruncatedAmplitudeFlow,SunPR}, dictionary recovery \cite{SunDictionary}, blind deconvolution \cite{LLSW}, and low-rank matrix estimation \cite{TuoLowRank}. 
In \cite{LLSW}, Li et al. formulated a non-convex optimization problem for the calibration problem modeled by \eqref{modelling}, and solved it with a two-step scheme composed of a good initialization and gradient descent \cite{LLSW}. Performance guarantees were proved when $g$ lies in a known subspace and $\Phi$ is a random Gaussian matrix. This theory does not apply to spectral estimation since our sensing matrix is not random Gaussian.

After sensors are built, it is usually cheap to take multiple snapshots of measurements. This paper studies the calibration problem modeled by \eqref{eqprob1} with multiple snapshots. In comparison with the works in \cite{RombergBD,ChiLift,LingStrohmer,LLSW}, we remove the assumption that $g$ lies in a known subspace. Instead we utilize the Fourier structure and take advantage of the multiple snapshots. In addition, we only require more sensors than frequencies, namely, $N >s$.




%
 To consider multiple snapshots of measurements, the works in \cite{LiBresler,LiBreslerPCA} addressed the following calibration model: 
\beq
\label{modelli}
Y = \diag(g) \Phi X_0 + E
\eeq
where $g \in \CC^{N},X_0 \in \CC^{M \times L}$ are the unknown calibration parameters and signals in $L$ snapshots respectively, $\Phi \in \CC^{N \times M}$ is a given sensing matrix, $E \in \CC^{N \times L}$ represents noise, and $Y \in \CC^{N \times L}$ is the measurement matrix in $L$ snapshots.
In \cite{LiBresler}, Li, Lee and Bresler proved uniqueness up to a scaling ambiguity for generic signals $X_0$ and a generic sensing matrix $\Phi$ provided that $N > M$ and $\frac{N-1}{N-M} \le L \le M$. In the case where $X_0$ has $s$ non-zero rows, uniqueness was proved for generic signals with $s$ non-zero rows and a generic sensing matrix $\Phi$ provided that $N > 2s$ and $\frac{N-1}{N-2s} \le L \le s$. These conditions are optimal in terms of sample complexity. 
In \cite{LiBreslerPCA}, the authors formulated this calibration problem as an eigenvalue/eigenvector problem, and solved it via power iterations, or when $X$ is sparse or jointly sparse, via truncated power iterations. In \cite{lee2018fast}, the same kind of power method was applied to solve a  multichannel blind deconvolution problem. 
%
However, the theory and algorithms in \cite{lee2018fast, LiBresler,LiBreslerPCA} do not apply to our case for the following reasons: (i) our sensing matrix defined in \eqref{eqA} is unknown, since it depends on the unknown frequencies; (ii) even if we discretize the frequency domain and approximate every frequency by the nearest grid point to fit the model \eqref{modelli}, our sensing matrix is not generic. In comparison, our uniqueness results consider the Fourier structure of the measurements but assume infinite snapshots. It is also interesting to study the optimal sample complexity in our case; we leave this problem for future research. 
\commentout{
Later in \cite{LiBresler1}, Li, Lee and Bresler linearize this problem by reformulating the equation \eqref{modelli} into 
\beq
\label{modelli1}
\diag(\gamma)Y-A X_0 =0
\eeq
where $\gamma$ is the point-wise inverse of the calibration parameters $g$.  They  use a lifting technique to transform \eqref{modelli1} to an eigenvalue/eigenvector problem and  then propose to solve it  via power iterations when $X_0$ has  a subspace structure or via truncated power iteration when $X_0$ has a joint sparsity structure. 
}


\subsection{Organization and Notation}
This paper is organized as follows.  
Uniqueness results are described in Section \ref{sec:uniqueness}. The partial algebraic method and its sensitivity analysis are presented in Section \ref{sec:algebraic}. Section \ref{sec:optimization} considers our optimization approach and its convergence to a critical point. Numerical simulations are presented in Section \ref{sec:numerics}. We conclude and discuss future research directions in Section \ref{sec:conclusion}. Most of the proofs are relegated to the Appendices.


Throughout the paper we use $s,N,L$ to denote the number of frequencies, sensors, and snapshots respectively.
The expression $C = (A\ B)$ horizontally concatenates matrices A and B, while $C = (A; B)$ concatenates them vertically.
For $x \in \CC^N$, $\sum x :=  \sum_{j=1}^N x_j$, and $\diag(x)$ is the $N \times N$ diagonal matrix whose diagonal is $x$. We use $|x| \in \RR^N$ and $\angle x \in \RR^N$ to denote the magnitude and phase vectors of $x$ respectively such that $|x|_j = |x_j|$ and $(\angle x)_j = \angle x_j, j=1,\ldots,N$. 
The dynamic range of $x$ is the ratio between the largest amplitude and the smallest amplitude of $x$, denoted by $\DR_x:=\max_i |x_i| / \min_i |x_i|$.  
For $X \in \CC^{N \times N}$, we use $\diag(X)$ to denote the main diagonal of $X$, $\diag(X,k), k >0$ to denote the $k$th diagonal of $X$ above the main diagonal, and $\diag(X,k), k<0$ to denote the $k$th diagonal of $X$ below the main diagonal. The notation $\|X\|$ and $\|X\|_F$ represent the spectral norm and the Frobenius norm of $X$, respectively.
The expression $A \preceq B$ for two square matrices $A$ and $B$ of the same size means $B-A$ is positive semidefinite. The expression $a \lesssim b$ for two scalars $a$ and $b$ means $a \le C b$ with a constant $C$ independent of $a,b$.
We use $\mathbf{0}$ to denote the null vector or the null matrix. 


\section{Uniqueness results}

\subsection{Trivial ambiguity and uniqueness}

In the case of a single snapshot, uniqueness does not exist since there are fewer measurements $(N)$ than unknowns $(N+2s)$.
After sensors are built, it is often cheap to take multiple snapshots.
Therefore, for uniqueness we study the setting where infinite snapshots of noiseless measurements are taken. 
Clearly certain trivial ambiguities between the spectrum and  the calibration parameters are inevitable. For example, one can add a gain to $x(t)$ and then divide it out in $g$, so that there is always a gain ambiguity. Similarly, there is always a shift ambiguity in the frequencies since we can shift all frequencies by a constant and all this will do is add a phase modulation to $g$. We define trivial ambiguities in the calibration problem as follows:
\begin{definition}[Trivial ambiguity]
Let $\{g, \calS,x(t)\}$ be a solution to the calibration problem modeled by \eqref{eqprob1}. Then $\{\tilde g, \tilde\calS,\tilde{x}(t) \}$ is called equivalent to $\{g, \calS,x(t)\}$ up to a trivial ambiguity if there exist $c_0  >0, c_1,c_2 \in \RR$ such that 
\begin{align*}
\tilde g & =\{\tilde g_n = c_0 e^{i (c_1+n c_2)}g_n\}_{n=0}^{N-1}
\\
\tilde\calS &= \{\tilde\om_j: \tilde\om_j = \om_j - c_2/(2\pi)\}_{j=1}^s
\\
\tilde{x}(t) &= x(t) c_0^{-1}e^{-i c_1}.
\end{align*} 
\end{definition}

To proceed we make the following assumptions on the statistics of the frequencies (or sources in array imaging) and noises:
\begin{description}

\item[A1] Calibration parameters do not vanish: $|g_n| \neq 0, \ n=0,\ldots,N-1$.

\item[A2] Sources and noises have zero mean: $\EE x(t) = \mathbf{0}$ and $\EE e(t) = \mathbf{0}$.

\item[A3] 
Sources are uncorrelated such that 
$
R^x: = \EE x(t)x^* (t) = \diag(\{\gamma_j^2\}_{j=1}^s).
$

\item[A4] Sources and noises are independent, i.e.,  $\EE x(t)e^*(t) = \mathbf{0}$.

\item[A5] Noises at different sensors are uncorrelated so that $\EE e(t)e^*(t) = \sigma^2 I_N$ where $\sigma$ represents the noise level.

\end{description}

\label{sec:uniqueness}

Define 
\beq
\label{eqf}
f_n =\frac{1}{N} \sum_{j=1}^s \gamma_j^2 {e^{2\pi i n\om_j}}, \ n=0,\ldots,N-1.
\eeq
The values $\{f_n\}_{n=0}^{N-1}$ contain sufficient information to recover all frequencies by standard spectral estimation. Notice that $f_0 = \sum_{j=1}^s \gamma_j^2>0$. 
In the case of infinite snapshots, uniqueness of the calibration problem exists up to a trivial ambiguity as long as $|f_1|>0$. This is a sufficient condition to guarantee that the sub-diagonal entries in the covariance measurement do not vanish. If the amplitudes $\{\gamma_j^2\}_{j=1}^s$ are generic, then we have $|f_1|>0$ almost surely.
\begin{theorem}
\label{thm1}
Suppose $|f_1| > 0$, $N \ge s+1$, and the assumptions A1-A5 hold. Let $\{g, \calS,x(t)\}$ be a solution to the calibration problem modeled by \eqref{eqprob1}.
If there is another solution $\{\tilde g,\tilde\calS,\tilde{x}(t)\}$, then $\{\tilde g,\tilde\calS,\tilde{x}(t)\}$ is equivalent to $\{g, \calS,x(t)\}$ up to a trivial ambiguity.
\end{theorem}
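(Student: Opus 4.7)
The plan is to reduce the identifiability question to an algebraic statement on the second-order covariance matrix and then unpack it using Toeplitz and Vandermonde structure. First I would invoke A2--A5 to write
\[
R := \EE\,y_e(t)y_e^*(t) = GAR^xA^*G^* + \sigma^2 I_N,
\]
with $R^x = \diag(\gamma_j^2)$. Since $A$ has full column rank $s$ (its columns are Vandermonde at distinct nodes), $G$ is invertible by A1, and $R^x \succ 0$, the matrix $R_0 := GAR^xA^*G^*$ has rank exactly $s<N$; hence $\sigma^2$ must be the smallest eigenvalue of $R$. Applying the same identity to the competing solution $\{\tilde g,\tilde\calS,\tilde x(t)\}$ forces $\tilde\sigma = \sigma$ and
\[
R_0 = \tilde G \tilde A \tilde R^x \tilde A^* \tilde G^*.
\]
All subsequent analysis is therefore about the deterministic matrix $R_0$.

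Next I would exploit that $AR^xA^*$ is the Hermitian Toeplitz matrix with symbol $(f_k)$ of \eqref{eqf}, so $(R_0)_{m,n} = g_m\bar g_n f_{m-n}$ and likewise with tildes. Matching the diagonals ($m=n$) gives $|\tilde g_n|^2\tilde f_0 = |g_n|^2 f_0$, so $|\tilde g_n|/|g_n|$ equals a constant $c_0 := \sqrt{f_0/\tilde f_0}>0$. Matching the sub-diagonals ($m=n+1$) yields $g_{n+1}\bar g_n f_1 = \tilde g_{n+1}\overline{\tilde g_n}\,\tilde f_1$; by A1 and $|f_1|>0$ both sides are nonzero, in particular $\tilde f_1\ne 0$. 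Setting $r_n := \tilde g_n/g_n$, the sub-diagonal identity reads $r_{n+1}\bar r_n = f_1/\tilde f_1$, a constant, and combined with $|r_n|\equiv c_0$ this forces the phase of $r_n$ to be affine in $n$. I therefore get $\tilde g_n = c_0 e^{i(c_1+nc_2)} g_n$ for some $c_1,c_2\in\RR$, which is the announced form of the ambiguity on $g$.

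To identify the frequencies and amplitudes, I would substitute this expression for $\tilde g$ back into the generic entry of $R_0$ and cancel the non-vanishing $g$-factors. This produces $\tilde f_k = c_0^{-2}e^{-ikc_2} f_k$ for every $k$, i.e., at the Toeplitz level, the matrix $\tilde T$ generated by $(\tilde f_k)$ equals $V\,\diag(c_0^{-2}\gamma_j^2)\,V^*$ with $V_{n,j} = \frac{1}{\sqrt N}\,e^{2\pi i n(\omega_j - c_2/(2\pi))}$, a Vandermonde matrix at the shifted nodes. But $\tilde T = \tilde A\tilde R^x\tilde A^*$ is also a Vandermonde decomposition, at the true tilde nodes. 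Since $\tilde T$ is positive semidefinite of rank $s$ and $N\ge s+1$, the Carath\'eodory--Toeplitz theorem implies that this decomposition is unique up to reordering, so $\tilde\omega_j = \omega_j - c_2/(2\pi)$ and $\tilde\gamma_j^2 = c_0^{-2}\gamma_j^2$. A short computation finally shows $\tilde G\tilde A = c_0 e^{ic_1} GA$, so equating the signal part of \eqref{eqprob1} yields $\tilde x(t) = c_0^{-1}e^{-ic_1}x(t)$.

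The main obstacle I anticipate is this last identification: while the pointwise relation $\tilde f_k = c_0^{-2}e^{-ikc_2}f_k$ is an immediate consequence of the preceding algebra, promoting it to matching of individual frequencies requires the Carath\'eodory--Toeplitz uniqueness for rank-deficient positive semidefinite Toeplitz matrices, which is precisely where $N\ge s+1$ enters essentially. The assumption $|f_1|>0$ is equally essential, though for a different reason: without it the sub-diagonal of $R_0$ vanishes and the affine-in-$n$ argument of step two collapses, leaving $c_2$ unconstrained by first-order data.
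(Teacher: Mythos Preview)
Your proof is correct and mirrors the paper's two-stage strategy: first pin down $\tilde g$ from the diagonal and first sub-diagonal of $R_0$ (your ratio argument $r_{n+1}\bar r_n = f_1/\tilde f_1 = \text{const}$ is a cleaner packaging of the paper's Lemma~\ref{lemma2}, which writes out the same relation as a linear system for the phases), then identify the frequencies from the remaining Toeplitz factor. The one substantive difference is the tool you invoke in the second stage: the paper appeals to the MUSIC criterion (Proposition~\ref{propmusic1}) to extract the nodes from $\tilde F = c_0^{-2}\tilde A R^x \tilde A^*$, whereas you use Carath\'eodory--Toeplitz uniqueness of the Vandermonde decomposition of a rank-$s$ positive semidefinite Toeplitz matrix. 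Both results are standard and both consume the hypothesis $N\ge s+1$ in exactly the same way, so this is largely a matter of taste; your route is arguably more self-contained for a pure identifiability statement, since it avoids introducing the noise-space correlation function solely to certify uniqueness. You also make explicit something the paper glosses over by passing to the noiseless case, namely that $\sigma^2$ is identified as the smallest eigenvalue of $R$, so that the noisy covariance determines $R_0$.
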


\subsection{Proof of uniqueness}

We prove Theorem \ref{thm1} based on the algebraic technique proposed in \cite{PKailath,WRS93}. 
We begin by considering the covariance of $y_e$:
\beq
\label{eqcovRye}
R^y_e := \EE y_e(t) y_e^*(t)
= G A \underbrace{\EE x(t)x^*(t)}_{R^x} A^* G^* 
+ G A \underbrace{\EE x(t) e^*(t)}_{R^{xe}} 
+\underbrace{ \EE e(t)x^*(t)}_{R^{ex}} A^* G^* 
+ \underbrace{\EE e(t)e^* (t)}_{R^e}.
\eeq
Denote the noiseless data by $y(t) := G A x(t)$ and its covariance by 
\beq
\label{eqRy}
R^y := \EE y(t)y^*(t) = GA R^x A^* G^*.
\eeq
Under Assumptions (A1-A5), we have
\beq
\label{eqcovRye2}
R^y_e = R^y + R^e = GA R^x A^* G^* + \sigma^2 I_N.
\eeq
In the noiseless case, $\sigma =0$ and $R^y_e = R^y$. If infinite snapshots are collected, then we can assume $R^y$ is known.

Define the Toeplitz operator which maps a sequence to a Toeplitz matrix: $$\calT: \RR \times \CC^{N-1} \rightarrow \CC^{N \times N}: \ \calT(f) :=\begin{bmatrix}
f_0 & \barf_1 & \ddots & \barf_{N-2} & \barf_{N-1} \\ 
f_1& f_0 & \ddots & \ddots & \barf_{N-2} \\ 
\ddots & \ddots & f_0 & \ddots & \ddots \\ 
f_{N-2} & \ddots & \ddots & \ddots & \barf_{1} \\ 
f_{N-1} & f_{N-2} & \ddots & f_1 & f_{0} \\ 
\end{bmatrix}.
$$
With this notation
\beq
\label{eqRy2}
R^y = \diag(g) \calT(f) \diag(\barg),
\eeq
where $f$ is the sequence defined in \eqref{eqf}. The $(m,n)$th entry of $R^y$ is 
\beq
\label{eqRymn}
R^y_{m,n} =\frac{g_m \bar{g}_n}{N} \sum_{j=1}^s \gamma_j^2  {e^{2\pi i \om_j (m-n)}}= g_m \barg_n f_{m-n}.
\eeq
Throughout the paper we write $g_n = \alpha_n e^{i\beta_n}$, where $\alpha_n$ is the calibration amplitude and $\beta_n$ is the calibration phase at the $n$th sensor. By \eqref{eqRymn}, all calibration amplitudes can be uniquely determined from the diagonal entries of $R^y$ up to a scaling, and the calibration phases can be determined from the subdiagonal of $R^y$ up to a trivial ambiguity as long as $f_1$ does not vanish.


\begin{lemma}
\label{lemma2}
Suppose $|f_1| > 0$ and the assumptions A1-A5 hold. If there is another set $\{\tilde g \in \CC^N,\tilde f \in \RR \times \CC^{N-1}\}$ satisfying $\diag(\tilde g)\calT(\tilde f)\diag(\bar{\tilde g}) = \diag(g)\calT(f)\diag(\barg)$, then there exist $c_0>0$ and $c_1,c_2\in \RR$ such that
\begin{align*}
\tilde g_n  &= c_0 e^{i (c_1+n c_2)} g_n, \quad \tilde f_n = c_0^{-2} e^{-in  c_2} f_n, ~~ n=0,\ldots,N-1.
\end{align*}
\end{lemma}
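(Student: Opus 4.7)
The plan is to exploit the entrywise form of the Toeplitz identity $\diag(\tilde g)\calT(\tilde f)\diag(\bar{\tilde g}) = \diag(g)\calT(f)\diag(\bar g)$, which by \eqref{eqRymn} reads
\[
\tilde g_m\,\bar{\tilde g}_n\,\tilde f_{m-n} \;=\; g_m\,\bar g_n\, f_{m-n},\qquad 0\le m,n\le N-1,
\]
and then peel off information one diagonal at a time: the main diagonal fixes the calibration amplitudes up to a single positive scale, the sub-diagonal uses $|f_1|>0$ to fix the phases up to a linear shift, and the remaining diagonals determine $\tilde f$ automatically.

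First I would set $m=n$. Since $f_0 = \sum_j \gamma_j^2 > 0$ and the identity $|\tilde g_n|^2 \tilde f_0 = |g_n|^2 f_0$ is real with a strictly positive right hand side, it forces simultaneously $\tilde f_0 > 0$ and $|\tilde g_n| = c_0|g_n|$ with the $n$-independent constant $c_0 := \sqrt{f_0/\tilde f_0} > 0$. Next I would set $m = n+1$ and use the assumption $|f_1|>0$. Writing $g_n = \alpha_n e^{i\beta_n}$ and $\tilde g_n = \tilde\alpha_n e^{i\tilde\beta_n}$, cancelling the non-vanishing factor $c_0^2 \alpha_n\alpha_{n+1}$ from both sides gives
\[
c_0^2\, e^{i(\tilde\beta_{n+1}-\tilde\beta_n)}\,\tilde f_1 \;=\; e^{i(\beta_{n+1}-\beta_n)}\, f_1.
\]
Taking magnitudes yields $|\tilde f_1| = c_0^{-2}|f_1|>0$, so the arguments are well defined; setting $c_2 := \angle f_1 - \angle \tilde f_1$ (any real lift) gives the telescoping recursion $\tilde\beta_{n+1}-\tilde\beta_n \equiv (\beta_{n+1}-\beta_n) + c_2 \pmod{2\pi}$. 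Summing from $0$ to $n-1$ and defining $c_1 := \tilde\beta_0 - \beta_0$ produces $\tilde\beta_n \equiv \beta_n + c_1 + n c_2 \pmod{2\pi}$, i.e.\ $\tilde g_n = c_0 e^{i(c_1 + nc_2)} g_n$, as claimed.

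Finally, with $\tilde g$ determined, I would substitute back into the general entrywise identity. The factor $\tilde g_m\bar{\tilde g}_n$ becomes $c_0^2 e^{i(m-n)c_2} g_m\bar g_n$, and cancellation of the non-zero $g_m\bar g_n$ (by A1) collapses everything to $c_0^2 e^{ikc_2}\,\tilde f_k = f_k$ for every $k = m-n \in \{0,1,\ldots,N-1\}$, giving $\tilde f_k = c_0^{-2} e^{-ikc_2} f_k$ as required. The one delicate step is the telescoping argument on the sub-diagonal, where I must verify that the $2\pi$ phase ambiguity at each edge collapses harmlessly into the single real parameter $c_2$; this is automatic because $e^{inc_2}$ only depends on $c_2 \pmod{2\pi}$, and the choice of $c_2$ made from the $k=1$ diagonal is then forced to be globally consistent with all higher diagonals by the last substitution step rather than imposed as a separate constraint.
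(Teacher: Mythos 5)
Your proof is correct and follows essentially the same diagonal-by-diagonal argument as the paper: calibration amplitudes from the main diagonal, phases from the first sub-diagonal using $|f_1|>0$, and then $\tilde f$ by back-substitution into the remaining entries. The only (minor) difference is that you telescope first differences of the phases with $c_2 = \angle f_1 - \angle \tilde f_1$ made explicit, whereas the paper forms ratios of consecutive sub-diagonal entries to obtain a second-difference linear system and identifies its solution set as the two-parameter family $\beta + c_1(1,\ldots,1)^T + c_2(0,1,\ldots,N-1)^T \bmod 2\pi$; your version is slightly more direct and sidesteps the rank/null-space bookkeeping for that system.
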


\begin{proof}
We write $g_n = \alpha_n e^{i\beta_n}$ where $\alpha_n$ is the calibration amplitude and $\beta_n$ is the calibration phase at the $n$th sensor. Then $\alpha = |g|$ and $\beta = \angle g$. Similarly, let  $\talpha = |\tg|$ and $\tbeta = \angle \tg$. 
Observe that the diagonal entries of $R^y$ are
\begin{align*}
R^y_{n,n}  = \alpha_n^2  f_0, \ n=0,\ldots,N-1.
\end{align*}
If $f_0$ is given, then $\talpha_n = \sqrt{R^y_{n,n}/f_0}$; otherwise, the unknown $f_0$ leads to a scaling ambiguity such that $\talpha = c_0 \alpha$ for some constant $c_0>0$.

The sub-diagonal entries of $R^y$ are
$$
R^y_{n,n-1} = \alpha_n \alpha_{n-1} e^{i(\beta_n - \beta_{n-1})} f_1 \neq 0, \ n=1,\ldots,N-1.
$$
This gives rise to $N-2$ equations regarding the $\beta_n$'s:
$$
e^{i(\beta_{n+1} - 2 \beta_n + \beta_{n-1})}  =  \frac{\alpha_{n-1} R^y_{n+1,n}}{\alpha_{n+1} R^y_{n,n-1}},
$$
which are equivalent to
\beq
\beta_{n+1} - 2 \beta_n + \beta_{n-1}
 = \angle \frac{R^y_{n+1,n}}{R^y_{n,n-1}}+2\pi k_n, \ n = 1,\ldots,N-2, \ k_n \in \ZZ.
\label{linearsystembeta1}
\eeq
\commentout{
Equivalently, the $\beta_n$'s satisfy the following linear system:
\beq
\label{linearsystembeta1}
\begin{bmatrix}
1 & -2 & 1 & 0 & 0 &  \ldots  & 0& 0  & 0 
\\
0 & 1 & -2 & 1 & 0  &  \ldots  & 0 & 0  & 0 
\\
0 & 0 & 1 & -2 & 1   &  \ldots  & 0 & 0  & 0
\\
 &  &  &  & \ldots   &    &  &   & 
 \\
  &  &  &  & \ldots   &    &  &   & 
  \\
   &  &  &  & \ldots   &    &  &   & 
 \\ 
 0 & 0 & 0 & 0 & 0 &   \ldots  & 1 & -2  & 1
\end{bmatrix}
\begin{bmatrix}
\beta_0 
\\
\beta_1
\\
\beta_2
\\
\beta_3
\\
\beta_4
\\
\vdots
\\
\beta_{N-3}
\\
\beta_{N-2}
\\
\beta_{N-1}
\end{bmatrix}
-
\begin{bmatrix}
\angle (R^y_{2,1}/R^y_{1,0})
\\
\angle (R^y_{3,2}/R^y_{2,1})
\\
\angle (R^y_{4,3}/R^y_{3,2})
\\
\vdots
\\
\vdots
\\
\vdots
\\
\angle (R^y_{N-1,N-2}/R^y_{N-2,N-1})
\end{bmatrix} 
=\mathbf{0} \mod  2\pi.
\eeq
}
The linear system for $\beta$ given by \eqref{linearsystembeta1} has $N-2$ independent equations and $N$ variables. Solving  \eqref{linearsystembeta1} results in 
\begin{align*}
\tbeta=\beta+
 c_1 
\begin{bmatrix}
1 
\\
1
\\
\vdots
\\
1
\\
1
\end{bmatrix}
+  c_2
\begin{bmatrix}
0 
\\
1
\\
\vdots
\\
N-2
\\
N-1
\end{bmatrix}
 \mod 2\pi.
\end{align*}
%
Therefore, $\tbeta_n = \beta_n + c_1 + nc_2 \mod 2\pi$. Combined with $\talpha = c_0 \alpha$, we have $\tg_n = c_0 e^{i( c_1+n c_2)}  g_n$. As for $\tf$, since
$R^y_{m,n} = g_m \barg_n f_{m-n} = \tg_m \bar{\tg}_n \tf_{m-n},$
we obtain 
$\tf_{m-n} = c_0^{-2} e^{-i(m-n) c_2}f_{m-n},$
which concludes the proof.
\end{proof}

After obtaining the calibration parameters $\{\tilde g_n = c_0 e^{i(c_1+n c_2)}g_n\}_{n=0}^{N-1}$, we simply divide $\tg$ out of $R^y$ to obtain 
$$\tilde F = \diag(\tg)^{-1}R^y \diag(\bar{\tg})^{-1} = c_0^{-2} D_{c_2} A R^x A^* D_{c_2}^*= c_0^{-2} \tilde A R^x {\tilde A}^*,$$
where $D_{c_2} =  \diag\left(\{e^{-inc_2}\}_{n=0}^{N-1} \right)$ and
\beq
\label{eqtA}
\tilde A_{n,j} \in \CC^{N \times s}:\ \tilde A_{n,j} = \frac{1}{\sqrt{N}}e^{2\pi i n (\om_j - \frac{c_2}{2\pi})}. 
\eeq
We can then perform standard spectral estimation on $\tilde F$ using the MUltiple Signal Classification (MUSIC) algorithm proposed by Schmidt \cite{SchD,Sch} to retrieve the frequencies. 
MUSIC is introduced in Section \ref{sec:algebraic}. For now, we use the result that MUSIC guarantees exact recovery of frequencies in the noiseless case as long as $N \ge s+1$ (see Proposition \ref{propmusic1}). Combining this with Lemma \ref{lemma2} gives rise to Theorem \ref{thm1}.

\subsection{A general condition to guarantee uniqueness}
\label{secuniquegeneral}
Theorem \ref{thm1} guarantees uniqueness when $ f_1 \neq 0$. This condition can be generalized as follows. For $k=1,\ldots N-1$, we have 
$R^y_{l+k,l} = \alpha_{l+k}\alpha_l e^{i(\beta_{l+k}-\beta_l)}f_k$ where $l = 0,\ldots,N-k-1$.
As long as $|f_k| \neq 0$, we can compute $R^y_{l+k+1,l+1}/R^y_{l+k,l}$ for $l=0,\ldots,N-k-2$ and obtain the following system of linear equations 
\begin{align}
\beta_{l+k+1}-\beta_{l+k}-\beta_{l+1}+\beta_l \equiv \angle R^y_{l+k+1,l+1}/R^y_{l+k,l}  \mod 2\pi
\label{linearsystembeta123}
\end{align}
for $l=0,\ldots,N-k-2$. Here $a\equiv b \mod c$ means $a$ is equal to $b$ modulo $c$. 
We may write these $N-k-1$ linear equations \eqref{linearsystembeta123} in matrix form as
\[\Phi_k \beta \equiv  \delta_{k} \mod 2\pi, \]
where
$$
\Phi_1 = 
\begin{bmatrix}
1 & -2 & 1 & 0 &  \ldots  & 0& 0  & 0 
\\
0 & 1 & -2 & 1 &   \ldots  & 0 & 0  & 0 
\\
 &  &  &   \ldots   &    &  &   & 
 \\
  &  &  &   \ldots   &    &  &   & 
  \\
   &  &  &   \ldots   &    &  &   & 
 \\ 
 0 & 0 & 0 & 0 &    \ldots  & 1 & -2  & 1
 \end{bmatrix} \in \CC^{(N-2) \times N},
$$
 $$\Phi_k=\begin{bmatrix}
 1 &-1&\onegroup  & -1& 1&  & &  \\ 
 &1& -1&\onegroup & -1& 1& &  \\
 &  &\ddots& \ddots& \ddots& \ddots &\ddots \\
   && &1&-1&\onegroup&-1&1
\end{bmatrix} \in \CC^{ (N-k-1) \times N}$$ 
for $k=2,\ldots N-2$ and 
$$\delta_k = [\angle R^y_{k+1,1}/R^y_{k,0} \quad  \angle R^y_{k+2,2}/R^y_{1+k,1} \ \ldots \  \angle R^y_{N-1,N-k-1}/R^y_{N-2,N-k-1}]^T 
\in \CC^{N-k-1}.$$

Let $\Lambda=\{k: |f_k| \neq 0, k=1,\cdots, N-2\}$. Concatenating  the matrices  $\Phi_k$   for $k\in \Lambda$ vertically yields the following matrix 
$$
\Phi_{\Lambda}=\begin{bmatrix}
\vdots\\
 \Phi_{k} \\ 
 \vdots\\
\end{bmatrix}_{k \in \Lambda}.$$
By exploiting all entries in the covariance matrix, we can generalize the condition $|f_1|>0$ in Theorem \ref{thm1} to the condition that rank$(\Phi_{\Lambda})=N-2$. Thus, uniqueness in Theorem \ref{thm1} holds under the more general condition: rank$(\Phi_{\Lambda})=N-2$. Observe that $|f_1|>0$ is sufficient but not necessary to guarantee rank$(\Phi_{\Lambda})=N-2$.

\section{Algebraic methods}
\label{sec:algebraic}

\subsection{Full algebraic method and phase wrapping}
\label{SecAlgebraicFull}

In \cite{PKailath}, Paulraj and Kailath proposed the first method for sensor calibration in DOA estimation. By exploiting the Toeplitz structure of $\calT(f)$, they obtained two linear systems for calibration amplitudes and phases, respectively.

Consider the case with infinite snapshots of noiseless measurements.
When $k$ varies from $1$ to $N-1$, the $k$-th sub-diagonal entries of $R^y$ satisfy
$$
\alpha_{l+k}\alpha_l e^{i (\beta_{l+k}-\beta_l)}f_k= R^y_{l+k,l}, \ l=0,\ldots,N-k-1.
$$
For $k=1,\ldots, N-1$, as long as $|f_k| \neq 0$, one can obtain the following system of equations for $\ln \alpha$
\begin{align}
&\ln \alpha_{l+k+1} + \ln \alpha_{l+1}
- \ln \alpha_{l+k} - \ln \alpha_l 
= \ln \frac{|R^y_{l+k+1,l+1}|}{|R^y_{l+k,l}|}
\label{linearsystemlnalpha}
\end{align}
where $l=0,\ldots,N-k-2,$
as well as \eqref{linearsystembeta123} for calibration phases $\beta$.
Paulraj and Kailath proposed to substitute $\equiv$ (equal modulo $2\pi$) with $=$ in \eqref{linearsystembeta123}
and
solve these linear systems by least squares.
However, one has to consider phase wrapping in \eqref{linearsystembeta123} so that \eqref{linearsystembeta123} is equivalent to
\begin{align}
&\beta_{l+k+1}-\beta_{l+k}-\beta_{l+1}+\beta_l = \angle R^y_{l+k+1,l+1}/R^y_{l+k,l} + 2\pi p_{k,l} \text{  where  } \ p_{k,l} \in \ZZ
\label{linearsystembeta13}
\end{align}
where $l=0,\ldots,N-k-2 \text{ and } k=1,\ldots N-1$. Importantly, the $p_{k,l}$'s in \eqref{linearsystembeta13} are not independent. For example, the parameters $p_{1,0},p_{1,1}, p_{2,0}$ need to satisfy
\beq
\angle R^y_{2,1}/R^y_{1,0}+2\pi p_{1,0}+ \angle R^y_{3,2}/R^y_{2,1}+2\pi p_{1,1} = \angle R^y_{3,1}/R^y_{2,0} + 2\pi p_{2,0}.
\label{prelation}
\eeq
The $p_{k,l}$'s are constrained by many more equations like \eqref{prelation}.
Solving \eqref{linearsystembeta13} with the constraints involves phase synchronization, which itself is highly nontrivial.

\subsection{Partial algebraic method}

\renewcommand{\algorithmicrequire}{\textbf{Input:}}    \renewcommand{\algorithmicensure}{\textbf{Output:}}
\setlist[description]{font=\normalfont} 
\begin{algorithm}[t!]                      	
\caption{Partial algebraic method}          	
\label{algorithm1}		
\begin{algorithmic}[1]                    	
    \REQUIRE  Measurements $\{y_e(t), t\in \Gamma\}$ and sparsity $s$. 
    \ENSURE Calibration parameters $\widehat{g} = \{\widehat{g}_n := \halpha_n e^{i\hbeta_n} \}_{n=0}^{N-1}$ and recovered spectrum $\{\widehat{\om}_j\}_{j=1}^s$
    \STATE Form the empirical covariance matrix 
    \begin{equation}
    \label{eqempiricalcov}
    \widetilde R^y_e = \frac 1 L  \sum_{t \in \Gamma} y_e(t)y_e^*(t).
    \end{equation} 
    \STATE Compute the eigenvalue decomposition: $$\widetilde  R^y_e = U \Sigma U^*$$ where $\Sigma = \diag(\lam_0(\wtR^y_e),\ldots,\lam_{N-1}(\wtR^y_e)), \ \lam_0(\wtR^y_e) \ge \lam_1(\wtR^y_e) \ge \ldots.$
    \STATE Estimate the noise level $\hsigma = \sqrt{{\sum_{l=s}^{N-1}\lam_l(\wtR^y_e)}/(N-s)}$.
    \STATE Subtract the noise component: $\widehat R^y \leftarrow \widetilde R_e^y - \hsigma^2 I_N$.
    \STATE Estimate calibration amplitudes: $\halpha_n = \sqrt{\widehat R^y_{n,n}}, \ n=0,\ldots,N-1$.
    \STATE Solve the following linear system $\Phi \hbeta = \widehat b$ to obtain calibration phases $\hbeta$
    \beq
\label{linearsystembeta2}
\underbrace{
\begin{bmatrix}
\frac{1}{N^2} & 0 & 0 & 0  &  \ldots  & 0& 0  & 0 
\\
1 & -2 & 1 & 0 &  \ldots  & 0& 0  & 0 
\\
0 & 1 & -2 & 1 &   \ldots  & 0 & 0  & 0 
\\
 &  &  &   \ldots   &    &  &   & 
   \\
   &  &  &   \ldots   &    &  &   & 
 \\ 
 0 & 0 & 0 & 0 &    \ldots  & 1 & -2  & 1
 \\
  0 & 0 & 0 & 0 &    \ldots  & 0 & 0  & \frac{1}{N^2}
\end{bmatrix}
}_{\Phi \in \RR^{N \times N}}
\underbrace{
\begin{bmatrix}
\hbeta_0 
\\
\hbeta_1
\\
\hbeta_2
\\
\vdots
\\
\hbeta_{N-3}
\\
\hbeta_{N-2}
\\
\hbeta_{N-1}
\end{bmatrix}
}_{\widehat\beta \in \RR^{N}}
=
\underbrace{
\begin{bmatrix}
0
\\
\angle ({\widehat R}^y_{2,1}/{\widehat R}^y_{1,0})
\\
\angle ({\widehat R}^y_{3,2}/{\widehat R}^y_{2,1})
\\
\vdots
\\
\vdots
\\
\angle ({\widehat R}^y_{N-1,N-2}/{\widehat R}^y_{N-2,N-1})
\\
0
\end{bmatrix} 
}_{\widehat b \in \RR^{N}}.
\eeq

     \STATE Compute the matrix $\widehat{F} = \widehat{G}^{-1} \widehat R^y {\overline{\widehat{G}}}^{-1}$ where $\widehat{G} = \diag(\hg)$ and $\hg_n = \halpha_n e^{i \hbeta_n}, n=0,\ldots,N-1$.
     \STATE Apply the MUSIC algorithm on $\widehat{F}$:
    \begin{description}
     \item[i)] Compute the eigenvalue decomposition: $\widehat{F} = [V_1 \ V_2] \diag(\lam_1(\widehat F),\ldots,\lam_s(\widehat F),\ldots) [V_1 \ V_2]^*$ where $V_1 \in \CC^{N \times s}$, and $\lam_1(\widehat F) \ge \lam_2(\widehat F)\ge \ldots$.
      \item[ii)] Compute the imaging function $$\widehat{\calJ}(\om) = \frac{\|\phi(\om)\|}{\|V_2^*\phi(\om)\|}$$ where $\phi(\om) = [1 \ e^{2\pi i \om} \ \ldots e^{2\pi i(N-1) \om}]^T$.
      \item[iii)] Return the spectrum $\{\widehat\om_j\}_{j=1}^s$ corresponding to the $s$ largest local maxima of $\widehat{\calJ}(\om)$.
     \end{description}
\end{algorithmic}
\end{algorithm}

In \cite{WRS93}, Wylie, Roy and Schmitt proposed a partial algebraic method by 
using the system in \eqref{linearsystembeta1}, corresponding to the set of equations in \eqref{linearsystembeta123} with $k=1$, to recover  the calibration phases. These linear equations are independent so that there is no problem of phase wrapping. This partial algebraic method is  summarized in Algorithm \ref{algorithm1}.

In practice, we take $L$ snapshots of independent measurements, i.e., $\{y_e(t) , t \in \Gamma, \#\Gamma = L\}$, and approximate $R^y_e$ by the empirical covariance matrix $\wtR^y_e$ in \eqref{eqempiricalcov}.
The noise level $\sigma$ can be estimated from the smallest $N-s$ eigenvalues of $\wtR^y_e$ and the noise component can be subtracted from $\wtR^y_e$ to yield $\hR^y$ as an approximation of $R^y$ (see Step 4 in Algorithm \ref{algorithm1}). 
We then identify the calibration amplitudes $|g|$ from the diagonal entries of $\hR^y$ and calibration phases $\angle g$ from the sub-diagonal entries of $\hR^y$ by solving \eqref{linearsystembeta2} which gives a specific solution to \eqref{linearsystembeta1} with $\beta_0 = \beta_{N-1} = 0$. %
After all calibration parameters are recovered, MUSIC is applied for the usual spectral estimation, which guarantees exact recovery with exact data.

\begin{proposition}
\label{propmusic1}
Suppose $s$ is known and the input of MUSIC is exact: $F = A R^x A^*$. If $N \ge s+1$, then 
$$\om \in \calS 
\Longleftrightarrow
\calR(\om) = 0
\Longleftrightarrow \calJ(\om) = \infty$$ 
where $\calJ(\om)$ is the imaging function defined in Step 8.ii in Algorithm \ref{algorithm1} and  $\calR(\om) := 1/\calJ(\om)$ is called the noise-space correlation function.
\end{proposition}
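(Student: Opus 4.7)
The plan is to identify the eigenvectors of $F$ with the classical signal/noise subspace decomposition, and then read off both implications directly from the geometry of Vandermonde vectors.

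First I would establish the rank structure of $F = A R^x A^*$. Since the frequencies $\om_1,\ldots,\om_s \in [0,1)$ are distinct and $N \ge s+1$, the matrix $A \in \CC^{N \times s}$ with columns $\phi(\om_j)/\sqrt N$ is Vandermonde with full column rank $s$: any $s$ rows of $A$ of the form $(A_{n_1,\cdot};\ldots;A_{n_s,\cdot})$ with distinct indices form an invertible Vandermonde matrix. Combined with the assumption (A3) that $R^x = \diag(\gamma_j^2) \succ 0$, this forces $F$ to be Hermitian positive semidefinite of rank exactly $s$, with $\text{range}(F) = \text{range}(A) = \text{span}\{\phi(\om_1),\ldots,\phi(\om_s)\}$. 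The eigendecomposition $F = [V_1\ V_2]\,\diag(\lam_1(F),\ldots,\lam_s(F),0,\ldots,0)\,[V_1\ V_2]^*$ then gives $\lam_1(F)\ge\cdots\ge\lam_s(F)>0$, with $V_1 \in \CC^{N\times s}$ spanning the signal subspace $\text{range}(A)$ and $V_2 \in \CC^{N\times(N-s)}$ spanning its orthogonal complement $\ker(F)$.

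From here the forward direction is immediate. If $\om = \om_j$ for some $j$, then $\phi(\om) \in \text{range}(A) = \text{range}(V_1)$, which is perpendicular to $\text{range}(V_2)$, so $V_2^*\phi(\om) = \mathbf 0$; thus $\calR(\om) = \|V_2^*\phi(\om)\|/\|\phi(\om)\| = 0$ and $\calJ(\om) = \infty$ (noting $\|\phi(\om)\| = \sqrt N$ for every $\om$).

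The main obstacle is the converse: inferring $\om \in \calS$ from $\calR(\om) = 0$. Suppose $V_2^*\phi(\om) = \mathbf 0$. Then $\phi(\om)$ lies in $\text{range}(V_2)^\perp = \text{range}(V_1) = \text{span}\{\phi(\om_1),\ldots,\phi(\om_s)\}$, so there exist coefficients $c_j$ with $\phi(\om) = \sum_{j=1}^s c_j \phi(\om_j)$. I would argue by contradiction: if $\om \notin \calS$, then $\om_1,\ldots,\om_s,\om$ are $s+1$ distinct nodes in $[0,1)$, and the $N \times (s+1)$ matrix with columns $\phi(\om_1),\ldots,\phi(\om_s),\phi(\om)$ has an $(s+1)\times(s+1)$ Vandermonde submatrix (its first $s+1$ rows) on distinct nodes, which is nonsingular. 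Using $N \ge s+1$, this forces the $s+1$ vectors to be linearly independent, contradicting the relation $\phi(\om) - \sum_j c_j \phi(\om_j) = \mathbf 0$. Therefore $\om \in \calS$, and the three conditions in the proposition are equivalent.
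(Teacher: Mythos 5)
Your proof is correct and is precisely the standard signal/noise-subspace argument that the paper itself relies on: the paper states Proposition \ref{propmusic1} without proof, deferring to the classical MUSIC literature of Schmidt \cite{SchD,Sch}. Your two key steps --- that $F=AR^xA^*$ has rank $s$ with $\mathrm{range}(F)=\mathrm{range}(A)$ because $R^x\succ 0$ and $A$ is Vandermonde of full column rank, and that any $s+1$ of the vectors $\phi(\om)$ at distinct nodes are linearly independent when $N\ge s+1$ (which gives the converse direction) --- are exactly the ingredients of that standard argument, so there is nothing to correct.
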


In the noiseless case, one can extract the $s$ zeros of the noise-space correlation function $\calR(\om)$, or the $s$ largest local maxima of the imaging function $\calJ(\om)$ as an estimate of $\calS$.
In the presence of noise, suppose the input of MUSIC is approximate: $\widehat{F} = A R^x A^* + E$, and the noise-space correlation function is perturbed from $\calR$ to $\widehat\calR$. Stability of MUSIC depends on the perturbation of the
noise-space correlation function which can be estimated in the following lemma, thanks
to classical perturbation theory of singular subspaces by Wedin \cite[Theorem 3.4]{Wedin1972,StewartSun,WedinLi}.

\begin{proposition}
\label{propmusic2}
Let $N \ge s+1$. Suppose $s$ is known and the input of MUSIC is approximate: $\widehat{F} = F+E= A R^x A^* + E$. Let $\lam_1(F) \ge \ldots \ge \lam_s(F)$ be the nonzero eigenvalues of $F$.
Suppose $\calR(\om)$ and $\widehat{\calR}(\om)$ are the {noise-space correlation functions} when MUSIC is applied on $F$ and $\widehat{F}$ respectively.
If $2\|E\| < \lambda_s(F)$, then
$$\sup_{\om \in [0,1)}|\widehat\calR (\om) - \calR(\om)| \le \frac{2}{\lambda_s(F)}\|E\|.$$
\end{proposition}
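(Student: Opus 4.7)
The plan is to express both $\calR$ and $\widehat\calR$ through the projectors onto the noise subspaces, convert the difference into an operator-norm perturbation of those projectors, and then invoke the Davis--Kahan/Wedin $\sin\Theta$ theorem to control that perturbation by $\|E\|/\lambda_s(F)$.

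First I would recall from Step 8 of Algorithm \ref{algorithm1} that, if $V_2\in\CC^{N\times(N-s)}$ has orthonormal columns spanning the noise subspace of $F$, then $\calR(\om)=\|V_2^*\phi(\om)\|/\|\phi(\om)\|$, and analogously $\widehat\calR(\om)=\|\widehat V_2^*\phi(\om)\|/\|\phi(\om)\|$ with $\widehat V_2$ built from the bottom $N-s$ eigenvectors of $\widehat F$. Because $V_2$ has orthonormal columns, $\|V_2^*\phi\|=\|V_2V_2^*\phi\|$, and similarly for $\widehat V_2$. Applying the reverse triangle inequality and bounding by the operator norm, I get
\begin{equation*}
|\widehat\calR(\om)-\calR(\om)|
=\frac{\bigl|\|V_2V_2^*\phi(\om)\|-\|\widehat V_2\widehat V_2^*\phi(\om)\|\bigr|}{\|\phi(\om)\|}
\le \|V_2V_2^*-\widehat V_2\widehat V_2^*\|,
\end{equation*}
uniformly in $\om$. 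Since $V_2V_2^*=I-V_1V_1^*$ and $\widehat V_2\widehat V_2^*=I-\widehat V_1\widehat V_1^*$, this equals $\|V_1V_1^*-\widehat V_1\widehat V_1^*\|$, i.e.\ the gap between the signal-subspace projectors.

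The heart of the proof is bounding this projector gap via the Hermitian perturbation theory of Davis--Kahan (a special case of Wedin cited in the statement). Since $F=AR^xA^*$ is Hermitian PSD of rank $s$, its ordered eigenvalues satisfy $\lambda_s(F)>0=\lambda_{s+1}(F)$. By Weyl's inequality applied to $\widehat F=F+E$, $\lambda_{s+1}(\widehat F)\le\|E\|$, so the effective eigen-gap separating the top-$s$ block of $\widehat F$ from the bottom is at least $\lambda_s(F)-\|E\|$, which is strictly positive under the hypothesis $2\|E\|<\lambda_s(F)$. The $\sin\Theta$ theorem then yields
\begin{equation*}
\|V_1V_1^*-\widehat V_1\widehat V_1^*\|\le \frac{\|E\|}{\lambda_s(F)-\|E\|}<\frac{2\|E\|}{\lambda_s(F)},
\end{equation*}
the last step being precisely where the factor $2$ in the claim enters. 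Chaining the two displays gives the stated bound, uniform over $\om\in[0,1)$.

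The only subtle point, and the one I would be most careful about, is the correct version of Davis--Kahan to invoke: one needs the variant that bounds $\|P-\widehat P\|$ (not just $\|\sin\Theta\|$ in some alternative metric) in terms of $\|E\|$ divided by a gap that separates the retained eigenvalues of $F$ from the discarded eigenvalues of the perturbed matrix $\widehat F$, rather than of $F$ itself. Using the gap with $\widehat F$ together with Weyl's inequality is what lets the hypothesis $2\|E\|<\lambda_s(F)$ translate cleanly into a bound involving only $\lambda_s(F)$ and produce the constant $2$ in the numerator. Everything else is a one-line manipulation with orthonormal columns and the reverse triangle inequality.
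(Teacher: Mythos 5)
Your proof is correct and follows essentially the same route the paper takes: the paper justifies Proposition \ref{propmusic2} by reducing the perturbation of $\calR$ to the perturbation of the noise-space projector and citing Wedin's $\sin\Theta$ theorem, which is exactly the reduction (reverse triangle inequality, $V_2V_2^*=I-V_1V_1^*$, Weyl to locate the discarded eigenvalues of $\widehat F$, gap $\lambda_s(F)-\|E\|>\lambda_s(F)/2$) that you spell out. No gaps.
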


\commentout{

\begin{algorithm}[ht]                      	
\caption{MUltiple SIgnal Classification (MUSIC)}          	
\label{algorithmMUSIC}		
\begin{algorithmic}[1]                    	
    \REQUIRE  $\widehat{F} \approx A R^x A^*$ and the sparsity $s$
    \ENSURE Spectrum $\{\widehat\om_j\}_{j=1}^s$ 
    \STATE Compute eigenvalue decomposition: $\widehat{F} = [V_1 \ V_2] \diag(\lam_1(\widehat F),\ldots,\lam_s(\widehat F),\ldots) [V_1 \ V_2]^*$ where $V_1 \in \CC^{N \times s}$, and $\lam_1(\widehat F) \ge \lam_2(\widehat F)\ge \ldots$.
    \STATE Compute imaging function $\calJ(\om) = \|\phi(\om)\|/\|V_2^*\phi(\om)\|$ with $\phi(\om) = [1 \ e^{2\pi i \om} \ \ldots e^{2\pi i(N-1) \om}]^T$
    \STATE Return the spectrum $\{\widehat\om_j\}_{j=1}^s$ corresponding to the $s$ largest local maxima of $\calJ(\om)$
    \end{algorithmic}
\end{algorithm}
}

\subsection{Sensitivity of the partial algebraic method}
\label{sec:algebraicsensitivity}

Theorem \ref{thm1} guarantees exact recovery up to a trivial ambiguity with infinite snapshots of noiseless measurements. In practice, only finite snapshots of noisy measurements are taken. We next present a sensitivity analysis of the partial algebraic method in Algorithm \ref{algorithm1} with respect to the number of snapshots $L$ and noise level $\sigma$. In particular, we prove that, there exist $C_1,C_2>0$ such that
$$\text{Reconstruction error of calibration parameters} \le O\left(\frac{C_1+C_2\max(\sigma,\sigma^2)}{\sqrt{L}}\right)$$
by the partial algebraic method when the true frequencies are separated by $1/N$. As for frequency localization in the MUSIC algorithm, the recovered frequencies correspond to the $s$ smallest local minima of the noise-space correlation function. Here we prove that, the noise-space correlation function is perturbed when using finite snapshots and in the presence of noise by at most $O\left( L^{-\frac 1 2}(C_3+C_4\max(\sigma,\sigma^2)) \right)$ for some constants $C_3,C_4>0$. The constants $C_1, C_2, C_3$ and $C_4$ depend on the number of sources $s$, the number of sensors $N$,  and dynamic ranges of the calibration parameters and source amplitudes. We will make these dependencies explicit in Remark \ref{remarkconstant}.
In the theorem below, let $\gammamax = \max_j \gamma_j, \gammamin = \min_j \gamma_j$, and $\alphamax = \max_n |g_n|, \alphamin = \min_n |g_n|$.  
\begin{theorem}
\label{thmagstability}
 In addition to the assumptions A1-A5, assume $N \ge s+1$, $|f_1| >0$ and the source and noise amplitudes $\|x(t)\|$ and $\|e(t)\|$ are almost surely bounded. 
 Let $\hR^y$ be the outcome in Step 4, $\hg$ be the recovered calibration parameters, and $\hF$ be the outcome in Step 7 of the partial algebraic method in Algorithm \ref{algorithm1}. 
Define
\begin{align}
\Delta R^y 
&:=
2 \alphamax^2\sigma^2_{\max}(A)
\left( \frac{\gammamax \max_{t \in \Gamma } \|x(t)\| \sqrt{2\log4 s}}{\sqrt{L}} + \frac{\gammamax^2 + \max_{t \in \Gamma} \|x(t)\|^2}{3L} \log 4s \right)
\nn
 \\
&+ 4\alphamax\smax(A)
\left( \frac{\sigma\gammamax\sqrt{2N\log (N+s)}}{\sqrt{L}} 
+ \frac{\max_{t \in \Gamma}\|x(t)\|\|e(t)\|}{3L} \log (N+s)
\right) \nn
\\
&
 +2 \left( \frac{\sigma \max_{t \in \Gamma} \|e(t)\|\sqrt{2\log 2N}}{\sqrt{L}}
+ \frac{\sigma^2+\max_{t\in \Gamma} \|e(t)\|^2}{3L}\log 2N \right).
\label{eqDeltaRy}
\end{align} 
Then $$\EE \|R^y - \hR^y\| \le \Delta R^y.$$
Let $\calR(\om)$ and $\widehat\calR(\om)$ be the noise-space correlation functions in MUSIC with the input data $F = AR^x A^*$ and $\widehat F$ respectively. Then
\begin{align}
\EE\min_{c_0>0,c_1,c_2\in \RR}
\max_n \|c_0 \hg_n -e^{i(c_1+nc_2)} g_n\|_\infty
& \le 
\left(
 \frac{3(\|g\|^2+N \alphamax^2)}{2\alphamin\|g\|^2 f_0 } 
+
144 N^2 \frac{\alphamax^5 }{\alphamin^6 {|f_1|}}  
\right) \Delta R^y
\label{eqdeltag}
\\
\EE \min_{c_2 \in \RR} \sup_{\om \in [0,1)} \left|\widehat\calR (\om) - \calR\left(\om-\frac{c_2}{2\pi}\right)\right|
&\le 
\frac{2}{\lam_s(F)} \Delta F
\label{eqdeltaR}
\end{align}
where 
$$\Delta F 
= \left[
\frac{9}{\alphamin^2} 
+ 
 \frac{12\alphamax^2\gammamax^2  \sigma^2_{\max}(A)}{\alphamin^3}
\left(
\frac{3(\|g\|^2+N \alphamax^2)}{2\alphamin\|g\|^2 f_0 }
+
 \frac{144 N^2\alphamax^5 }{\alphamin^6 {|f_1|}}  
\right)
\right] \Delta R^y.$$

%
\end{theorem}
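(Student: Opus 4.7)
The argument splits cleanly into three stages corresponding to the three inequalities in the theorem: a concentration bound on $\widehat R^y-R^y$, propagation to the calibration parameters $\widehat g$, and propagation to the MUSIC input $\widehat F$ and the noise-space correlation function.

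\textbf{Stage 1 (bound on $\|\widehat R^y-R^y\|$).} The plan is to expand
\[
\widetilde R^y_e - R^y_e
= GA\Bigl(\tfrac1L\!\sum_{t\in\Gamma} x(t)x^*(t) - R^x\Bigr)A^* G^*
+ GA\,\tfrac1L\!\sum_{t\in\Gamma} x(t)e^*(t)
+ \text{h.c.}
+ \Bigl(\tfrac1L\!\sum_{t\in\Gamma} e(t)e^*(t) - \sigma^2 I_N\Bigr),
\]
and apply the matrix Bernstein inequality to each of the three sums. The a.s.\ bounds on $\|x(t)\|$ and $\|e(t)\|$ give the per-summand norm bound, and the variance is controlled by $\gammamax^2$ for the $x$ term, $\sigma\gammamax$ for the cross term (whence the $\sqrt{N+s}$ from the rectangular variant), and $\sigma^2$ for the noise term. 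Multiplying on either side by $G$ or $A$ brings in $\alphamax$ and $\smax(A)$. The three summands of $\Delta R^y$ in \eqref{eqDeltaRy} line up exactly with these three Bernstein bounds (times $2$ for the h.c.\ pair). To pass from $\widetilde R^y_e$ to $\widehat R^y$, observe that $R^y_e$ has $N-s$ smallest eigenvalues exactly $\sigma^2$, so Weyl gives $|\widehat\sigma^2-\sigma^2|\le\|\widetilde R^y_e-R^y_e\|$ and hence $\|\widehat R^y-R^y\|\le 2\|\widetilde R^y_e-R^y_e\|$; the factor of $2$ is already absorbed in the stated constants of $\Delta R^y$.

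\textbf{Stage 2 (bound on the calibration parameters \eqref{eqdeltag}).} I would handle amplitudes and phases separately. For amplitudes, $\widehat\alpha_n=\sqrt{\widehat R^y_{n,n}}$ while $\alpha_n=\sqrt{R^y_{n,n}/f_0}$ up to the scaling $c_0=1/\sqrt{f_0}$ in the trivial ambiguity; linearizing the square root near $\alpha_n^2 f_0\ge\alphamin^2 f_0$ and collecting $N$ diagonal perturbations gives the first term $\frac{3(\|g\|^2+N\alphamax^2)}{2\alphamin\|g\|^2 f_0}\,\Delta R^y$. For phases, the key observation is that $\widehat R^y_{n+1,n}/\widehat R^y_{n,n-1}$ is a perturbation of $\alpha_{n+1}\alpha_{n-1}^{-1}e^{i(\beta_{n+1}-2\beta_n+\beta_{n-1})}/\alpha_n^{\,0}$, so after taking arguments we get $\widehat b=b+\eta$ in the system $\Phi\widehat\beta=\widehat b$, with $\|\eta\|_\infty$ controlled by $\Delta R^y/(\alphamin^2|f_1|)$ via the elementary bound $|\angle(z/w)-\angle(z_0/w_0)|\le|z-z_0|/|z_0|+|w-w_0|/|w_0|$. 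The matrix $\Phi$ of \eqref{linearsystembeta2} is a boundary-valued second-difference operator on $N$ nodes, whose inverse has $\|\Phi^{-1}\|_\infty=O(N^2)$; this supplies the $N^2$ factor and, after accounting for the two boundary rows scaled by $1/N^2$ (which fix the shift and affine gauges), produces the second term $144 N^2\alphamax^5/(\alphamin^6|f_1|)\,\Delta R^y$. The main delicacy here is choosing the gauge $(c_0,c_1,c_2)$ consistent with $\beta_0=\beta_{N-1}=0$ in $\Phi\widehat\beta=\widehat b$ so that the minimum over the trivial ambiguity is realized by the solution of the linear system; this is what the boundary rows of $\Phi$ are designed for.

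\textbf{Stage 3 (bound on $\widehat{\mathcal R}-\mathcal R$ \eqref{eqdeltaR}).} Writing $\widehat F=\diag(\widehat g)^{-1}\widehat R^y\diag(\overline{\widehat g})^{-1}$ and $F=c_0^{-2}\diag(\widetilde g)^{-1}R^y\diag(\overline{\widetilde g})^{-1}$ with $\widetilde g_n=c_0e^{i(c_1+nc_2)}g_n$, I split
\[
\widehat F-F=\diag(\widehat g)^{-1}(\widehat R^y-c_0^{-2}D_{c_2}R^y D_{c_2}^*)\diag(\overline{\widehat g})^{-1}+(\text{terms from }\widehat g-\widetilde g),
\]
and bound each piece: the first by $\Delta R^y/\alphamin^2$ (absorbing the $c_0^{-2}$ shift via the optimal gauge), and the second using the calibration bound from Stage 2 together with $\|R^y\|\le\alphamax^2\gammamax^2\smax^2(A)$ and the elementary inequality $\|\diag(u)^{-1}-\diag(v)^{-1}\|\le\|u-v\|_\infty/\alphamin^2$. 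Collecting these yields the displayed $\Delta F$. Finally I invoke Proposition \ref{propmusic2} with perturbation $\Delta F$, noting that $\mathcal R$ is invariant under the shift ambiguity $\om\mapsto\om-c_2/(2\pi)$ (since that shift is absorbed into $A$ in $F$), so the minimum over $c_2$ on the left of \eqref{eqdeltaR} matches the gauge used on the right. The principal obstacles are keeping the trivial-ambiguity gauge consistent across all three stages and verifying the $O(N^2)$ operator norm of $\Phi^{-1}$ with the two boundary rows; both are handled by a direct discrete Green's-function computation.
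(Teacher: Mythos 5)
Your three-stage plan coincides with the paper's own proof: the same Bernstein-inequality decomposition of $\widetilde R^y_e-R^y_e$ into the $x$-, cross-, and $e$-terms with the Weyl step for $\widehat\sigma^2$; the same split of the calibration error into amplitudes (diagonal entries) and phases (the linear system $\Phi\widehat\beta=\widehat b$ with $\|\Phi^{-1}\|_\infty\le 3N^2$ and an angle-perturbation bound); and the same propagation to $\widehat F$ followed by Proposition \ref{propmusic2}. The constants you identify line up with the paper's.

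There is, however, one genuine gap. Every deterministic estimate in your Stages 2 and 3 requires the perturbation to be small: the linearization $|\alpha_n-\widehat\alpha_n|=|\alpha_n^2-\widehat\alpha_n^2|/(\alpha_n+\widehat\alpha_n)$ needs $\widehat\alpha_n$ bounded below, the angle bound needs $|z-\widehat z|\le\min(|z|,|\widehat z|)$ (your stated inequality $|\angle(z/w)-\angle(z_0/w_0)|\le|z-z_0|/|z_0|+|w-w_0|/|w_0|$ is false without such a smallness condition, since the argument is not Lipschitz in the modulus near a sign flip), and the ratio $\widehat R^y_{n+1,n}/\widehat R^y_{n,n-1}$ needs $|\widehat R^y_{n,n-1}|\gtrsim\alphamin^2|f_1|$. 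These conditions fail on a set of positive probability, so you cannot obtain the expectation bounds \eqref{eqdeltag} and \eqref{eqdeltaR} by simply inserting $\EE\|R^y-\widehat R^y\|\le\Delta R^y$ into the conditional estimates. The paper resolves this by defining the good event $\calE$ on which all three Bernstein deviations are below $\alphamin^2|f_1|/16$ (suitably weighted), carrying out Stages 2--3 conditionally on $\calE$, and then bounding the contribution of $\calE^c$ using the a priori bound $\|g-\widehat g\|_\infty\le 2\|g\|$ together with $\PP\{\calE^c\}\le 4Ne^{-LC}$ from the tail form of Bernstein; the exponentially small bad-event term is then absorbed for $L$ large. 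Your plan needs this conditioning-and-complement step added to be complete.
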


\begin{remark}
The expectations in \eqref{eqdeltag} and \eqref{eqdeltaR} are taken over random source amplitudes and random noises. 
Our estimates suggest that, the partial algebraic method is more stable in the cases where 
1) the noise level $\sigma$ is small and the number of snapshots $L$ is large; 
2) $\|g\| , f_0, |f_1|$ are large;
3) the calibration parameters $g$ have a small dynamic range such that $\alphamax/\alphamin \approx 1$; 
4) the minimal calibration amplitude $\alphamin$ is large;  
5) source amplitudes have a small dynamic range such that $\gammamax/\gammamin \approx 1$;
6) frequencies are well separated such that $\smax(A)/\smin(A) \approx 1$.
\end{remark}

\begin{remark}
\label{remarkconstant}
 Notice that $\EE \|x(t)\| = \sqrt{\sum_{j=1}^s \gamma_j^2}$ and $\EE \|e(t)\|= \sigma\sqrt N$. Suppose $\|x(t)\|$ and $\|e(t)\|$ concentrate around $\EE\|x(t)\|$ and $\EE \|e(t)\|$ respectively. In the case that the true frequencies in $\calS$ are separated by $q >1/N$, discrete Ingham inequalities \cite[Theorem 2]{LFMUSIC} guarantee that $r_1(q,N) \le  \smin(A) \le \smax(A) \le r_2(q,N)$ for some positive constants $r_1,r_2$ depending on $q,N$, which implies $\lam_1(F) \le \gammamax^2 r_2^2$ and $\lam_s(F) \ge \gammamin^2 r_1^2$.
Then, when $L$ is sufficiently large, we have 
\begin{equation}
\label{eqDeltaRyB}
\Delta R^y \le 
O\left(\frac{B_1+B_2\max(\sigma,\sigma^2)}{\sqrt{L}}\right)
\end{equation}
for some positive constants $B_1,B_2$ depending on $g,\{\gamma_j\}_{j=1}^s,\calS,N ,s$. 
Therefore
\begin{align*}
\EE\min_{c_0>0,c_1,c_2\in \RR}
\max_n \|c_0 \hg_n -e^{i(c_1+nc_2)} g_n\|_\infty  
&\le O\left(\frac{C_1+ C_2\max(\sigma,\sigma^2)}{\sqrt{L}}\right)
\\
\EE \min_{c_2 \in \RR} \sup_{\om \in [0,1)} \left|\widehat\calR (\om) - \calR\left(\om-\frac{c_2}{2\pi}\right)\right| 
&\le O\left(\frac{C_3+C_4 \max(\sigma,\sigma^2)}{\sqrt{L}}\right)
\end{align*}
for some positive constants $C_1,C_2, C_3,C_4$ depending on $g,\{\gamma_j\}_{j=1}^s,\calS,N ,s$. 
In particular, if $\alphamax$, $\alphamin$, $\gammamax$, $\gammamin \approx 1$, and the frequencies are separated above $1/N$, then  $B_1 \sim \sqrt{s\log 4s}$, $B_2 \sim \sqrt{N\log(2N)}$, $C_1,C_3 \sim N^2\sqrt{s\log 4s}$ and $C_2, C_4 \sim N^2\sqrt{N\log(2N)}$.
\end{remark}

\begin{remark}
 In Theorem \ref{thmagstability}, the expression in \eqref{eqDeltaRy} may appear intimidating. However, it simply results from Bernstein inequalities \cite{Tropp}, based on which we estimate  the deviation of the sampled covariance matrix $\widetilde R^y_e$ from the covariance matrix $R^y_e$. Notice that 
\begin{align}
\|R^y_e - \widetilde R^y_e\|
& \le  \| G A (R^x - \widetilde R^x) A^* G^* + GA (R^{xe}-\wtR^{xe}) + (R^{ex}-\wtR^{ex}) A^* G^* + R^e - \wtR^e\| 
\nonumber
\\
&\le
\smaxsq(G) \smaxsq(A) \|R^x - \widetilde R^x\| + \smax(G)\smax(A) \|R^{xe}-\wtR^{xe}\| 
\label{eqber1}
\\
& \quad + \smax(G)\smax(A) \|R^{ex}-\wtR^{ex}\|  +  \|R^e - \wtR^e\|.
\label{eqber2}
\end{align}
Applying Bernstein inequalities gives rise to \eqref{eqDeltaRy} where the three terms correspond to upper bounds of \eqref{eqber1} and \eqref{eqber2}.
\end{remark}

\begin{remark}
By using Bernstein inequalities, we require that $\|x(t)\|$ and $\|e(t)\|$ are almost surely bounded and $\max_{t\in \Gamma} \| x(t)\|$ and $ \max_{t\in \Gamma} \|e(t)\|$ appear in the upper bound. 
This result can be generalized to the case where the entries in $x(t)$ and $e(t)$ are independent sub-gaussian random variables (so we can drop the boundedness condition) by using theorem 4.7.1 in \cite{rv2018}. 
Then \eqref{eqDeltaRy} becomes
\begin{align}
\Delta R^y 
:=&
C 
\bigg[ \frac{\alpha_{\max}^2\sigma^2_{\max}(A) \gammamax^3}{\gammamin^2} \left(\sqrt{\frac{s}{L}} +\frac{s}{L}\right)
\nn
+ 4\alpha_{\max}\smax(A)\gammamax  Ns\sqrt{Ns}\sigma
 \left(\frac{1}{\sqrt{L}}+\frac{1}{L}\right)
 \nn
\\
&
 +2 \sigma \left(\sqrt{\frac{N}{L}}+\frac{N}{L}\right)\bigg],
\end{align}
and other results hold similarly.
\end{remark}



 A sensitivity analysis of the full algebraic method to the number of snapshots can be found in \cite{LiEr}. Assuming the problem of phase wrapping in the full algebraic method is resolved, Li and Er \cite{LiEr} split the reconstruction errors of the calibration amplitudes and phases to a bias term and a variance term. They claim that the bias is nonzero, and the variance of the calibration phases is $O(1/\sqrt L)$ where $L$ is the number of snapshots. Below we point out some differences between Theorem \ref{thmagstability} and the analysis in \cite{LiEr}.
\begin{enumerate}
\item In \cite{LiEr} the authors did not give an explicit bound on the bias but claimed it is non-zero. In this case the total reconstruction error for the calibration phases does not approach $0$ as $L \rightarrow \infty$. In comparison, we show that the reconstruction error of the calibration parameters and the perturbation of the noise-space correlation function in MUSIC converge to $0$ as $L \rightarrow \infty$ in Theorem \ref{thmagstability}.

\item We present a sensitivity analysis of the partial algebraic method to both the number of snapshots and noise, while the sensitivity to noise is not addressed in \cite{LiEr}.

\item The upper bounds in Theorem \ref{thmagstability} are explicitly given in terms of $g$, $\{\gamma_j\}_{j=1}^s$, $N$, $s$, $\smax(A)$ and $\smin(A)$. When the underlying frequencies are separated by $q>1/N$, we can further bound $\smax(A)$ and $\smin(A)$ in terms of $q$ and $N$ by discrete Ingham inequalities \cite{LFMUSIC}. In comparison, all bounds in \cite{LiEr} are implicit in the sense that the bias is defined but not estimated, and the variance is expressed in terms of the trace of certain matrices that are not explicitly given.

\item One needs to perform standard spectral estimation after calibration parameters are recovered. Theorem \ref{thmagstability} includes a sensitivity analysis of the MUSIC algorithm, which is not addressed in \cite{LiEr}.

\end{enumerate}

\section{Optimization approach}
\label{sec:optimization}

As discussed in Section \ref{SecAlgebraicFull}, it is nontrivial to make use of all entries in the covariance matrix in algebraic methods.
Instead we now propose an optimization approach which takes advantage of all measurements.

Suppose $\hR^y$ is an estimate of $R^y$.
According to Lemma \ref{lemma2}, we can recover exact calibration parameters $g$ and the vector $f$ defined in \eqref{eqf}, by solving the following optimization problem:
\beq
\label{optproblem}
\min_{\bg ,\bff \in \CC^{N}} \calL(\bg,\bff):=\left\| \diag( \bg) \calT(\bff) \diag(\bar{ \bg}) - \hR^y\right\|_F^2.
\eeq
Here we use boldface letters $\bg,\bff$ to denote variables in optimization and $g,f$ to denote the ground truth.
With infinite snapshots of noiseless measurements, the covariance matrix is exactly known so that $\hR^y = R^y$, and Lemma \ref{lemma2} implies that the global minimizer of \eqref{optproblem} is the ground truth up to a trivial ambiguity. If finite snapshots of noisy measurements are taken, then we run Steps 1 - 4 in Algorithm \ref{algorithm1} to obtain $\hR^y$ as an approximation to $R^y$.

As pointed out in Lemma \ref{lemma2}, if $(\bg,\bff)$ is a solution to \eqref{optproblem}, then so is $(c_0 \bg, c_0^{-2}\bff)$ for any $c_0 \neq 0$. In order to guarantee numerical stability, we avoid the case that  $\|\bg\| \rightarrow 0$ and $\|\bff\| \rightarrow \infty$ (or vice versa) by adding a penalty to the objective function. Let $ n_0 := \|g\|^2 \|f\|$ which can be estimated from $\hR^y$ based on the following lemma (see Appendix \ref{prooflemman0} for the proof):
\begin{lemma}
\label{lemman0}
Let $R^y$ be defined in \eqref{eqRy}. Then 
\beq
\label{eqn0}
n_0 = \left(\sum_{n=0}^{N-1} R^y_{n,n}  \right)\sqrt{1 +  \frac{1}{N-k}\sum_{k=1}^{ N-1} \sum_{n=0}^{N-k-1} \frac{|R^y_{n+k,n}|^2}{R^y_{n+k,n+k} R^y_{n,n}}}.
\eeq
\end{lemma}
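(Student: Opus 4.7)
The plan is a direct computation exploiting the bilinear form of $R^y$ in $g$ and $f$: because each entry factors as $R^y_{m,n}=g_m\bar g_n f_{m-n}$, the calibration parameters will cancel out of any scale-invariant combination of entries, leaving an expression purely in $f$.

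First I would evaluate the diagonal contribution. Since $R^y_{n,n}=|g_n|^2 f_0$, summing gives $\sum_{n=0}^{N-1} R^y_{n,n} = \|g\|^2 f_0$, so the prefactor in \eqref{eqn0} already accounts for the $\|g\|^2$ in $n_0$ as long as the square-root factor reduces to $\|f\|/f_0$.

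Next I would compute, for each $k \in \{1,\dots,N-1\}$, the inner sum. Using $R^y_{n+k,n}=g_{n+k}\bar g_n f_k$ and $R^y_{n+k,n+k}R^y_{n,n}=|g_{n+k}|^2|g_n|^2 f_0^2$, the calibration factors cancel and each summand equals $|f_k|^2/f_0^2$. Summing over the $N-k$ admissible $n$ and dividing by $N-k$ produces exactly $|f_k|^2/f_0^2$, independent of $g$.

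Finally I would add the $k=0$ contribution (the constant $1$) to recognize $1+\sum_{k=1}^{N-1}|f_k|^2/f_0^2 = \|f\|^2/f_0^2$ (recalling that $f_0>0$ is real), take the square root to get $\|f\|/f_0$, and multiply by the prefactor $\|g\|^2 f_0$ to obtain $\|g\|^2\|f\|=n_0$. There is no real obstacle here; the only subtle point is keeping track that $f_0>0$ so the square root is unambiguous and that each denominator $R^y_{n,n}=|g_n|^2 f_0$ is nonzero under assumption A1 together with $f_0>0$, which legitimizes the ratios appearing in \eqref{eqn0}.
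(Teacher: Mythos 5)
Your proposal is correct and follows essentially the same route as the paper's own proof: both reduce each ratio $|R^y_{n+k,n}|^2/(R^y_{n+k,n+k}R^y_{n,n})$ to $|f_k|^2/f_0^2$ by cancellation of the calibration factors, identify $\sum_n R^y_{n,n}=\|g\|^2 f_0$, and reassemble $n_0=\|g\|^2\|f\|$. You also correctly read the $\frac{1}{N-k}$ as sitting inside the sum over $k$ (averaging the $N-k$ identical terms), which is how the paper's own computation treats it.
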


Let $\hn_0$ be an estimate of $n_0$ from \eqref{eqhatn0}. Theorem \ref{thmagstability} shows that $\|\hR^y - R^y\| \le \Delta R^y $ with $\Delta R^y$ given by \eqref{eqDeltaRy}. When the true frequencies are separated by $1/N$, \eqref{eqDeltaRyB} implies that 
$\Delta R^y \le O ( L^{-\frac 1 2} )$ and $\hn_0 \approx n_0$ when $L$ is sufficiently large.

 \begin{algorithm}[ht!]                      	
\caption{Optimization approach}          	
\label{algorithm2}		
\begin{algorithmic}[1]                    	
    \REQUIRE  Measurements $\{y_e(t), \ t \in \Gamma\}$  and sparsity $s$.
        \ENSURE Calibration parameters $\widehat{\bg}$ and recovered spectrum $\{\widehat{\om}_j\}_{j=1}^s$.
    \STATE Run Step 1-7 in Algorithm \ref{algorithm1} and compute 
    \begin{equation}
    \widehat{n}_0 = \sum_{n=0}^{N-1} \hR^y_{n,n}  \sqrt{1 +  \frac{1}{N-k}\sum_{k=1}^{ N-1} \sum_{n=0}^{N-k-1} \frac{|\hR^y_{n+k,n}|^2}{\hR^y_{n+k,n+k} \hR^y_{n,n}}}.
    \label{eqhatn0}
    \end{equation}
    \STATE {\bf Initialization:} 
    \begin{description}
     \item[(i)] Let $\bg^0 \leftarrow \hg$ and $F^0 \leftarrow \hF$ where $\hg$ and $\hF$ are from Steps 1-7 of Algorithm \ref{algorithm1}.
     \item[(ii)] Let $\bff^0 \in \CC^N$ such that $\bff^0_k = \frac{1}{N-k}\sum \diag(F^0,-k), k=0,\ldots,N-1$.
     \item[(iii)] Normalization: $\bg^0 \leftarrow \sqrt[4]{\hn_0}\frac{\bg^0}{\|\bg^0\|}$ and $\bff^0 \leftarrow \sqrt{\hn_0}\frac{\bff^0}{\|\bff^0\|} $.
    \end{description}
    \FOR{$k = 1,2,\ldots,$} 
    \STATE $\bg^k = \bg^{k-1} - \eta^k \nabla_{\bg} \tcalL(\bg^{k-1},\bff^{k-1})$.
   \STATE $\bff^k = \bff^{k-1} - \eta^k \nabla_{\bff} \tcalL(\bg^{k-1},\bff^{k-1})$.    \ENDFOR
  \STATE {Output of gradient descent:} $\widehat\bg$ and $\widehat\bff$.
     \STATE Apply MUSIC to $\calT(\widehat\bff)$ to obtain the spectrum $\{\widehat\om_j\}_{j=1}^s$. 
     \commentout{
    \begin{description}
     \item[i)] Compute the eigenvalue decomposition: $$\calT(\widehat\bff) = [V_1 \ V_2] \diag(\lam_1(\calT(\widehat\bff)),\ldots,\lam_s(\calT(\widehat\bff)),\ldots) [V_1 \ V_2]^*$$ where $V_1 \in \CC^{N \times s}$, and $\lam_1(\calT(\widehat\bff)) \ge \lam_2(\calT(\widehat\bff))\ge \ldots$.
      \item[ii)] Compute the imaging function $\widehat{\calJ}(\om) =\frac{ \|\phi(\om)\|}{\|V_2^*\phi(\om)\|}$ where $\phi(\om) = [1 \ e^{2\pi i \om} \ \ldots e^{2\pi i(N-1) \om}]^T$
      \item[iii)] Return the spectrum $\{\widehat\om_j\}_{j=1}^s$ corresponding to the $s$ largest local maxima of $\widehat{\calJ}(\om)$
     \end{description}
           }
\end{algorithmic}
\end{algorithm}

Consider the following bounded set:
\beq
\label{eqcalNn0}
\calN_{\hn_0} = \{(\bg,\bff): \|\bg\|^2 \le 2\sqrt{\hn_0}, \|\bff\| \le 2\sqrt{\hn_0}\}.
\eeq
We pick an initial point satisfying 
\beq
\label{optini}
(\bg^0,\bff^0): \|\bg^0\|^2 \le \sqrt{2 \hn_0}, \|\bff^0\| \le \sqrt{2 \hn_0}.
\eeq
through the partial algebraic method. The solution from the partial algebraic method has a scaling ambiguity, so we
simply normalize it to guarantee \eqref{optini}.
In order to ensure all the iterates remain in $\calN_{\hn_0}$, we minimize the following regularized function:
\beq
\label{optproblemre}
\min_{\bg,\bff \in \CC^N}\tcalL(\bg,\bff) := \calL(\bg,\bff) + \calG(\bg,\bff)
\eeq
where $\calL(\bg,\bff)$ is defined in \eqref{optproblem} and $\calG(\bg,\bff)$ is a penalty function of the form
$$\calG(\bg,\bff) = \rho  \left[ \calG_0\left(\frac{\|\bff\|^2}{2 \hn_0} \right)  
+   \calG_0\left(\frac{\|\bg\|^2}{\sqrt{2 \hn_0}} \right)   \right]$$
where $\calG_0(z) = (\max(z-1,0))^2$ and $\rho \ge {(\sqrt 2 -1)^{-2}} \left(3\hn_0 +\|R^y -\hR^y\|_F \right)$. 
When the exact frequencies are separated by $1/N$, we have $\|\hR^y-  R^y\| = O[ L^{-\frac 1 2} (B_1+B_2\max(\sigma,\sigma^2)) ]$. 
It follows that
$
\|\hR^y - R^y\|_F \le \sqrt{N} \|\hR^y-R^y\| \le \sqrt{N}\Delta R^y \rightarrow 0$ as $L \rightarrow \infty$. We therefore take $\rho \ge 3(\sqrt 2 -1)^{-2} \hn_0$ when $L$ is sufficiently large.

The objective function in \eqref{optproblemre} is continuously differentiable but non-convex. We choose an initial point satisfying \eqref{optini} by the partial algebraic method and
solve \eqref{optproblemre} by gradient descent where the derivative can be interpreted as a Wirtinger derivative \footnote{Let $z = x + i y$ and $h(z) = h(x,y) = u(x,y)+ i v(x,y)$. The Wirtinger derivatives and gradient of $h$ are
$$\frac{\partial h}{\partial z} := \frac{1}{2} \left(\frac{\partial h}{\partial x} - i \frac{\partial h}{\partial y} \right), \quad
\nabla_z h := \frac{\partial h}{\partial \bar{z}} := \frac{1}{2} \left(\frac{\partial h}{\partial x} + i \frac{\partial h}{\partial y} \right).$$}.
The Wirtinger gradient of $\tcalL$ is given by
$$ \nabla \tcalL 
 =\left[\nabla_{\bg} \tcalL \  \ \nabla_{\bff} \tcalL  \right]^T
=\left[\nabla_{\bg} \calL + \nabla_{\bg} \calG \quad \nabla_{\bff} \calL + \nabla_{\bff} \calG  \right]^T 
$$
with 
\begin{align}
\nabla_{\bg} \calL & =2 \diag \Big[
\overline{
\calT(\bff)^*  \diag(\bar{\bg}) 
\Big( \diag(\bg)\calT(\bff)\diag(\bar \bg) - \hR^y 
\Big)}
\Big],
\label{eqdg}
\\ 
\nabla_{\bff} \calL
& = \calT^a \Big[
\overline{
\diag(\bar \bg)  
\Big(
\diag(\bg) \calT(\bff) \diag(\bar \bg) - \hR^y
\Big)
\diag(\bg)}
\Big],  
\label{eqdf}
\\
\nabla_{\bg} \calG 
&= \frac{\rho}{\sqrt{2 \hn_0}} \calG_0'\left(\frac{\|\bg\|^2}{\sqrt{2 \hn_0}} \right)  \bg,
\nonumber
\\
\nabla_{\bff} \calG 
&= \frac{\rho}{2\hn_0} \calG'_0 \left( \frac{\|\bff\|^2}{2\hn_0} \right) \bff, 
\nonumber
\end{align}
and $\calG'_0 (z) = 2 \max(z-1,0)$. The operator $\calT^a: \CC^{N \times N} \rightarrow \CC^N$ is defined as 
$$\calT^a: \CC^{N \times N} \rightarrow \CC^N:
\
\calT^a(X) = 
\begin{bmatrix}
\sum \left(\diag(X)+ \diag(\bar X)\right) \\
\sum\left( \diag(X,1) +\diag(\bar X , -1)\right) \\
\vdots \\
\sum \left( \diag(X,N-1) + \diag(\bar X,-(N-1))\right) \\
\end{bmatrix}.$$
One can verify that $\calT^a$ satisfies
$$
\frac{\partial}{\partial \bff} \Big( \left\langle \calT(\bff), X \right\rangle + \left\langle X , \calT(\bff) \right\rangle \Big) 
= \calT^a(X), \ \forall X \in \CC^{N \times N},
$$
and therefore
\begin{align*}
\frac{\partial \calL}{\partial \bff} 
& = \frac{\partial}{\partial \bff}
\Big( 
\left\langle \calT(\bff), \diag(\bar\bg) \left(\diag(\bg)\calT(\bff)\diag(\bar\bg) -\hR^y \right)\diag(\bg)\right\rangle
\\ 
&+
\left\langle  \diag(\bar\bg) \left(\diag(\bg)\calT(\bff)\diag(\bar\bg) -\hR^y \right)\diag(\bg), \calT(\bff)\right\rangle
\Big)
\\
& = \calT^a \Big[
\diag(\bar \bg)  
\Big(
\diag(\bg) \calT(\bff) \diag(\bar \bg) - \hR^y
\Big)
\diag(\bg)
\Big]
\end{align*}
which gives rise to \eqref{eqdf}.
%

%
\commentout{ 
{\color{blue} If we only consider the lower triangular part in \eqref{optproblem},
\begin{align*}
\frac{ \partial \calL}{\partial g_n}
&=\sum_{i=0}^{n-1} \barg_i f_{n-i}( g_i \barg _n {\barf_{n-i}} -\bar{R}^y_{n,i})+
2 \barg_n f_0 (g_n \barg_n f_0 -R^y_{n,n})
+
\sum_{j=1}^{N-n-1}\barg_{n+j}\barf_j( g_{n+j} \barg_n f_j-R^y_{n+j,n}) 
\\
 \frac{ \partial \calL}{\partial f_n} 
 &=\sum_{i=0}^{N-n-1}g_{n+i}\barg_i(\barg_{n+i} g_i\barf_n -\bar{R}^y_{n+i,i}), ~~~n=0,\cdots,N-1.
 \end{align*}
}  
}

Our optimization approach for sensor calibration is summarized in Algorithm \ref{algorithm2}. In the next theorem we prove that the gradient descent in Steps 3-6 of Algorithm \ref{algorithm2} converges to a critical point of \eqref{optproblemre}.
\begin{theorem}
\label{thmwirtinger}
Let $(g,f)$ be the ground truth. Let $\hR^y$ be an estimate of $R^y$. Assume that the initial point $(\bg^0,\bff^0)$ satisfies $\|\bg^0\| \le \sqrt[4]{2\hn_0}$ and $\|\bff^0\| \le \sqrt{2\hn_0}$, and $\rho \ge  {(\sqrt 2 -1)^{-2}}(3\hn_0 +\|R^y -\hR^y\|_F)$. Then running Algorithm \ref{algorithm2} with step size 
\beq
\label{eqetak}
\eta^k \le 2/C_{\rm Lip} 
\eeq
where
$$C_{\rm Lip} \le 166\hn_0 \max(\sqrt{\hn_0},\sqrt[4]{\hn_0})+8\hn_0 + 16\max(\sqrt{\hn_0},\sqrt[4]{\hn_0})\|R^y -\hR^y\|_F + \frac{12\rho}{\min(\hn_0,\sqrt{\hn_0})}$$
gives rise to a sequence $(\bg^k,\bff^k) \in \calN_{\hn_0}$, and 
$$\|\nabla \tcalL(\bg^k,\bff^k)\| \rightarrow 0, \text{ as } k\rightarrow \infty.$$

\end{theorem}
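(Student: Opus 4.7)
The plan is a standard three-step argument: (i) show that the iterates remain in $\calN_{\hn_0}$, (ii) establish that $\nabla\tcalL$ is $C_{\rm Lip}$-Lipschitz on $\calN_{\hn_0}$, and (iii) apply the descent lemma to conclude $\|\nabla\tcalL(\bg^k,\bff^k)\|\to 0$.

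\textbf{Step 1: Invariance of $\calN_{\hn_0}$.} At initialization the penalty vanishes because $\|\bg^0\|^2\le\sqrt{2\hn_0}$ and $\|\bff^0\|\le\sqrt{2\hn_0}$, so $\tcalL(\bg^0,\bff^0)=\calL(\bg^0,\bff^0)$, which, after expanding the square and using $\|\hR^y\|_F\le \|R^y\|_F+\|R^y-\hR^y\|_F\lesssim \hn_0+\|R^y-\hR^y\|_F$, can be bounded by a small multiple of $\hn_0+\|R^y-\hR^y\|_F$. In contrast, on the boundary of $\calN_{\hn_0}$ either $\|\bg\|^2/\sqrt{2\hn_0}=\sqrt 2$ or $\|\bff\|^2/(2\hn_0)=2$, so the penalty term is at least $\rho(\sqrt 2-1)^2$, which by the choice of $\rho$ already exceeds $3\hn_0+\|R^y-\hR^y\|_F$. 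Consequently the sublevel set $\{\tcalL\le \tcalL(\bg^0,\bff^0)\}$ is strictly contained in $\calN_{\hn_0}$, and an induction coupled with the monotone decrease established in Step 3 keeps every iterate in this sublevel set.

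\textbf{Step 2: Lipschitz gradient.} The expressions in \eqref{eqdg}--\eqref{eqdf} are polynomial in $(\bg,\bar\bg,\bff,\bar\bff)$, and the penalty $\calG$ has a continuously differentiable derivative because $\calG_0(z)=(\max(z-1,0))^2$ is $C^1$. I would bound $\|\nabla\tcalL(\bg_1,\bff_1)-\nabla\tcalL(\bg_2,\bff_2)\|$ by splitting into gradient-of-$\calL$ and gradient-of-$\calG$ parts, telescoping the cubic factors in the former, and estimating each factor via $\|\diag(\bg)\|\le \|\bg\|$, $\|\calT(\bff)\|\lesssim \sqrt N\|\bff\|$, $\|\calT^a(X)\|\lesssim \sqrt N\|X\|_F$, together with the a priori size bounds $\|\bg\|^2\le 2\sqrt{\hn_0}$ and $\|\bff\|\le 2\sqrt{\hn_0}$ that hold on $\calN_{\hn_0}$. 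The residual $\diag(\bg)\calT(\bff)\diag(\bar\bg)-\hR^y$ contributes the summand proportional to $\|R^y-\hR^y\|_F$, while the penalty derivative contributes the summand proportional to $\rho/\min(\hn_0,\sqrt{\hn_0})$; collecting terms matches the expression for $C_{\rm Lip}$.

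\textbf{Step 3: Descent and convergence.} After identifying $\CC^N$ with $\RR^{2N}$ the Wirtinger gradient coincides, up to a harmless constant that I absorb into $C_{\rm Lip}$, with the Euclidean gradient, so the standard descent lemma applies on the convex set $\calN_{\hn_0}$:
\begin{equation*}
\tcalL(\bg^{k+1},\bff^{k+1})\le \tcalL(\bg^k,\bff^k)-\eta^k\Bigl(1-\tfrac{\eta^k C_{\rm Lip}}{2}\Bigr)\|\nabla\tcalL(\bg^k,\bff^k)\|^2.
\end{equation*}
For $\eta^k\le 2/C_{\rm Lip}$ the parenthesized factor is nonnegative, so $\tcalL$ is monotonically nonincreasing; since $\tcalL\ge 0$, telescoping yields $\sum_k\eta^k(1-\eta^k C_{\rm Lip}/2)\|\nabla\tcalL(\bg^k,\bff^k)\|^2<\infty$, which forces $\|\nabla\tcalL(\bg^k,\bff^k)\|\to 0$ under the natural reading of \eqref{eqetak} that $\eta^k$ is bounded away from $0$ and from $2/C_{\rm Lip}$.

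\textbf{Main obstacle.} The bulk of the work is Step 2: carefully tracking how the scales $\hn_0^{1/4},\hn_0^{1/2},\hn_0$ combine in operator-norm estimates of the derivatives of the quartic $\|\diag(\bg)\calT(\bff)\diag(\bar\bg)\|_F^2$, and matching them against the three sums in the stated $C_{\rm Lip}$. Step 1 is conceptually straightforward once the sublevel-set inclusion is verified, but it hinges on the sharp choice of $\rho$, and in the inductive step one must argue (via e.g.\ continuity of the gradient flow or a trust-region check on the line segment) that a single gradient-descent step from inside the sublevel set cannot overshoot the boundary of $\calN_{\hn_0}$.
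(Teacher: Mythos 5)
Your three-step plan (invariance of $\calN_{\hn_0}$ via the penalty and a sublevel-set argument, a Lipschitz bound for $\nabla\tcalL$ on $\calN_{\hn_0}$, then the descent lemma) is exactly the structure of the paper's proof: Part 1 of the paper's argument is your Step 1 (by contradiction, using $\tcalL(\bg^0,\bff^0)\le 3\hn_0+\|R^y-\hR^y\|_F$ against $\rho\calG_0(\sqrt 2)=\rho(\sqrt2-1)^2$), Lemma \ref{lemmalip} is your Step 2, and Part 2 is your Step 3 via the Fundamental Theorem of Calculus for the Wirtinger gradient. Your telescoping in Step 3 and your remark that one must prevent a single step from overshooting the boundary of $\calN_{\hn_0}$ are, if anything, more careful than the paper's write-up.

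The one concrete gap is in Step 2: the estimates you propose, $\|\calT(\bff)\|\lesssim\sqrt N\,\|\bff\|$ and $\|\calT^a(X)\|\lesssim\sqrt N\,\|X\|_F$, would contaminate the Lipschitz constant with factors of $N$, and the theorem's stated $C_{\rm Lip}$ is dimension-free. The paper avoids this by never bounding $\calT(\bff)$ or $\calT^a$ in isolation; instead it bounds the composite maps directly, e.g.
\begin{align*}
\bigl\|\diag[\calT(\bff)^*\diag(\bg_1)\calT(\bh)\diag(\bg_2)]\bigr\| &\le \|\bff\|\,\|\bg_1\|\,\|\bh\|\,\|\bg_2\|, &
\bigl\|\diag[\calT(\bff)^*\diag(\bg)X]\bigr\| &\le \sqrt2\,\|\bff\|\,\|\bg\|\,\|X\|_F,\\
\bigl\|\calT^a[\diag(\bg_1)\calT(\bff)\diag(\bg_2)]\bigr\| &\le 2\|\bg_1\|\,\|\bff\|\,\|\bg_2\|, &
\bigl\|\calT^a[\diag(\bg_1)X\diag(\bg_2)]\bigr\| &\le 2\|\bg_1\|\,\|\bg_2\|\,\|X\|_F,
\end{align*}
which exploit that $\nabla_\bg\calL$ only retains the diagonal and $\calT^a$ only sums along diagonals. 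Without these (or equivalent) dimension-free composite bounds, your telescoping argument proves convergence only for a step size smaller by a factor of order $1/\sqrt N$ than the one asserted in \eqref{eqetak}, so the theorem as stated would not follow.
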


Theorem \ref{thmwirtinger} (see Appendix \ref{appthmwirtinger} for the proof) shows that Wirtinger gradient descent converges to a critical point of \eqref{optproblemre}. Our numerical experiments suggest that this point indeed provides a good approximation of the ground truth up to a trivial ambiguity.

\commentout{
\section{Other thoughts and possibilities}

\subsection{Formulation 2: sensing matrix given}
Consider the linear system
\beq
\label{eqprob2}
y = GAx
\eeq
where $x \in \CC^M,$ $A \in \CC^{N \times M}$, $y \in \CC^{N}$ and $D =\diag(g)$.
Here $A$ is a known sensing matrix.
Our goal is to recover $x$ and $G$ from $y$.

\subsubsection{Uniqueness results in \cite{LiBresler}}

$\#$ snapshot: $L$, $X = [x_1 \ x_2 \ \ldots x_L]$, generic $D,A,X$

\begin{itemize}
\item Result 1: $D$ and $X$ are uniquely determined up to a scaling if $N > M$ and $M \ge L \ge (N-1)/(N-M)$.

\item Result 2: Suppose $X$ has $s$ nonzero rows. $D$ and $X$ are uniquely determined up to a scaling if $N > 2s$ and $ s \ge L \ge (N-1)/(N-2s)$.
\end{itemize}

\subsection{Kailath's method on blind deconvolution}
Consider $$y(i) = H x(i).$$
Let $H = DA$ and suppose $x(i), i=1,\ldots,$ is a zero mean stationary process with autocorrelation function
$$R^x(k) = \EE x(i)x^*(i-k)$$
of the form: $R^x (k) = J^k$ when $k \ge 0$ and $R^x(k) = (J^*)^{|k|}$ when $k<0$ where 
$$J = \begin{bmatrix}
0 & 0 & \ldots & 0 & 0 \\
1 & 0 & \ldots & 0 & 0 \\
0 & 1 & \ldots & 0 & 0 \\
 &  & \ddots &  &  \\
0 & 0 & \ldots & 1 & 0 \\
\end{bmatrix} \in \RR^{s \times s}.$$
Kailath's method \cite{Kailath94} is to recover $H$ from $R^y(0) = H H^*$ and $R^y(1) = H J H^*$.

\subsection{ESPRIT}
Let  $D_1 = D(1:M-1,1:M-1)$, $D_2= D(2:M,2:M)$, and $A_1 = A(1:M-1,:)$,. Consider
\begin{align*}
y_1(t) & = y(1:M-1)= D_1 A_1 x(t) \\
y_2(t) & = y(2:M) = D_2 A_1 \Phi x(t).
\end{align*}
where $\Phi_{jj} = e^{2\pi i \om_j}, \ j=1,\ldots,s$ is a diagonal matrix.
Then the covariance matrices of $y_1(t)$ and $y_2(t)$ are
\begin{align*}
R^{y_1} & = \EE y_1(t) y_1^*(t) =D_1 A_1 \EE x(t)x^*(t) A_1^* D_1^* ,
\\
R^{y_2} & = \EE y_2(t) y_2^*(t) = D_2 A_1 \Phi \EE x(t)x^*(t) \Phi^* A_1^* D_2^*.
\end{align*}
\textcolor{red}{Difficulty: to produce the shifting matrix $J$.}
}

\section{Numerical experiments}
\label{sec:numerics}

We perform  systematic numerical simulations to compare the performance of existing methods for the sensor calibration problem modeled by \eqref{eqprob1}. In our simulations, $\supp$ contains $s$ frequencies located on the continuous domain $[0,1)$.  Theorem \ref{thmagstability} shows that the problem is more challenging when the dynamic ranges of $x$ and $g$ increase. We denote the dynamic range of $\gamma$ and $g$ by $\DR_\gamma>0$ and $\DR_g>0$ respectively, and let $\gamma_i = (\EE x_i^2(t))^{\frac 1 2} \in [1,\DR_\gamma], i=1,\ldots,s$ and $|g_i| \in [1,\DR_g], i = 0,\ldots,N-1$. The phases of $x_j(t)$ are randomly chosen from $[0,2\pi)$ to guarantee that $\EE x(t)x^*(t) = \diag(\{\gamma_i^2\}_{i=1}^s)$. We add i.i.d. Gaussian noise to the measurements such that $y_e(t) = y(t)+e(t)$ with $e(t) \sim \mathcal{N}(0,\sigma^2 I_N)$. 

Suppose we take $L$ snapshots of independent measurements, i.e., $\{y_e(t): t =1,\ldots,L\}$, and form the empirical covariance matrix $\wtR_e^y$.
%
We assume $s$ is known and denote the support of the recovered frequencies by $\widehat\supp = \{\widehat\om_j\}_{j=1}^s$. Due to the discrete set-up of sensors, we assume periodicity of the frequency domain $[0,1)$ on which the distance between two frequencies $d(\om_j,\om_l)$ is understood as the wrap-around distance on the torus.
Frequency support error is measured by the Hausdorff distance between $\supp$ and $\widehat\supp$ up to a translation:
\beq
\label{eqSuppError}
\text{SuppError} =d(\supp,\widehat\supp) :=\min_{c_2 \in [0,2\pi)} \max \left(\max_{\widehat\om \in \widehat\supp} \min_{\om\in \supp} d\left(\widehat\om+\frac{c_2}{2\pi},\om\right) ,\  \max_{\om \in \supp}\min_{\widehat \om \in \widehat\supp} d\left(\widehat\om+\frac{c_2}{2\pi},\om\right)\right).
\eeq
Let $c_2^*$ be the minimizer in \eqref{eqSuppError}. In the noiseless case, we expect  the recovered calibration parameters to be of the form $\widehat g_n = c_0 e^{i c_1 }e^{i n c_2^*}g_n$ for some $c_0>0$ and $c_1 \in [0,2\pi)$. Let $\tilde g_n = \widehat g_n e^{- i nc_2^*}$ and $C^* = {\rm argmin}_{C} \sum_{n=0}^{N-1} |\tilde g_n - C g_n|^2$. We measure the relative calibration error for the $n$-th sensor and the average relative calibration error as 
$$\text{CalError}_n = \frac{|\tilde g_n - C^* g_n|}{|g_n|}, 
\quad
\text{CalError} = \frac{1}{N} \sum_{n=0}^{N-1}\text{CalError}_n. 
$$

We test the following methods:

\begin{itemize}
\item the partial algebraic method in Algorithm \ref{algorithm1};

\item  the optimization approach in Algorithm \ref{algorithm2}: 
In practice we choose the step length $\eta^k$ according to the backtracking line search in Algorithm \ref{algorithmline} \cite[Algorithm 3.1]{SWright}. This backtracking approach ensures that the selected step length $\eta_k$ is short enough to guarantee a sufficient decrease of $\tcalL$ but not too short. The latter claim holds since the accepted step length $\eta_k$ is within a factor $\theta$ of the previous trial value $\eta^k/\theta$, which was rejected for violating the sufficient decrease of $\tcalL$, that is, for being too large. In our implementation, we set $\bar\eta = \tcalL(\bg^k,\bff^k)/\|\nabla \tcalL(\bg^k,\bff^k)\|$, $\theta = 0.5, c=0.5$ and terminate gradient descent while $\eta^k < 10^{-4}$;

\begin{algorithm}[ht]                      	
\caption{Backtracking line search}          	
\label{algorithmline}		
\begin{algorithmic}[1]                    	
    \STATE Choose $\bar\eta>0, \theta \in (0,1), c \in (0,1)$; Set $\eta \leftarrow \bar\eta$; Denote $\bz^k = (\bg^k,\bff^k)$ and $\bp_k =- \nabla \tcalL(\bg^k,\bff^k)$.
    \REPEAT  
    \STATE  $\eta \leftarrow \theta \eta$
    \UNTIL $\calL(\bz_k+\eta \bp_k) \le \calL(\bz_k) - c\eta\|\bp_k\|^2$.
    \RETURN $\eta^k = \eta$. 
\end{algorithmic}
\end{algorithm}

\item an alternating algorithm proposed by Friedlander and Weiss \cite{FriedlanderWeiss}. This algorithm is based on a two-step procedure. First, one assumes that the calibration parameters are known, and estimates the frequencies with the MUSIC algorithm. Given the recovered frequencies, one minimizes the squared sum of the noise-space correlation functions evaluated at the recovered frequencies over all calibration parameters. We {choose an initial point using the partial algebraic method} and terminate the iterations when the squared sum of the noise-space correlation functions evaluated at the recovered frequencies decreases by $10^{-4}$ or less.

\end{itemize}

\subsection{Partial algebraic method and optimization approach}

We expect the optimization approach to outperform the partial algebraic method in almost all cases since all measurements are used. To illustrate this, we perform reconstructions on $20$ frequencies separated by $2/N$. We set $\DR_\gamma = \DR_g = 2$ and $\sigma = 0.5$. We apply both methods to the same set of measurements. In Figure \ref{FigImaging}, the imaging functions in the MUSIC algorithm are displayed for the partial algebraic method and the optimization approach, respectively. Both techniques succeed as imaging functions peak around the true frequencies. However, the optimization approach yields peaks that are higher and sharper, and the support error is smaller. 

\begin{figure}[hthp]
\centering
\subfigure[Imaging function in the partial algebraic method]{
\includegraphics[width=8cm]{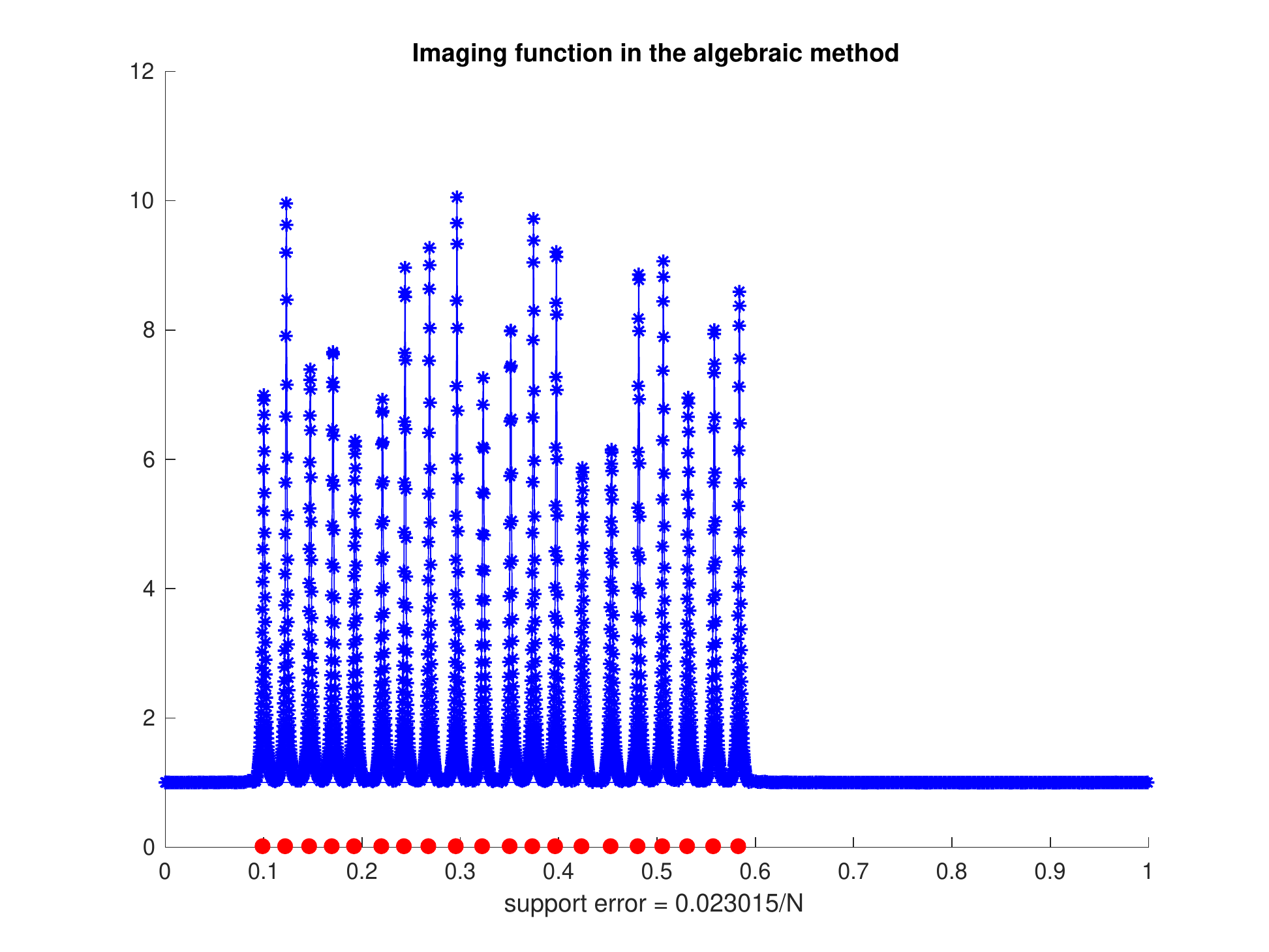}
}
\hspace{-1cm}
\subfigure[Imaging function in the optimization approach]{
\includegraphics[width=8cm]{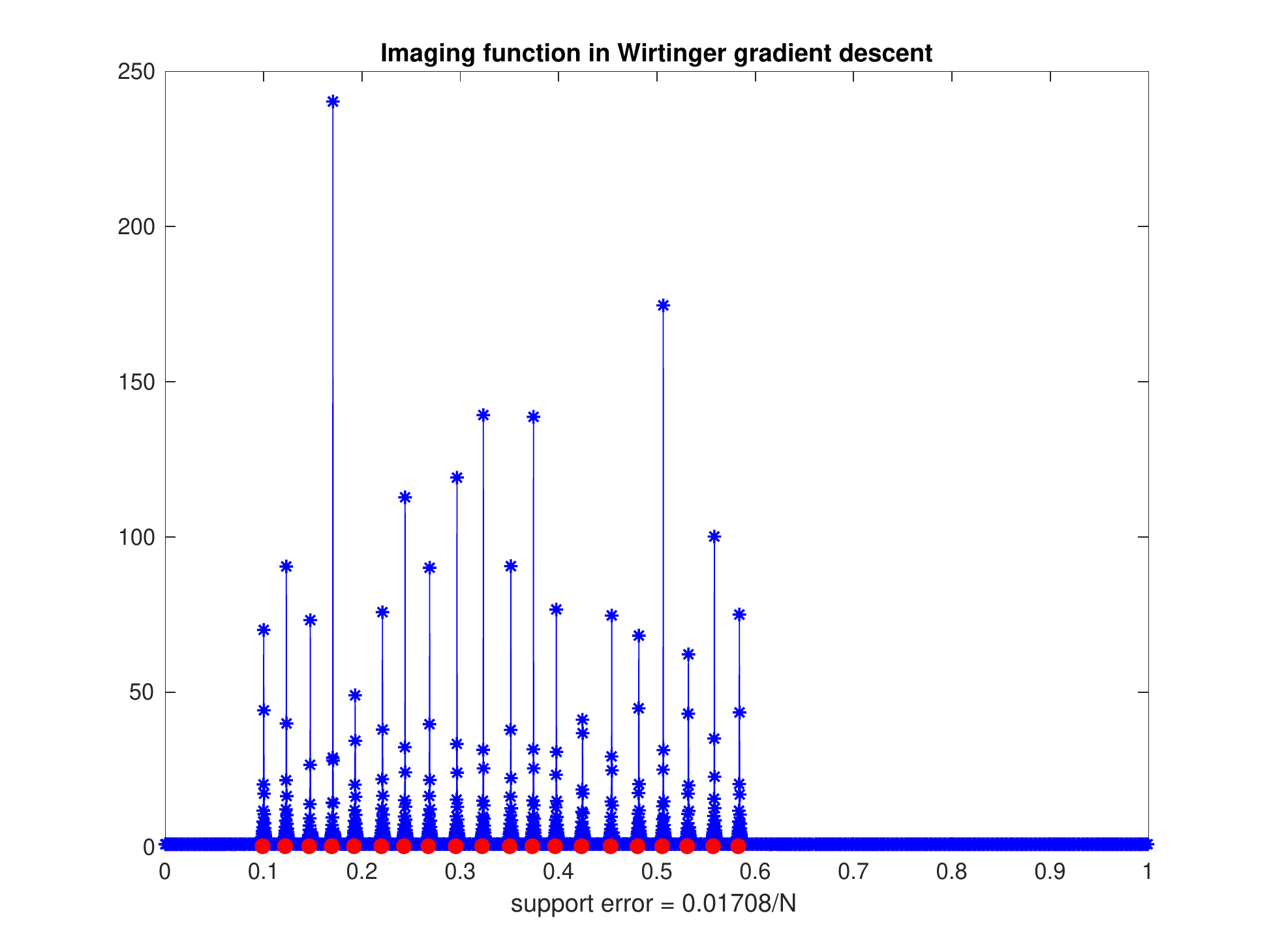}
}
\caption{Imaging functions (after a proper translation) in the MUSIC algorithm for the partial algebraic method (a) and the optimization approach (b). Red dots represent the locations of true frequencies. The two methods are applied on the same set of measurements generated by $20$ frequencies separated by $2/N$, when $\DR_\gamma = \DR_g = 2$, $L=500$ and $\sigma = 0.5$.}
\label{FigImaging}
\end{figure}

\commentout{
\subsection{Global convergence of gradient descent}

In the optimization approach, we solve a non-convex optimization problem \eqref{optproblemre} via Wirtinger gradient descent. Theorem \ref{thmwirtinger} guarantees that the Wirtinger gradient descent asymptotically converges to a critical point. We observe that this point is indeed a global minimum of $\calL$ in numerical experiments.  As an example, we perform reconstructions on $20$ frequencies separated by $2/N$ when the noise level $\sigma =0 $ and $0.5$ respectively. We set $\DR_\gamma = \DR_g = 2$. Figure \ref{FigWirtingerConvergence} displays $\log_{10}\calL(\bg^k,\bff^k)$ versus $k$ in $5$ independent experiments. The horizontal line represents $\log_{10}\calL(g,f)$ where $(g,f)$ is the background truth. In all experiments $\calL(\bg^k,\bff^k)$ dropped below $\calL(g,f)$ within $50$ iterations.

\begin{figure}[hthp]
\centering
\subfigure[$\log_{10}\calL(\bg^k,\bff^k)$ versus $k$ when $\sigma = 0$]{
\includegraphics[width=8cm]{WirtingerConvergence/WirtingerConvergenceNoise0.eps}
}
\hspace{-1cm}
\subfigure[$\log_{10}\calL(\bg^k,\bff^k)$ versus $k$ when $\sigma = 0.5$]{
\includegraphics[width=8cm]{WirtingerConvergence/WirtingerConvergenceNoise50.eps}
}
\caption{This figure displays $\log_{10}\calL(\bg^k,\bff^k)$ versus $k$ in $5$ independent experiments in the optimization approach, when the noise level $\sigma = 0$ and $0.5$ respectively.
 The horizontal line represents $\log_{10}\calL(g,f)$ where $(g,f)$ is the background truth. In all experiments $\calL(\bg^k,\bff^k)$ dropped below $\calL(g,f)$ within $50$ iterations.
}
\label{FigWirtingerConvergence}
\end{figure}
}

\subsection{Sensitivity to the number of snapshots}

The performance of all algorithms improves as the number of snapshots $L$ increases. We prove in Theorem \ref{thmagstability} that, for the partial algebraic method, when the underlying frequencies are separated by $1/N$ or above, the reconstruction error of calibration parameters decays like $O(1/\sqrt{L})$. In order to verify this result, we perform reconstructions on $20$ frequencies separated by $2/N$ when $L$ increases from $30$ to $10^4$. We set $\DR_\gamma = \DR_g =2$, and let the noise level be $\sigma = 0,0.5,1,2$. 
Figure \ref{FigErrVersusL} displays the relative reconstruction error of calibration parameters and the success probability of support recovery in $100$ independent experiments versus $L$ in a logarithmic scale. The frequency support is successfully recovered if $d(\calS,\widehat\calS)\le 0.2/N$. We observe that, (1) the reconstruction errors of calibration parameters for the partial algebraic method and the optimization approach decay like $O(1/\sqrt{L})$ since the slopes in Figure \ref{FigErrVersusL} (a) are roughly $-0.5$; (2) {in terms of stability to the number of snapshots, the alternating algorithm in \cite{FriedlanderWeiss} works the best when $\sigma = 0$, but its performance degrades dramatically when noise exists. In the presence of noise, our optimization approach has the best performance, and the partial algebraic method is the second best performer.}

\begin{figure}[ht!]
\centering
\subfigure[Relative calibration error versus $L$]{
\includegraphics[width=8cm]{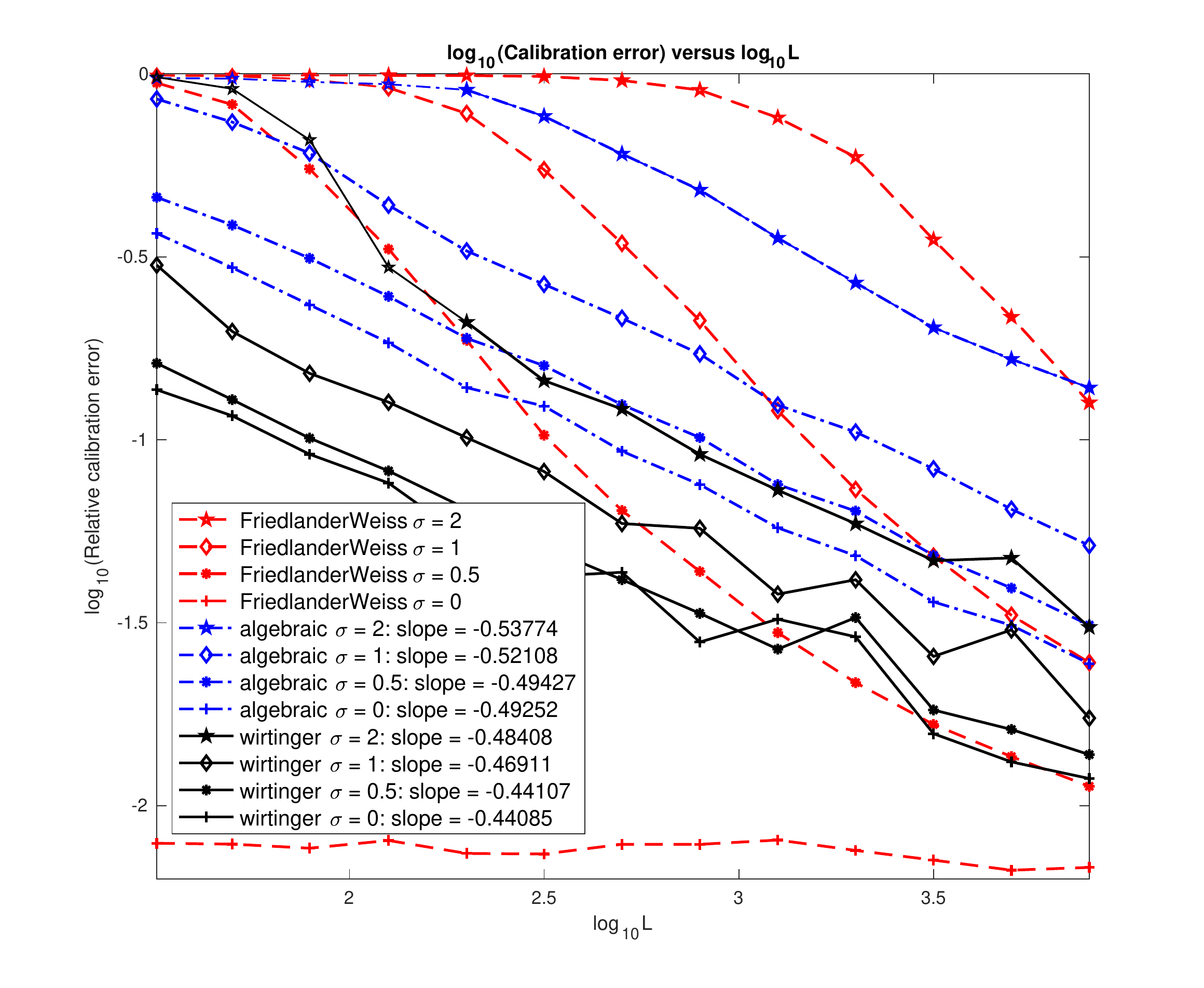}
}
\hspace{-1cm}
\subfigure[Frequency support success probability versus $L$]{
\includegraphics[width=8cm]{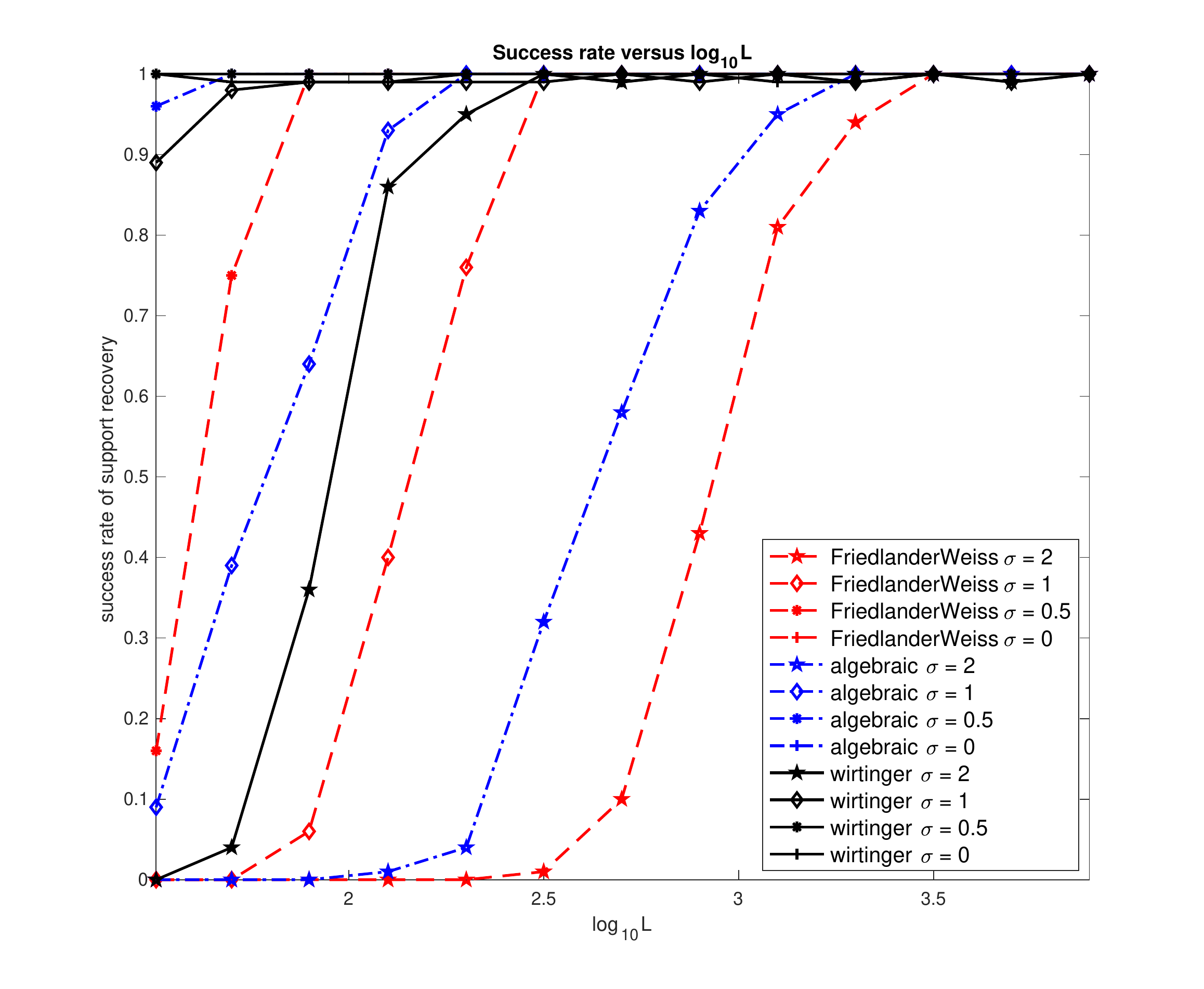}
}
\caption{(a) and (b) show the average relative reconstruction error of calibration parameters and the success probability of support recovery in $100$ independent experiments versus $L$ in a logarithmic scale. We choose $20$ frequencies separated by $2/N$, $\DR_\gamma = \DR_g =2$, and noise level $\sigma = 0,0.5,1,2$. 
}
\label{FigErrVersusL}
\end{figure}

\subsection{Sensitivity to noise}
To test the sensitivity of the various approaches to noise, we perform reconstructions on $20$ frequencies separated by $2/N$ when $\sigma$ increases from $10^{-1}$ to $10$. We set $\DR_\gamma = \DR_g =2$, and let $L = 500, 1000$ respectively. The frequency support is successfully recovered if $d(\calS,\widehat\calS)\le 0.2/N$. Figure \ref{FigErrVersusNoise} displays the average reconstruction error of calibration parameters and the success probability of support recovery in $100$ independent experiments versus $\sigma$ in a logarithmic scale. We observe that, (1) the reconstruction errors of calibration parameters for the partial algebraic method and the optimization approach increase like $O(\sigma)$ when $\log_{10}\sigma$ varies from $-0.5$ to $0.6$ since the slopes in Figure \ref{FigErrVersusNoise} (a) are roughly $1$; (2) {in terms of stability to noise, the alternating algorithm in \cite{FriedlanderWeiss} yields the smallest calibration error when $\sigma$ is small. As the noise level increases, our optimization approach becomes the best performer, while the partial algebraic method is the second best. }Notice that the reconstruction errors do not necessarily approach $0$ when $\sigma$ decreases to $0$ due to deviation of the empirical covariance matrix from the true covariance matrix caused by the finite number of snapshots.

\begin{figure}[ht!]
\centering
\subfigure[Relative calibration error versus $\sigma$]{
\includegraphics[width=8cm]{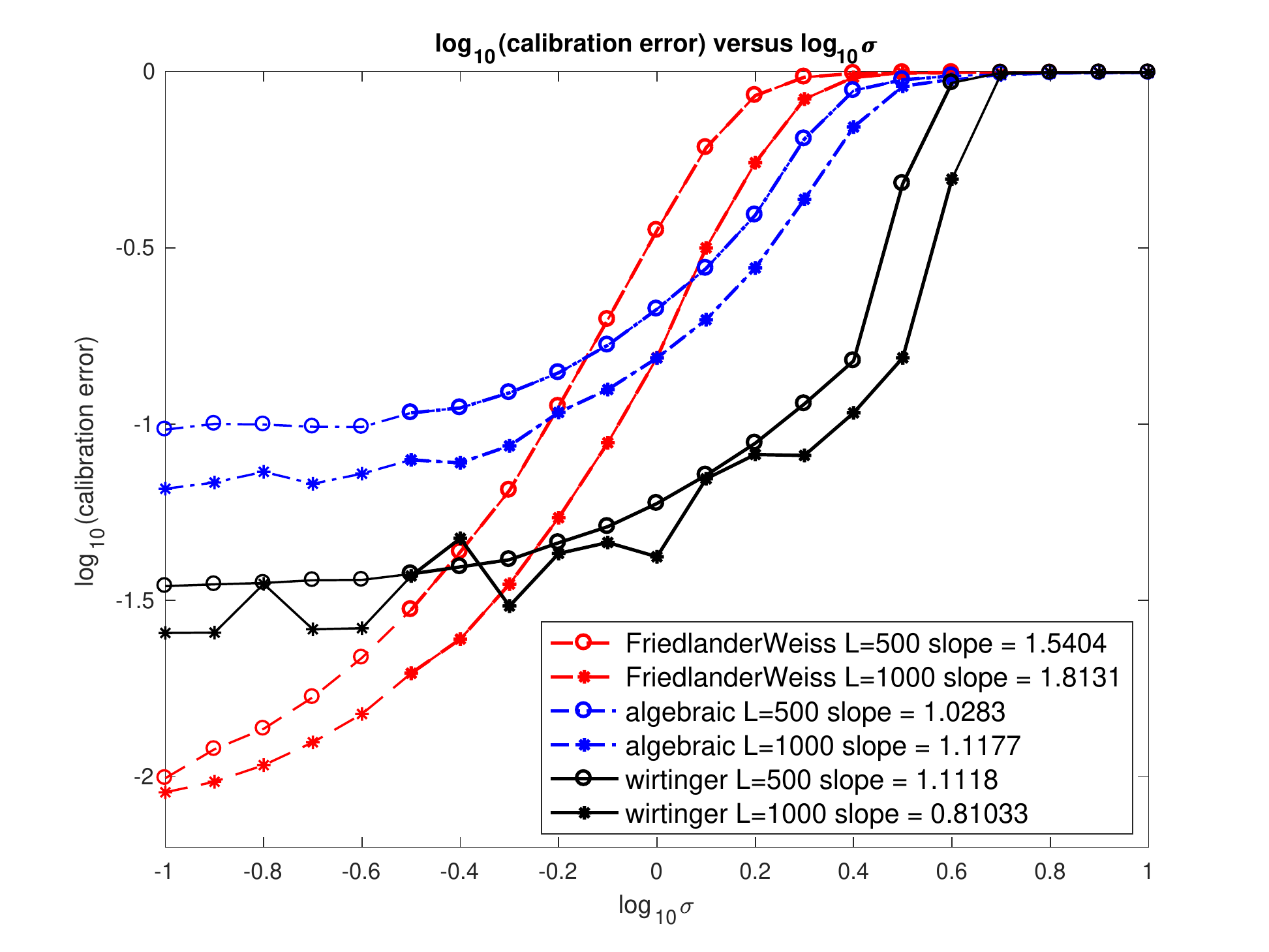}
}
\hspace{-1cm}
\subfigure[Frequency support success probability versus $\sigma$]{
\includegraphics[width=8cm]{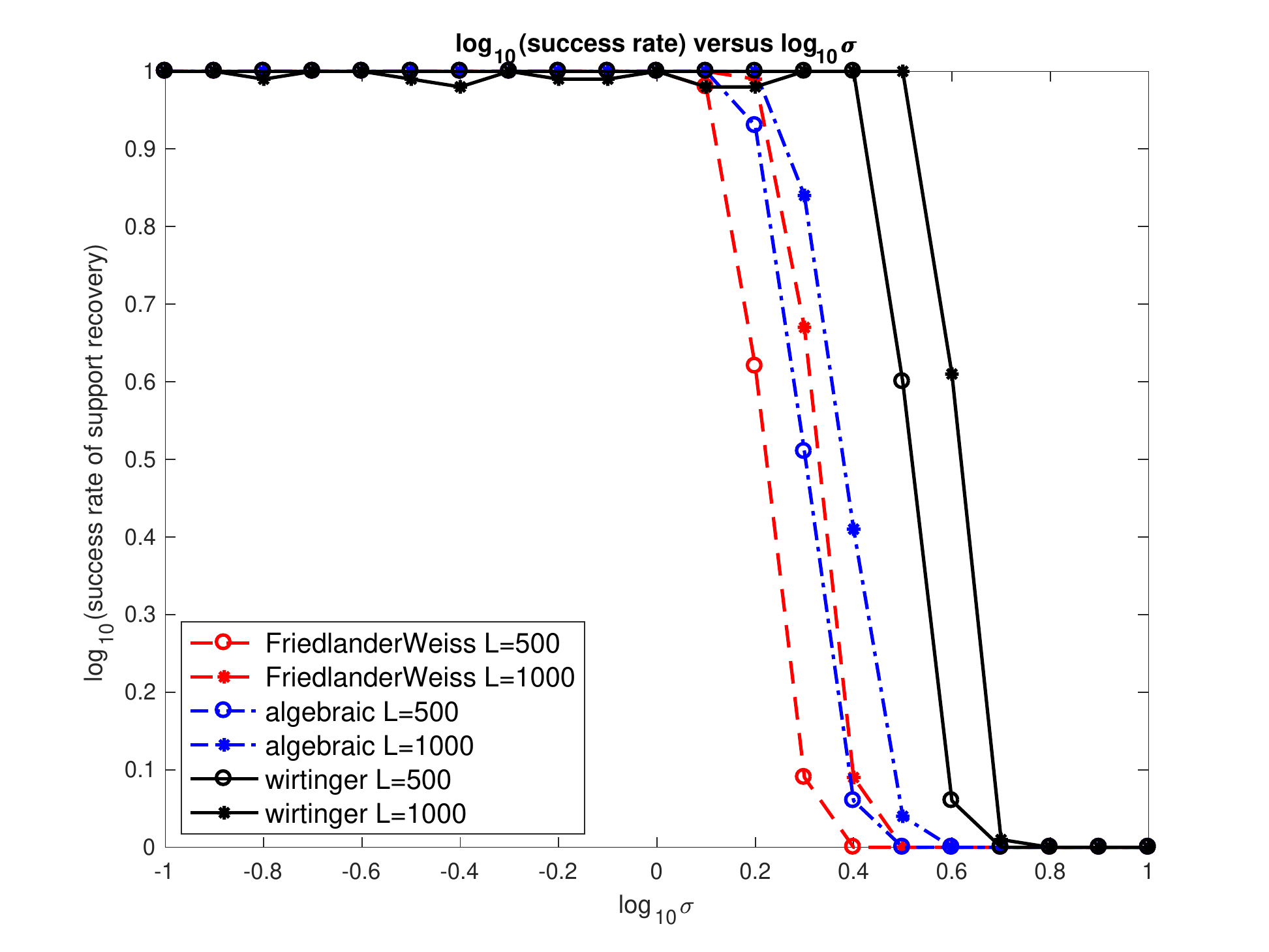}
}
\caption{Reconstruction errors of the calibration parameters and the success probability of support recovery versus $\sigma$ in $\log_{10}$ scale. We choose $20$ frequencies separated by $2/N$, $\DR_\gamma = \DR_g =2$, $L = 2000, 5000$, and $\sigma$ varies from $0.1$ to $10$. 
}
\label{FigErrVersusNoise}
\end{figure}

\section{Conclusion and future research}
\label{sec:conclusion}

This paper studies sensor calibration in spectral estimation with multiple snapshots. We assume the true frequencies are located on a continuous domain and each sensor has an unknown calibration parameter. Uniqueness of the calibration parameters and frequencies, up to a trivial ambiguity, is proved with infinite snapshots of noiseless measurements, based on the algebraic methods in \cite{PKailath,WRS93}. 
A sensitivity analysis of the partial algebraic method \cite{WRS93} with respect to the number of snapshots and noise is presented. 
While only partial measurements are exploited in the algebraic method, we propose an optimization approach to make full use of the measurements. Superior performance of our optimization approach is demonstrated through numerical comparisons with the partial algebraic method \cite{WRS93} and the alternating algorithm \cite{FriedlanderWeiss}.

Several interesting questions are left for future investigations. First, uniqueness in the current paper holds with infinite snapshots of noiseless measurements. It is interesting to study uniqueness with a minimal number of snapshots. Second,  global convergence of the Wirtinger gradient descent in our optimization approach is not proved in this paper, even though we have observed its superior numerical performance. 
The recent work in \cite{LLSW} guarantees global convergence of a non-convex optimization for the sensor calibration problem modeled by \eqref{modelling} where measurements are bilinear. In our problem, the covariance matrix is quadratic in $g$ and linear in $f$, which makes the local regularity condition \cite{CandesWirtinger,LLSW} harder to prove. 
%

\section*{Acknowledgement}
Wenjing Liao is supported by NSF-DMS-1818751 and a startup fund from Georgia Institute of Technology. Sui Tang is supported by the AMS Simons travel grant. Wenjing Liao and Sui Tang would like to thank Shuyang Ling for helpful discussions.

\appendix

\section{Sensitivity of the partial algebraic method (Proof of Theorem \ref{thmagstability})}
\label{appagstability}


The proof of Theorem \ref{thmagstability} relies on the following matrix Bernstein inequalities.
\begin{proposition}[{\cite[Theorem 7.3.1]{Tropp}}]
\label{propbern}
Consider a finite sequence $\{X_k\}$ of random Hermitian matrices that satisfy
$$\EE X_k = 0 \ \text{ and } \ \|X_k\| \le R.$$
Define the random matrix $Y = \sum_{k} X_k$. Suppose $\EE Y^2  \preceq V$ for some positive semidefinite matrix $V$ and let the intrinsic dimension of $V$ be ${\rm intdim}(V) := \trace(V)/\|V\|$. Then for any $t \ge \|V\|^{1/2}+R/3,$
\begin{align*}
\PP \left\{ \|Y\| \ge t\right\}& \le 4\cdot {\rm intdim}(V) \cdot \exp\left( 
\frac{-t^2/2}{\|V\|+Rt/3}
\right)
\\
\EE \|Y\| &\le \sqrt{2\|V\|\log(4\cdot {\rm intdim}(V))}+ \frac 1 3 R  \log(4\cdot {\rm intdim}(V)).
\end{align*}

\end{proposition}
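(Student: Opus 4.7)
The plan is to follow the matrix Laplace transform (Chernoff) method, augmented by Tropp's intrinsic-dimension refinement which upgrades the usual ambient-dimension prefactor $N$ to $\mathrm{intdim}(V) = \trace(V)/\|V\|$. The tail estimate comes first and the expectation bound follows by integration. I would reduce to a one-sided estimate on $\lambda_{\max}(Y)$; since $\|Y\| = \max(\lambda_{\max}(Y),\lambda_{\max}(-Y))$ and the hypotheses are symmetric under $X_k \mapsto -X_k$, a union bound over the two sides costs a factor of $2$. For fixed $\theta > 0$, Markov applied to $e^{\theta \lambda_{\max}(Y)} = \lambda_{\max}(e^{\theta Y}) \le \trace(e^{\theta Y})$ yields $\PP\{\lambda_{\max}(Y) \ge t\} \le e^{-\theta t}\,\EE \trace(e^{\theta Y})$.

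Next I would bound $\EE\trace(e^{\theta Y})$ via the Lieb--Tropp master inequality: Lieb's concavity theorem applied to $A \mapsto \trace\exp(H + \log A)$, combined with independence of the $X_k$'s, gives $\EE \trace(e^{\theta Y}) \le \trace \exp\bigl(\sum_k \log \EE e^{\theta X_k}\bigr)$. For each zero-mean $X_k$ with $\|X_k\| \le R$, the scalar inequality $e^z - 1 - z \le \frac{z^2/2}{1 - z/3}$ for $z < 3$, lifted via spectral calculus, produces the Bernstein-type matrix MGF bound $\log \EE e^{\theta X_k} \preceq g(\theta)\,\EE X_k^2$ with $g(\theta) = \frac{\theta^2/2}{1-\theta R/3}$ for $\theta \in (0, 3/R)$. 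Summing, using $\sum_k \EE X_k^2 \preceq V$, and invoking operator monotonicity of $\trace\exp$ on PSD matrices delivers $\EE \trace(e^{\theta Y}) \le \trace \exp(g(\theta)\,V)$.

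The heart of the argument is the intrinsic-dimension refinement. I would establish the elementary lemma: for any PSD matrix $A$ with $\|A\| > 0$ and any $\varphi \ge 0$ with $\varphi(0) = 0$ and $\varphi(x)/x$ nondecreasing on $(0, \|A\|]$, one has $\trace \varphi(A) \le \mathrm{intdim}(A)\,\varphi(\|A\|)$; this follows at once from $\varphi(\lambda_i) \le (\lambda_i/\|A\|)\varphi(\|A\|)$ summed over eigenvalues. Applied with $\varphi(x) = e^x - 1$ and $A = g(\theta)V$ this yields $\trace e^{g(\theta)V} \le N + \mathrm{intdim}(V)\bigl(e^{g(\theta)\|V\|} - 1\bigr)$. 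The threshold $t \ge \|V\|^{1/2} + R/3$ in the hypothesis is precisely what is needed so that, after the optimal Chernoff choice $\theta^\ast = t/(\|V\| + Rt/3)$, the exponential $e^{g(\theta^\ast)\|V\|}$ is large enough to absorb the stray additive $N$ into a constant multiple of $\mathrm{intdim}(V)\,e^{g(\theta^\ast)\|V\|}$. The standard Bernstein algebra $-\theta^\ast t + g(\theta^\ast)\|V\| = -\tfrac{t^2/2}{\|V\| + Rt/3}$ then yields the tail bound, the accumulated constants giving the prefactor $4$ after the $\pm Y$ symmetrization.

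For the expectation bound, I would use $\EE \|Y\| \le u + \int_u^\infty \PP\{\|Y\| \ge t\}\,dt$ with $u = \sqrt{2\|V\|\log(4\,\mathrm{intdim}(V))} + \tfrac{R}{3}\log(4\,\mathrm{intdim}(V))$; this choice is designed so that the sub-Gaussian/sub-exponential tail exactly cancels the prefactor $4\,\mathrm{intdim}(V)$ at $t = u$, leaving a residual integral of smaller order that can be absorbed. The main obstacle, as already flagged, is the intrinsic-dimension step: the ambient-dimension prefactor $N$ is immediate from $\lambda_{\max}(\cdot) \le \trace(\cdot)$, but trading it for $\mathrm{intdim}(V)$ without degrading the Bernstein exponent is delicate --- it is exactly what forces the threshold $t \ge \|V\|^{1/2} + R/3$ and pins down the precise constants $4$ and $1/3$ appearing in the final statement.
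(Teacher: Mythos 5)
First, a remark on scope: the paper does not prove this proposition at all --- it is imported verbatim from Tropp's monograph (Theorem 7.3.1 there), so there is no internal proof to compare against. Your sketch reconstructs the standard route: Chernoff via $\trace\exp$, the Lieb--Tropp subadditivity of matrix cumulant generating functions, the Bernstein MGF bound $\log\EE e^{\theta X_k}\preceq g(\theta)\,\EE X_k^2$ with $g(\theta)=\frac{\theta^2/2}{1-\theta R/3}$, and the lemma $\trace\varphi(A)\le{\rm intdim}(A)\,\varphi(\|A\|)$. These ingredients are all stated correctly (you should add independence of the $X_k$ as an explicit hypothesis, since the subadditivity step requires it; it is only implicit in the statement).

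The genuine gap is in the step you yourself flag as the heart of the matter. After Markov with $\psi(a)=e^{\theta a}$ you are left with $\PP\{\lambda_{\max}(Y)\ge t\}\le e^{-\theta t}\left(N+{\rm intdim}(V)\,(e^{g(\theta)\|V\|}-1)\right)$, and you propose to absorb the additive $N$ into a constant multiple of ${\rm intdim}(V)\,e^{g(\theta^\ast)\|V\|}$. This cannot work: at the optimal $\theta^\ast=t/(\|V\|+Rt/3)$ one has $g(\theta^\ast)\|V\|=\frac{t^2/2}{\|V\|+Rt/3}$, which is about $1/2$ at the threshold $t=\|V\|^{1/2}+R/3$, so $e^{g(\theta^\ast)\|V\|}$ stays $O(1)$ while $N/{\rm intdim}(V)$ can be arbitrarily large --- eliminating exactly this ratio is the whole point of the intrinsic-dimension refinement. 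The correct device is to run Markov's inequality with the nonnegative function $\psi(a)=e^{\theta a}-\theta a-1$, which is nondecreasing on $[0,\infty)$: since $\EE Y=0$, one has $\EE\trace\psi(Y)=\EE\trace e^{\theta Y}-N$, so the ambient dimension cancels exactly rather than being absorbed, giving $\PP\{\lambda_{\max}(Y)\ge t\}\le\frac{{\rm intdim}(V)\,e^{g(\theta)\|V\|}}{e^{\theta t}-\theta t-1}$. The denominator is then repaired via $\frac{e^{a}}{e^{a}-a-1}\le 1+\frac{3}{a^{2}}\le 4$ for $a=\theta^\ast t\ge 1$, and the condition $t\ge\|V\|^{1/2}+R/3$ is precisely what guarantees $\theta^\ast t\ge 1$. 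In particular the prefactor $4$ originates here, not from the $\pm Y$ symmetrization as you assert (the two-sided bound on $\|Y\|$ as printed in the paper in fact costs an extra factor of $2$ over Tropp's one-sided statement). Your plan for the expectation bound --- integrating the tail from a well-chosen $u$ --- is fine in spirit but will not reproduce the stated constants without a separate MGF-based argument.
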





\begin{proof}[Proof of Theorem \ref{thmagstability}]
In this proof, we assume $\|g\|$ is known, and $\beta_0 = \beta_{N-1} =0$ to remove trivial ambiguities; otherwise, our estimate gives an upper bound on $\min_{c_0>0,c_1,c_2\in \RR}\|c_0 \hg -e^{ic_1} \diag(\{e^{inc_2}\}_{n=0}^{N-1})g\|_\infty$.

In the case of finite snapshots, the sampled covariance matrix $\wtR^y_e$ deviates from $R^y_e$ by:
\begin{align*}
\|R^y_e - \widetilde R^y_e\|
& \le  \| G A (R^x - \widetilde R^x) A^* G^* + GA (R^{xe}-\wtR^{xe}) + (R^{ex}-\wtR^{ex}) A^* G^* + R^e - \wtR^e\| 
\\
&\le
\smaxsq(G) \smaxsq(A) \|R^x - \widetilde R^x\| + \smax(G)\smax(A) \|R^{xe}-\wtR^{xe}\| \\
& \quad + \smax(G)\smax(A) \|R^{ex}-\wtR^{ex}\|  +  \|R^e - \wtR^e\|,
\end{align*}
where we use the same notations as \eqref{eqcovRye}. 
We will estimate $\|R^x - \wtR^x\|$ using the matrix Bernstein inequality in Proposition \ref{propbern}. Let $\wtR_{x,t} =  \frac{1}{L}\left(x(t)x^*(t) - R^{x}\right) $ which satisfies $\|\wtR_{x,t} \| \le \frac{1}{L} \left(\max_{t \in \Gamma} \|x(t)\|^2+\gammamax^2 \right)$ for $t \in \Gamma$. Then $ \wtR^x-R^x = \sum_{t \in \Gamma} \wtR_{x,t}$. We observe that 
\begin{align*}
\EE ( \wtR^x-R^x  )^2 
&=  \sum_{t \in \Gamma} \EE \wtR_{x,t}^2 =\frac{1}{L^2} \sum_{t \in \Gamma}  \EE \left(x(t) x^*(t) - R^x\right)\left(x(t) x^*(t) - R^x\right)
\\ &
\preceq\frac{1}{L} \left(
\max_t \|x(t)\|^2 R^x -(R^x)^2 \right) \preceq \frac{\max_t \|x(t)\|^2}{L}R^x, 
\end{align*}
where $R^x$ has the intrinsic dimension ${\rm intdim}(R^x) \le s$. Applying Proposition \ref{propbern}, we obtain that for any 
$ \eta \ge \frac{\gammamax \max_t \|x(t)\|}{\sqrt{L}}+\frac{\max_{t\in \Gamma}{ \|x(t)\|^2}+\gammamax^2}{3L}$, we have
\begin{align}
\PP\{ \|R^x - \wtR^x \| \ge \eta \}& \le 4 s\cdot \exp\left(\frac{-\eta^2/2}{\frac{\gammamax^2\max_t \|x(t)\|^2}{L} + \frac{\gammamax^2+\max_{t\in \Gamma} \|x(t)\|^2}{3L}\eta}\right) \label{pthmagstability1},
\\
\EE\|R^x - \wtR^x \| &
\le \frac{ \gammamax \max_{t } \|x(t)\| \sqrt{2\log 4s}}{\sqrt{L}} + \frac{\gammamax^2 + \max_{t \in \Gamma} \|x(t)\|^2}{3L}\log 4s.
\label{pthmagstability10}
\end{align}
Similarly, for all $\eta >0$, we have 

\begin{align}
\PP \{\|R^{xe} - \wtR^{xe}\| \ge \eta \}
& \le (N+s) \cdot 
\exp\left( \frac{-\eta^2/2}{\frac{N\sigma^2\gammamax^2}{L}+ \frac{ \max_{t\in \Gamma} \|x(t)\|\|e(t)\|}{3L}\eta}\right),
\label{pthmagstability2}
\\
\PP\{\|R^e -\wtR^e \| \ge \eta \} 
&\le 2 N \cdot \exp\left( 
\frac{-\eta^2/2}{\frac{\sigma^2 \max_t \|e(t)\|^2}{L}+\frac{\sigma^2+\max_{t \in \Gamma} \|e(t)\|^2}{3L}\eta} 
\right),
\label{pthmagstability20}
\end{align}
and
\begin{align}
 \EE\|R^{xe}-\wtR^{xe}\| 
& \lesssim \frac{\sigma\gammamax\sqrt{2N\log (N+s)}}{\sqrt{L}} + \frac{\max_{t \in \Gamma}\|x(t)\|\|e(t)\| }{3L}\log (N+s),
\label{pthmagstability3}
\\
\EE \|R^e - \wtR^e\|
& \lesssim \frac{\sigma \max_{t } \|e(t)\|\sqrt{2\log 2N}}{\sqrt{L}}
+ \frac{\sigma^2+\max_{t\in \Gamma} \|e(t)\|^2}{3L}\log 2N, 
\label{pthmagstability30}
\end{align}
where we apply the matrix Bernstein inequality for the non-Hermitian case {\cite[Theorem 1.6.2]{Tropp}} to estimate $\|R^{xe} - \wtR^{xe}\|$. 
Our estimator of $R^y$ is $\hR^y = \wtR^y_e - \hsigma^2 I_N$, which has the following error 
\begin{align}
 \|R^y - \hR^y\| 
& =  \|(R^y_e - \sigma^2 I_N) - (\wtR^y_e-\hsigma^2 I_N)\|
\le  \|R^y_e -\wtR^y_e\|+ |\sigma^2 - \hsigma^2| \le 2 \|R^y_e -\wtR^y_e\|,
\label{pthmagstability31}
\end{align} where the last inequality follows from the  Weyl's inequality \cite{Weyl}. Combining \eqref{pthmagstability10}, \eqref{pthmagstability3}, \eqref{pthmagstability30} and \eqref{pthmagstability31} gives rise to $\EE\|R^y - \hR^y\| \le \Delta R^y$ with $\Delta R^y$ defined in \eqref{eqDeltaRy}.

Define the event 
$$\mathcal{E} :=\left\{
\max(\alphamax^2 \sigma^2_{\max}(A)\|R^x -\wtR^x\|, 
\alphamax\smax(A)\|R^{xe} -\wtR^{xe}\|,\|R^e -\wtR^e\|)\le \frac{\alphamin^2|f_1|}{16}\right\}$$
under which we have 
\begin{align*}
\|R^y -\hR^y \| \le \frac{1}{2}\alphamin^2 |f_1| \le \frac{1}{2}\alphamin^2 f_0.
\end{align*}
This implies
\begin{align*}
\trace(\hR^y) \ge \trace{R^y} - N\|R^y- \hR^y\|
= \|g\|^2 f_0 - \frac{1}{2}N \alphamin^2 f_0
\ge \frac{1}{2}\|g\|^2 f_0
= \frac 1 2 \trace(R^y),
\end{align*}
and $\trace(\hR^y) \le 3/2\trace(R^y)$.
We first perform all estimates under the event $\mathcal{E}$ and consider $\calE^c$ later.

\subsubsection*{Condition on $\calE$}
In the partial algebraic method, if $\|g\|$ is known, then the calibration amplitudes $\alpha$ can be recovered without any scaling ambiguity. The exact and recovered calibration phases are:
\beq
\alpha^2_n = \frac{R^y_{n,n}}{\trace(R^y) }\|g\|^2 \qquad 
\halpha^2_n = \frac{\hR^y_{n,n}}{\trace(\hR^y) }\|g\|^2.
\label{pthmagstability6}
\eeq
Hence
\begin{align*}
|\alpha_n^2 - \halpha_n^2|
& \le 
\left| \frac{R^y_{n,n}}{\trace(R^y) } - \frac{\hR^y_{n,n}}{\trace(\hR^y) } \right| \|g\|^2
=\left| \frac{R^y_{n,n}\trace(\hR^y) -\hR^y_{n,n}\trace(R^y)}{\trace(R^y)\trace(\hR^y) } \right| \|g\|^2
\\
& \le 
\frac{R^y_{n,n}|\trace(\hR^y) - \trace(R^y)| 
+
\trace(R^y) |\hR^y_{n,n}-R^y_{n,n}|
}{\trace(R^y)\trace(\hR^y) }  \|g\|^2
\\
& \le \frac{|g_n|^2 f_0 N  + \|g\|^2 f_0}{\|g\|^2 f_0 \|g\|^2 f_0 /2}\|g\|^2 \|R^y - \hR^y\|
=  2\frac{N |g_n|^2 +\|g\|^2}{\|g\|^2 f_0} \|R^y - \hR^y\|.
\end{align*}
On the other hand, 
$$\alpha_n^2 = \frac{|g_n|^2 f_0}{\trace{R^y}}\|g\|^2                           \qquad 
\halpha_n^2\ge \frac{R^y_{n,n} - \|R^y-\hR^y\|}{\frac 3 2 \trace{R^y} } \|g\|^2
\ge
\frac{|g_n|^2f_0 - \frac 1 2 |g_n|^2 f_0}{\frac 3 2 \trace{R^y} }  \|g\|^2
 \ge \frac{\alpha_n^2}{3},$$
in the event $\calE$ and then
\begin{align*}
|\alpha_n - \halpha_n| 
& = 
\frac{|\alpha_n^2 - \halpha_n^2|}{\alpha_n + \halpha_n} 
\le \frac{1}{4\alpha_n/3}
\cdot
2 \frac{\|g\|^2+N |g_n|^2}{\|g\|^2 f_0 } \cdot \|R^y - \hR^y\|,
\\
\|\alpha-\halpha\|_\infty 
& 
\le 
\frac{3}{2\alphamin}
\cdot
\frac{\|g\|^2+N \alphamax^2}{\|g\|^2 f_0 } \cdot  \|R^y - \hR^y\|.
\end{align*}

Next we estimate $\|\beta-\hbeta\|_\infty.$ To remove trivial ambiguities, we assume the exact calibration phases $\beta$ satisfy
\begin{align*}
\Phi \beta = b \quad \text{where } b_0=b_{N-1}=0, \ b_{n}= \angle \frac{R^y_{n+1,n}}{R^y_{n,n-1}}, n = 1,\ldots,N-2.
\end{align*}
Our recovered calibration phases $\hbeta$ are:
\begin{align*}
\Phi \hbeta = \hb \quad \text{where } \hb_0=\hb_{N-1}=0, \ \hb_{n}= \angle \frac{\hR^y_{n+1,n}}{\hR^y_{n,n-1}}, n = 1,\ldots,N-2.
\end{align*}
Recall that $R^y_{n,n-1} = \alpha_n \alpha_{n-1} e^{i (\beta_n - \beta_{n-1})}f_1$, so $\alphamin^2 |f_1| \le |R^y_{n,n-1}| \le \alphamax^2 |f_1|$. In the event $\calE$, we have $\alphamin^2 |f_1|/2 \le |\hR^y_{n,n-1}| \le 3\alphamax^2 |f_1|/2$, and 
\begin{align*}
\left| \frac{R^y_{n+1,n}}{R^y_{n,n-1}}  - \frac{\hR^y_{n+1,n}}{\hR^y_{n,n-1}}\right|
& = \frac{| R^y_{n+1,n}  \hR^y_{n,n-1} -R^y_{n,n-1}  \hR^y_{n+1,n} |}{|R^y_{n,n-1} \hR^y_{n,n-1}|} 
\le 4 \frac{\alphamax^2}{\alphamin^4 {|f_1|}} \|R^y - \hR^y\|.
\end{align*}
For any $z,\widehat z \in \CC$, by a simple geometric argument, we have
\begin{align}
 |(\angle z - \angle\widehat z) \mod 2 \pi | \le \frac{4|z-\widehat z|}{\min(|z|,|\widehat z|)}\end{align} whenever $ {|z-\widehat z|} \le  \min(|z|,|\widehat z|)$. Whenever $\|R^y -\hR^y\| \le \frac{ \alphamin^6 |f_1|}{12\alphamax^4}$ (This is guaranteed for sufficiently large $L$), 
 $$\left| \frac{R^y_{n+1,n}}{R^y_{n,n-1}}  - \frac{\hR^y_{n+1,n}}{\hR^y_{n,n-1}}\right|\leq \frac{\alphamin^2}{3\alphamax^2} \leq \min(\left| \frac{R^y_{n+1,n}}{R^y_{n,n-1}}\right|,  \left| \frac{\hR^y_{n+1,n}}{\hR^y_{n,n-1}}\right|). $$
 Hence
\begin{align*}
\|b-\hb\|_\infty
&= \max_n  \left| \left(\angle \frac{R^y_{n+1,n}}{R^y_{n,n-1}}  - \angle \frac{\hR^y_{n+1,n}}{\hR^y_{n,n-1}} \right) \mod 2\pi \right|  
\\&\le 
\max_n \frac{4\left| \frac{R^y_{n+1,n}}{R^y_{n,n-1}}  - \frac{\hR^y_{n+1,n}}{\hR^y_{n,n-1}}\right|}{\min\left(\left| \frac{R^y_{n+1,n}}{R^y_{n,n-1}}\right|, \left|\frac{\hR^y_{n+1,n}}{\hR^y_{n,n-1}}\right|\right)}
\le
\frac{48\alphamax^4 }{\alphamin^4}  \cdot
\frac{\|R^y -\hR^y\|}{\alphamin^2 {|f_1|}}.
\end{align*}
The infinity norm of the matrix $\Phi^{-1}$ is upper bounded by (see \cite[Chapter 2]{Leveque}):
$$\|\Phi^{-1}\|_\infty= \max_{j} \sum_{i=0}^{N-1} |\Phi^{-1}_{i,j}| \le 3 N^2.$$
Therefore
\begin{align*}
\|\beta-\hbeta\|_\infty \le \|\Phi^{-1}\|_\infty \|b-\hb\|_\infty
\le 
144 N^2 \frac{\alphamax^4 }{\alphamin^4}  \cdot
\frac{\|R^y-\hR^y\|}{\alphamin^2 {|f_1|}}.
\end{align*}
Combining the estimates of $\|\alpha-\halpha\|_\infty$ and $\|\beta-\hbeta\|_\infty$ gives rise to
\begin{align*}
\|g-\hg\|_\infty 
= \max_n |g_n - \hg_n| 
\le \max_n |\alpha_n| |e^{i\beta_n}-e^{i \hbeta_n}| +|\alpha_n -\halpha_n|
\le \|\alpha-\halpha\|_\infty + \alphamax \|\beta-\hbeta\|_\infty.
\end{align*}

As for the input matrix for the MUSIC algorithm, we have
$$F = \diag(g)^{-1} R^y \diag(\barg)^{-1}
\qquad 
\hF = \diag(\hg)^{-1} \hR^y \diag(\bar{\hg})^{-1}.$$
Then
\begin{align*}
\|F-\hF\|
&\le \frac{1}{\widehat\alpha_{\min}^2} \|R^y -\hR^y\|
+ \frac{\|R^y\|}{\alphamin} \max_n \left| \frac{1}{g_n} -\frac{1}{\hg_n}\right|
+ \frac{\|R^y\|}{\widehat\alpha_{\min}} \max_n \left| \frac{1}{g_n} -\frac{1}{\hg_n}\right| \\
& 
\le 9\frac{\|R^y -\hR^y\|}{\alphamin^2} + 12 \frac{\|R^y\|}{\alphamin^3}\|g-\hg\|_\infty 
\le 9 \frac{\|R^y -\hR^y\|}{\alphamin^2} + 12 \frac{\alphamax^2\gammamax^2  \sigma_{\max}^2(A)}{\alphamin^3}\|g-\hg\|_\infty.
\end{align*}
When the input of MUSIC is $\widehat F$, Proposition \ref{propmusic2} provides an estimate on the perturbation of the noise-space correlation function: 
$$|\widehat \calR(\om) - \calR(\om)| \le \frac{2 }{\gammamin^2\sigma^2_{\min}(A)}\cdot
\|F-\hF\| $$
 as long as $2\|F-\hF\| <  \gammamin^2\sigma^2_{\min}(A)$.

Conditioning on the event $\calE$, we have 
\begin{align*}
\EE(\|g-\hg\|_\infty |\calE) 
& \le
\EE( \|\alpha-\halpha\|_\infty |\calE) 
+ \alphamax
\EE( \|\beta-\hbeta\|_\infty |\calE)
\\&
 \le
\frac{3(\|g\|^2+N \alphamax^2)}{2\alphamin\|g\|^2 f_0 } \Delta R^y
+
144 N^2 \frac{\alphamax^5 }{\alphamin^6 {|f_1|}}  \Delta R^y
\end{align*}
and
\begin{align*}
\EE(\|F-\hF\| | \calE) 
&\le
9\frac{\Delta R^y}{\alpha^2_{\min}}
+  \frac{12 \alphamax^2\gammamax^2  \sigma^2_{\max}(A)}{\alphamin^3}
\left(
\frac{3(\|g\|^2+N \alphamax^2)}{2\alphamin\|g\|^2 f_0 } +
144 N^2 \frac{\alphamax^5 }{\alphamin^6 {|f_1|}}  
\right) \Delta R^y.
\end{align*}

\subsubsection*{Condition on $\calE^c$}
Finally we consider the event $\calE^c$ which occurs with small probability when $L$ is sufficiently large:
\begin{align*}
\PP\{\calE^c\} 
&\le 
\PP \left\{ \|R^x - \wtR^x\| \ge \frac{\alphamin^2 |f_1|}{16 \alphamax^2 \sigma^2_{\max}(A)} \right\} 
+ \PP 
\left\{\| R^{xe} -\wtR^{xe}\| \ge \frac{\alphamin^2 |f_1|}{16\alphamax \smax(A)}
\right\}
\\
&
+ \PP\left\{
\|R^e - \wtR^e\| \ge \frac{\alphamin^2 |f_1|}{16}
\right\}
\\
& \le 
4N e^{-LC(\alphamax,\alphamin,\gammamax,\smax(A),|f_1|,\sigma,\max_t \|x(t)\|,\max_t \|e(t)\|)}
\end{align*}
for some positive constant $C(\alphamax,\alphamin,\gammamax,\smax(A),|f_1|, \sigma,\max_t \|x(t)\|,\max_t \|e(t)\|).$ In any case, $\|g-\hg\|_\infty \le \|g\|_\infty+\|\hg\|_\infty \le \|g\|_\infty + \|g\| \le 2\|g\|,$ where $\|\hg\|_\infty \le \|g\|$ due to \eqref{pthmagstability6}. Therefore,
\begin{align}
\EE\|g-\hg\|_\infty 
& \le \EE(\|g-\hg\|_\infty | \calE)  \PP \{\calE\} + 2 \|g\| \PP\{\calE^c\}
\nn
\\
&\le \frac{3(\|g\|^2+N \alphamax^2)}{2\alphamin\|g\|^2 f_0 } \Delta R^y
+
144 N^2 \frac{\alphamax^5 }{\alphamin^6 |f_1|^2}  \Delta R^y
+ 8N \|g\|_2  e^{-LC}.
\label{pthmagstability5}
\end{align}
Since the the first two terms in \eqref{pthmagstability5} is $O(1/L)$ and the last term is $O(e^{-CL})$, we can guarantee \eqref{eqdeltag}
when $L$ is sufficiently large. A similar estimate holds for $\|F-\hF\|$.

\end{proof}

\section{Proof of Lemma \ref{lemman0}}
\label{prooflemman0}
According to \eqref{eqRymn}, we have
$$|g_n|^2 = \frac{R^y_{n,n}}{f_0}, \ n=0,\ldots,N-1,$$
and for any $k = 1,\ldots,N-1$
$$|f_k|^2 = \frac{|R^y_{n+k,n}|^2}{|g_{n+k}|^2 |g_n|^2} = \frac{f_0^2|R^y_{n+k,n}|^2}{R^y_{n+k,n+k} R^y_{n,n}} \ \text{ for any } 0 \le n \le N-k-1. $$
Therefore
$$\|g\|^2 = \frac{\sum_{n=0}^{N-1}R^y_{n,n}}{f_0}$$
and 
\begin{align*}
\|f\| &= \sqrt{\sum_{k=0}^{N-1} |f_k|^2}  = \sqrt{f_0^2 + \sum_{k=1}^{ N-1} \frac{1}{N-k} \sum_{n=0}^{N-k-1} \frac{f_0^2|R^y_{n+k,n}|^2}{R^y_{n+k,n+k} R^y_{n,n}}}
\\
&=f_0  \sqrt{1 +  \frac{1}{N-k}\sum_{k=1}^{ N-1} \sum_{n=0}^{N-k-1} \frac{|R^y_{n+k,n}|^2}{R^y_{n+k,n+k} R^y_{n,n}}}
\end{align*}
which gives rise to Lemma \ref{lemman0}.

\section{Proof of Theorem \ref{thmwirtinger}}
\label{appthmwirtinger}

We first show that $\nabla \tcalL$, restricted within $\calN_{\hn_0}$, is a Lipchitz function. Notice that $g$ are the exact calibration parameters and $f$ is defined in \eqref{eqf}.
\begin{lemma}
\label{lemmalip}
For any $z := (\bg;  \bff)$ and $\Delta \bz:= (\Delta \bg; \Delta \bff)$ such that $\bz,\bz+\Delta \bz \in \calN_{\hn_0}$, $\nabla \tcalL$ is Lipchitz such that
$$\|\nabla \tcalL (\bz+\Delta \bz) - \nabla \tcalL(\bz)\| \le C_{\rm Lip} \|\Delta \bz\|$$
with
$$C_{\rm Lip} \le 146\hn_0 \max(\sqrt{\hn_0},\sqrt[4]{\hn_0})+8\hn_0 + 16\max(\sqrt{\hn_0},\sqrt[4]{\hn_0})\|R^y -\hR^y\|_F + \frac{8\rho}{\min(\hn_0,\sqrt{\hn_0})}
$$
where 
$\rho \ge \frac{3\hn_0 +\|R^y -\hR^y\|_F}{(\sqrt 2 -1)^2}.$
\end{lemma}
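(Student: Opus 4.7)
The plan is to bound the Lipschitz constants of $\nabla\mathcal{L}$ and $\nabla\mathcal{G}$ separately on $\calN_{\hn_0}$ using the decomposition $\tcalL=\mathcal{L}+\mathcal{G}$ and the triangle inequality.

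For $\nabla\mathcal{L}$, I would first split the residual inside the formulas \eqref{eqdg}--\eqref{eqdf} as $\diag(\bg)\calT(\bff)\diag(\bar\bg)-\hR^y=[\diag(\bg)\calT(\bff)\diag(\bar\bg)]-R^y-(\hR^y-R^y)$, producing three contributions to $\nabla\mathcal{L}$: a \emph{self} piece that is a multilinear polynomial of total degree five (cubic in $(\bg,\bar\bg)$ and quadratic in $(\bff,\bar\bff)$ for $\nabla_{\bg}\mathcal{L}$, and of total degree three for $\nabla_{\bff}\mathcal{L}$), a piece linear in $(\bg,\bff)$ whose coefficient matrix is $R^y$, and a piece linear in $(\bg,\bff)$ whose coefficient matrix is $\hR^y-R^y$. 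For any multilinear map $p$ of degree $k$, the telescoping identity expresses $p(\bz+\Delta\bz)-p(\bz)$ as a finite sum of monomials, each containing at least one $\Delta\bz$ factor. I would bound each monomial using $\|\calT(\bff)\|\lesssim\|\bff\|$, $\|\diag(\bg)\|\leq\|\bg\|$, $\|\diag[M]\|_2\leq\|M\|_F$, together with the defining inequalities $\|\bg\|^2,\|\bff\|\leq 2\sqrt{\hn_0}$ of $\calN_{\hn_0}$. A monomial of type $\|\bg\|^{a}\|\bff\|^{b}\|\Delta\bz\|$ is then controlled by $\hn_0^{a/4+b/2}\|\Delta\bz\|$, and using $\|R^y\|_F\lesssim n_0\leq\hn_0$ (from $R^y=\diag(g)\calT(f)\diag(\bar g)$), these sum to a cubic-in-$\sqrt[4]{\hn_0}$ contribution from the self piece, an $O(\hn_0)$ contribution from the $R^y$ piece, and an $O(\max(\sqrt{\hn_0},\sqrt[4]{\hn_0})\|R^y-\hR^y\|_F)$ contribution from the error piece, lining up with the first three summands of the stated $C_{\mathrm{Lip}}$.

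For $\nabla\mathcal{G}$, write $\nabla_{\bg}\mathcal{G}(\bg)=\phi(\|\bg\|^2)\bg$ with $\phi(t)=\rho\,\calG_0'(t/\sqrt{2\hn_0})/\sqrt{2\hn_0}$, and analogously for $\nabla_{\bff}\mathcal{G}$. Since $\calG_0'(z)=2\max(z-1,0)$ is $2$-Lipschitz, $\phi$ has Lipschitz constant $2\rho/(2\hn_0)$. Splitting
\[
\nabla_{\bg}\mathcal{G}(\bg+\Delta\bg)-\nabla_{\bg}\mathcal{G}(\bg)=\bigl(\phi(\|\bg+\Delta\bg\|^2)-\phi(\|\bg\|^2)\bigr)\bg+\phi(\|\bg+\Delta\bg\|^2)\Delta\bg
\]
and applying $|\|\bg+\Delta\bg\|^2-\|\bg\|^2|\lesssim\sqrt[4]{\hn_0}\|\Delta\bg\|$ on $\calN_{\hn_0}$, with the analogous estimate for $\bff$, gives a Lipschitz constant of order $\rho/\min(\hn_0,\sqrt{\hn_0})$, which matches the last summand of $C_{\mathrm{Lip}}$.

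The main obstacle is purely algebraic bookkeeping rather than any delicate estimate: the self piece of $\nabla\mathcal{L}$ unpacks into a dozen or so monomials in $(\bg,\bar\bg,\bff,\bar\bff)$, each contributing several summands to the telescoping expansion. The precise numerical constants $146$, $8$, $16$, $8$ in the stated bound arise only after carefully tracking how many times each monomial appears, keeping binomial factors from how many $\Delta\bz$ copies appear, and respecting the parity split encoded in the $\max(\sqrt{\hn_0},\sqrt[4]{\hn_0})$ (which distinguishes whether an odd or even power of $\|\bg\|$ is involved, given the constraint $\|\bg\|^2\leq 2\sqrt{\hn_0}$). The argument is routine but bookkeeping-heavy, and would occupy the bulk of the proof.
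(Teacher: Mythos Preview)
Your proposal is correct and follows essentially the same approach as the paper: split $\nabla\tcalL=\nabla\calL+\nabla\calG$, write the residual inside $\nabla\calL$ as $\diag(\bg)\calT(\bff)\diag(\bar\bg)-\diag(g)\calT(f)\diag(\bar g)+(R^y-\hR^y)$, expand the difference by telescoping the multilinear pieces, bound each monomial using $\|\bg\|^2,\|\bff\|\le 2\sqrt{\hn_0}$ on $\calN_{\hn_0}$, and handle $\nabla\calG$ via the $2$-Lipschitz property of $\calG_0'$. The paper carries out exactly this computation in three parts (one for $\nabla_{\bg}\calL$, one for $\nabla_{\bff}\calL$, one for $\nabla\calG$), and as you correctly anticipate, the bulk of the work is tracking the explicit constants through the telescoping expansion.
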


\begin{proof}[Proof of Lemma \ref{lemmalip}]
The Wirtinger gradient of $\tcalL$ is 
\beq
\label{lemmalippeq1}
\nabla \tcalL = (\nabla_{\bg} \tcalL  ; \   \nabla_{\bff} \tcalL ) = (\nabla_{\bg} \calL + \nabla_{\bg} \calG  ; \ \nabla_{\bff} \calL + \nabla_{\bff} \calG).
\eeq
\begin{description}

\item[Part 1:] We estimate $\|\nabla_{\bg} \calL(\bz+\mathbf{w})-\nabla_{\bg} \calL(\bz)\|$. Recall that $R^y = \EE y(t)y(t)^*$, and
$$\nabla_{\bg} \calL(z) = 2 \diag \Big[
\overline{
\calT(\bff)^*  \diag(\bar \bg) 
\Big( \diag(\bg)\calT(\bff)\diag(\bar \bg) -  \diag(g)\calT(f)\diag(\bar{g}) + R^y - \hR^y
\Big)}
\Big].$$
Notice that for any $\bff,\bg,\bg_1,\bg_2,\bh \in \CC^N$ and $X \in \CC^{N \times N}$, we have 
\begin{align*}
\|\diag[\calT(\bff)^* \diag(\bg_1) \calT(\bh) \diag(\bg_2)]\| 
&\le \|\bff\|\|\bg_1\|\|\bh\|\|\bg_2\|
\\
\left\|\diag\left[ \calT(\bff)^* \diag(\bg) X\right]\right\| 
& \le  \sqrt 2 \|\bff\| \|\bg\| \|X\|_F.
\end{align*}
For any $\bz,\bz+\Delta \bz \in \calN_{\hn_0}$, we have
\begin{align}
&\|\nabla_{\bg} \calL(\bz+\Delta \bz)-\nabla_{\bg} \calL(\bz)\| \nn
\\
&\le
2 \Big\| \diag\Big[\calT(\bff+\Delta \bff)^* \diag(\bar\bg+\overline{\Delta \bg}) \diag(\bg+\Delta \bg) \calT(\bff+\Delta \bff) \diag(\bar\bg+\overline{\Delta \bg})   \nn
\\
& \quad
-
\calT(\bff)^* \diag(\bar\bg) \diag(\bg) \calT(\bff) \diag(\bar\bg)  \Big]
\Big\| \nn
\\
&\quad 
+ 2\left\| \diag\Big[ \calT(\bff+\Delta \bff)^* \diag(\bar\bg +\overline{\Delta \bg}) \diag(g)\calT(f)\diag(\bar{g}) -  \calT(\bff)^* \diag(\bar\bg ) \diag(g)\calT(f)\diag(\bar{g})  \Big] \right\| \nn
\\
& \quad
+
2\left\| \diag\Big[ \calT(\bff+\Delta \bff)^* \diag(\bar\bg +\overline{\Delta \bg}) (R^y -\hR^y)-\calT(\bff)^* \diag(\bar\bg ) (R^y -\hR^y)  \Big] \right\| \nn
\\
& \le
2\Big( \|\Delta \bff\| \|\bg+ \Delta \bg\|^3\|\bff+ \Delta \bff\|+ \|\bff\|\|\Delta \bg\|\|\bg+ \Delta \bg\|^2\|\bff+\Delta \bff\|+\|\bff\|\|\bg\|\|\Delta \bg\|\|\bg+\Delta \bg\|\|\bff+\Delta \bff\| \nn
\\
& \quad +
\|\bff\|\|\bg\|^2\|\Delta \bff\|\|\bg+\Delta \bg\|+\|\bff\|^2\|\bg\|^2 \|\Delta \bg\|
+
\|\Delta \bff\|\|\bg+\Delta \bg\|\|g\|^2\|f\| + \|\bff\|\|\Delta \bg\|\|g\|^2\|f\| \Big)
\nn
\\
& 
\quad+ 2\sqrt 2\left(
\|\Delta \bff\| \bg+\Delta \bg\| \|R^y -\hR^y\|_F +\|\bff\| \|\Delta \bg\| \|R^y -\hR^y\|_F \right) \nn
\\
&
\le 64\hn_0\sqrt{\hn_0} \|\Delta \bg\| + 24\sqrt{2}\hn_0 \hn_0^{\frac 1 4} \|\Delta \bff\| + 4\hn_0^{\frac 1 4} \|R^y -\hR^y\|_F\|\Delta \bff\| + 4\sqrt{2\hn_0}\|\Delta \bg\| \|R^y -\hR^y\|_F.
\label{lemmalippeq2}
\end{align}

\item[Part 2:] We estimate $\|\nabla_{\bff} \calL(\bz+\Delta \bz)-\nabla_{\bff} \calL(\bz)\|$. 
Notice that for any $\bg_1,\bg_2,\bff \in \CC^N$ and $X \in \CC^{N \times N}$, we have 
\begin{align*}
\|\calT^a[\diag(\bg_1)\calT(\bff)\diag(\bg_2)]\| & \le 2\|\bg_1\|\|\bff\|\|\bg_2\|,
\\
\|\calT^a[\diag(\bg_1)X \diag(\bg_2)]\| &\le 2 \|\bg_1\| \|\bg_2\| \|X\|_F.
\end{align*}
By using triangle inequalities, we obtain
\begin{align*}
&\|\nabla_{\bff} \calL(\bz+\Delta \bz)-\nabla_{\bff} \calL(\bz)\| \\
& \le
\Big\| \calT^a\Big[ \diag(\overline{\bg+\Delta \bg})\diag(\bg+\Delta \bg)\calT(\bff+\Delta \bff)\diag(\overline{\bg+\Delta \bg})\diag(\bg+\Delta \bg)
\\
& \quad - \diag(\bar\bg)\diag(\bg)\calT(\bff)\diag(\bar\bg)\diag(\bg)\Big]
\Big\|
\\
& 
\quad + 
\Big\| \calT^a \Big[\diag(\overline{\bg+\Delta \bg})   \diag(g)\calT(f)\diag(\bar{g}) \diag(\bg+\Delta \bg) - \diag(\bar\bg)   \diag(g)\calT(f)\diag(\bar{g}) \diag(\bg) \Big] \Big\|
\\
&\quad + 
\Big\| \calT^a \Big[ \diag(\overline{\bg+\Delta \bg}) ( R^y - \hR^y ) \diag(\bg+\Delta \bg) 
- \diag(\bar\bg) ( R^y - \hR^y ) \diag(\bg) \Big] \Big\|
\\
& \le 
2\Big\| |\bg+\Delta \bg|^2 - |\bg|^2 \Big\| \cdot \|\bff+\Delta \bff\|  \cdot \Big\| |\bg+\Delta \bg|^2 \Big\|
+2 \Big\| |\bg|^2 \Big\| \cdot \|\Delta \bff\| \cdot \Big\|  |\bg+\Delta \bg|^2 \Big\|
\\
& \quad + 2  \Big\| |\bg|^2 \Big\| \cdot \| \bff\| \cdot \Big\| |\bg+\Delta \bg|^2 - |\bg|^2 \Big\|
 + 2 \|\Delta \bg\| \cdot \|g\|^2 \cdot \|f\| \cdot \|\bg+\Delta \bg\|
+ 2 \|\bg\| \cdot \|g\|^2 \cdot \|f\| \cdot \|\Delta \bg\|
\\
& \quad
+2 \|\Delta \bg\|  \cdot \|R^y -\hR^y\|_F \cdot \|\bg+\Delta \bg\| + 2 \|\bg\|  \cdot  \|R^y -\hR^y\|_F  \cdot\|\Delta \bg\|.
\end{align*}
Whenever $\bz,\bz+\Delta \bz \in \calN_{\hn_0}$, we have 
$$\|\Delta \bg\| \le 2 \sqrt{2}\hn_0^{\frac 1 4}, \ \Big\| |\bg+\Delta \bg|^2 - |\bg|^2 \Big\| \le 2\sqrt{2}\hn_0^{\frac 1 4} \|\Delta \bg\|$$
and therefore
\beq
\label{lemmalippeq3}
\|\nabla_{\bff} \calL(\bz+\Delta \bz)-\nabla_{\bff} \calL(\bz)\| 
\le 
48 \sqrt 2 \hn_0 \hn_0^{\frac 1 4} \|\Delta \bg\| + 8\hn_0 \|\Delta \bff\| + 4\sqrt{2} \hn_0^{\frac 1 4} \|R^y -\hR^y\|_F \|\Delta \bg\|.
\eeq

\item[Part 3:]  We estimate $\|\nabla_{\bff} \calG(\bz+\Delta \bz)-\nabla_{\bff} \calG(\bz)\|$ and $\|\nabla_{\bg} \calG(\bz+\Delta \bz)-\nabla_{\bg} \calG(\bz)\|$. Notice that $\calG_0'(z) = 2\max(z-1,0)$ and hence
$$|\calG_0'(z_1) - \calG_0'(z_2)| \le 2|z_1-z_2|, \quad \calG_0'(z) \le 2 |z|, \quad \forall z_1,z_2,z \in \RR.$$
For any $\bz,\bz+\Delta \bz \in \calN_{\hn_0}$, we have
\begin{align}
& \|\nabla_{\bff} \calG(\bz+\Delta \bz)-\nabla_{\bff} \calG(\bz)\|
= \frac{\rho}{2\hn_0} \left\| \calG_0'\left(\frac{\|\bff+\Delta \bff\|^2}{2\hn_0} \right)(\bff+\Delta \bff) - \calG_0'\left(\frac{\|\bff\|^2}{2\hn_0} \right) \bff
\right\| \nn
\\
&
\le \frac{\rho}{2\hn_0} \left| \calG_0'\left(\frac{\|\bff+\Delta \bff\|^2}{2\hn_0} \right) - \calG_0'\left(\frac{\|\bff\|^2}{2\hn_0} \right) \right| \|\bff+\Delta \bff\|
+ \frac{\rho}{2\hn_0} \calG_0'\left(\frac{\|\bff\|^2}{2\hn_0} \right)\|\Delta \bff\| \nn
\\
& \le
\frac{\rho}{2\hn_0} \frac{2(\|\bff+\Delta \bff\|+\|\bff\|)(\|\bff+\Delta \bff\|-\|\bff\|)}{2\hn_0} \|\bff+\Delta \bff\|
+
\frac{\rho}{2\hn_0} \cdot \frac{2\|\bff\|^2}{2\hn_0} \|\Delta \bff\|
 \le \frac{6\rho}{\hn_0} \|\Delta \bff\|.
 \label{lemmalippeq4}
\end{align}
and 
\beq
\label{lemmalippeq5} 
\|\nabla_{\bg} \calG(\bz+\Delta \bz)-\nabla_{\bg} \calG(\bz)\|
\le \frac{6\rho}{\sqrt{\hn_0}}\|\Delta \bg\|.
\eeq

\end{description}

Combining \eqref{lemmalippeq1}, \eqref{lemmalippeq2}, \eqref{lemmalippeq3}, \eqref{lemmalippeq4}, \eqref{lemmalippeq5} gives rise to
\begin{align*}
&\|\nabla \tcalL(\bz+\Delta \bz) - \nabla \tcalL(\bz) \|
\\
 &\le
\left(
64\hn_0\sqrt{\hn_0} + 48 \sqrt{2}\hn_0 \sqrt[4]{\hn_0}
+ 4 \sqrt{2 \hn_0} \|R^y - \hR^y\|_F + 4\sqrt 2 \sqrt[4]{\hn_0} \|R^y - \hR^y\|_F
+ \frac{6\rho}{\sqrt{\hn_0}}
\right) \|\Delta \bg\|
\\
&\quad + 
\left(
24\sqrt{2} \hn_0 \sqrt[4]{\hn_0} + 8 \hn_0+ 4 \sqrt[4]{\hn_0} \|R^y - \hR^y\|_F + \frac{6\rho}{\hn_0}
\right) \|\Delta \bff\|
\\
& 
\le \left(166\hn_0 \max(\sqrt{\hn_0},\sqrt[4]{\hn_0})+8\hn_0 + 16\max(\sqrt{\hn_0},\sqrt[4]{\hn_0})\|R^y -\hR^y\|_F + \frac{12\rho}{\min(\hn_0,\sqrt{\hn_0})}
 \right) \|\Delta \bz\|.
\end{align*}

\end{proof}

The proof of Theorem \ref{thmwirtinger} is given below.

\begin{proof}[Proof of Theorem \ref{thmwirtinger}]
This proof consists of two parts. In Part 1, we will prove that $(\bg^k,\bff^k) \in \calN_{\hn_0}$ for every $k$, so $\nabla \tcalL$ always satisfies the Lipchitz property in Lemma \ref{lemmalip}. In Part 2, we prove the convergence of the gradient descent algorithm.

\begin{description}
\item[Part 1:]
In the optimization approach, we assume $\hn_0 = \|g\|^2\|f\|$ is known, and start with an initial point $(\bg^0,\bff^0)$ satisfying $\|\bg^0\| \le \sqrt[4]{2 \hn_0}, \|\bff^0\| \le \sqrt{2 \hn_0}.$ Notice that for any $\bg,\bff,\bh \in \CC^N$, we have 
$$\|\diag(\bg)\calT(\bff)\diag(\bh)\|_F \le \|\bg\|\|\bff\|\|\bh\|,$$
and then
\begin{align}
\tcalL(\bg^0,\bff^0) 
&= \calL(\bg^0,\bff^0) = \|\diag(g^0)\calT(\bff^0)\diag(\overline{\bg^0})-\diag(g)\calT(f)\diag(\overline{g})+R^y-\hR^y\|_F
\nn
\\
& \le \|\bg^0\|^2\|\bff^0\| + \|g\|^2\|f\| + \|R^y - \hR^y\|_F
\le 3 \hn_0 +  \|R^y - \hR^y\|_F.
\label{thmwirtingerpeq1}
\end{align}
Our gradient descent algorithm guarantees $\tcalL(\bg^k,\bff^k) \le \tcalL(\bg^0,\bff^0), k=1,2,\ldots$ (see \eqref{thmwirtingerpeq2}). We will prove $(\bg^k,\bff^k) \in \calN_{\hn_0}$ by contradiction. Assume that $(\bg^k,\bff^k) \notin \calN_{\hn_0}$ for some $k$. Then
\begin{align*}
\tcalL(\bg^k,\bff^k) \ge \rho\left[\calG_0\left(\frac{\|\bff^k\|^2}{2\hn_0} \right) +\calG_0\left(\frac{\|\bg^k\|^2}{\sqrt{2\hn_0}} \right) \right] > \rho\calG_0(\sqrt 2) = \rho(\sqrt 2 -1)^2.
\end{align*}
By taking $\rho \ge \frac{3\hn_0 +\|R^y -\hR^y\|_F}{(\sqrt 2 -1)^2}$, we would have $\tcalL(\bg^k,\bff^k) > 3\hn_0 + \|R^y -\hR^y\|_F$ which contradicts \eqref{thmwirtingerpeq1}. We conclude that $(\bg^k,\bff^k) \in \calN_{\hn_0}$ at every iteration $k$.

\item[Part 2:] Let $\bz = (\bg,\bff)$ and $\Delta \bz = (\Delta \bg, \Delta \bff)$. Notice that $\tcalL$ is continuously differentiable and real-valued. Suppose $\bz,\bz+\Delta \bz \in \calN_{\hn_0}$. Then $\bz+t\Delta \bz\in\calN_{\hn_0}$ due to convexity of $\calN_{\hn_0}$.

It follows from Lemma 6.1 in \cite{LLSW} that, if $h(t) := \tcalL(\bz+t\Delta \bz)$, then
$$\frac{d h(t)}{dt} = (\Delta \bz)^T \frac{\partial \tcalL}{\partial \bz} (\bz+t\Delta \bz)
+ (\Delta \bar \bz)^T \frac{\partial \tcalL}{\partial \bar \bz}(\bz+t\Delta \bz) = 2 {\rm Re}\left((\Delta \bz)^T\overline{ \nabla_{\bz} \tcalL (\bz+t\Delta \bz)}\right).$$
By the Fundamental Theorem of Calculus, we have
\begin{align*}
 \tcalL(\bz+\Delta \bz) - \tcalL(\bz)
& = \int_0^1 \frac{d h(t)}{dt} dt 
= 2  \int_0^1 {\rm Re}\left((\Delta \bz)^T\overline{ \nabla_{\bz} \tcalL (\bz+t\Delta \bz)}\right) dt 
\\
& \le  2  {\rm Re}\left((\Delta \bz)^T\overline{ \nabla_{\bz} \tcalL (\bz)}\right)  
+ 2 \|\Delta \bz\| \int_0^1 \| \nabla_{\bz} \tcalL (\bz+t\Delta \bz) - \nabla_{\bz} \tcalL (\bz)\| dt 
\\
& \le 
2  {\rm Re}\left((\Delta \bz)^T\overline{ \nabla_{\bz} \tcalL (\bz)}\right) 
+ C_{\rm Lip}  \|\Delta \bz\|^2.
\end{align*}

At the $k$th iteration, we let $\bz=(\bg^k,\bff^k)$, and $\Delta \bz =- \eta^k \nabla_{\bz} \tcalL(\bg^k,\bff^k)$, and then
\beq
\tcalL(\bg^{k+1},\bff^{k+1}) \le \tcalL(\bg^k,\bff^k)  - (2-C_{\rm Lip}\eta^k)\eta^k \|\nabla_{\bz} \tcalL(\bg^k,\bff^k)\|^2.
\label{thmwirtingerpeq2}
\eeq
As long as $\|\nabla_{\bz} \tcalL(\bg^k,\bff^k)\|>0$ and $\eta^k < 2/C_{\rm Lip}$, we have 
$$\tcalL(\bg^{k+1},\bff^{k+1}) < \tcalL(\bg^k,\bff^k),$$
which implies $\|\nabla_{\bz} \tcalL(\bg^k,\bff^k)\| \rightarrow 0$
as $k \rightarrow \infty.$ This captures the proof that the Wirtinger gradient descent converges to a critical point. 
\end{description}

\end{proof}

\commentout{
\section{Wirtinger derivatives}
\label{appwd}
Notice that $\calL(\bg,\bff) = \langle \diag(\bg)\calT(\bff)\diag(\bar\bg) -\hR^y,\diag(\bg)\calT(\bff)\diag(\bar\bg) -\hR^y\rangle$, and then
\begin{align*}
\frac{\partial \calL}{\partial \bg_n} 
& =
2\left\langle \diag(\bg)\calT(\bff)\frac{\partial \diag(\bar \bg)}{\partial \bar \bg_n} , \diag(\bg)\calT(\bff)\diag(\bar \bg) - \hR^y\right\rangle 
\\
& =2\left[ \calT(\bff)^*\diag(\bar\bg)\left( \diag(\bg)\calT(\bff)\diag(\bar \bg) - \hR^y \right)\right]_{n,n}
\end{align*}
which gives rise to \eqref{eqdg}. By a similar calculation, we have
\begin{align*}
\frac{\partial \calL}{\partial \bff_n}
=&
\left \langle \diag(\bg)\frac{\partial \calT(\bff)}{\partial \bar \bff_n}\diag(\bar \bg) , \diag(\bg)\calT(\bff)\diag(\bar \bg) - \hR^y \right\rangle \\
& +
\left \langle  \diag(\bg)\calT(\bff)\diag(\bar \bg) - \hR^y , \diag(\bg)\frac{\partial \calT(\bff)}{\partial \bff_n}\diag(\bar \bg)\right\rangle
\end{align*}
which along with \eqref{Ha} give rise to \eqref{Def}.
}

\commentout{
\section{Hessian Matrix}
In this section, we compute the Hessian matrix of the objective function. We first compute the  Hessian matrix with respect to the variables $g$ only.
\begin{align*}
 \frac{ \partial \calL}{2\partial g_n \partial \bar{g}_n} 
 &= \sum_{i=0}^{n} |g_i|^2 |f_{n-i}|^2+\sum_{j=1}^{N-n-1} |g_{n+j}|^2|f_j|^2+f_0(g_n\bar g_n \bar f_0-\bar {R}_{n,n}^y ), \\
 \frac{ \partial \calL}{2\partial g_n \partial {g}_n} 
 &= \overline {g_n}^2 |f_0|^2\\
  \frac{ \partial \calL}{2\partial g_n \partial \bar{g}_{m}}
&= 
\begin{cases} 
f_{n-m} (\barg_n g_m \barf_{n-m} -\bar{R}^y_{n,m}) & \mbox{ if } m=0,\dots,n-1\\
\bar f_{m-n}(g_{m}\bar g_{n} f_{m-n}-{R}_{m,n}^y), &\mbox{ if }m=n+1, \cdots, N-1
\end{cases}
\\
 \frac{ \partial \calL}{2\partial g_n \partial {g}_m} &= \bar{g}_m |f_{m-n}|^2 \bar g_n,~~~~~~ m=n+1,\cdots, N-1. \\
 \frac{ \partial \calL}{2\partial g_n \partial \bar{f}_m} &= \begin{cases}  | g_{n-m}|^2 \bar g_n  f_m    &\mbox{ if } N-n \leq m \leq n \mbox{ or } m=0 \\ 
\bar g_{n+m}( g_{n+m} \bar g_n  f_m- {R}^y_{n+m,n})  & \mbox{ if } n+1 \leq m \leq N-n-1 \\
 | g_{n-m}|^2 \bar g_n  f_m   + \bar g_{n+m}( g_{n+m} \bar g_n  f_m- {R}^y_{n+m,n}) & \mbox{ if } 1\leq m \leq \min\{n, N-n-1\}\\
 0  & \mbox{ otherwise} \end{cases}\\
 \frac{ \partial \calL}{2\partial g_n \partial {f}_m} &= \begin{cases} \bar g_{n-m}(g_{n-m}\bar g_n \bar f_m-\bar{R}_{n,n-m}^y)     &\mbox{ if } N-n \leq m \leq n  \mbox{ or } m=0  \\ 
|g_{n+m}|^2\bar f_m \bar g_n  & \mbox{ if } n+1 \leq m \leq N-n-1 \\
\bar g_{n-m}(g_{n-m}\bar g_n \bar f_m-\bar{R}_{n,n-m}^y) + |g_{n+m}|^2\bar f_m \bar g_n   & \mbox{ if } 1\leq m \leq \min\{n, N-n-1\}\\
 0  & \mbox{otherwise} \end{cases}\\
 \text{Case I: } & n \le N-n+1  \\
  \frac{ \partial \calL}{2\partial g_n \partial {f}_m} &= \begin{cases}
  |g_{n+m}|^2\bar f_m \bar g_n  & \mbox{ if } n+1 \leq m \leq N-n-1 \\
\bar g_{n-m}(g_{n-m}\bar g_n \bar f_m-\bar{R}_{n,n-m}^y) + |g_{n+m}|^2\bar f_m \bar g_n   & \mbox{ if } 1\leq m \leq \min\{n, N-n-1\}\\
 0  & \mbox{otherwise} \end{cases}\\
  \frac{ \partial \calL}{2\partial g_n \partial \bar{f}_m} &= \begin{cases}   
\bar g_{n+m}( g_{n+m} \bar g_n  f_m- {R}^y_{n+m,n})  & \mbox{ if } n+1 \leq m \leq N-n-1 \\
 | g_{n-m}|^2 \bar g_n  f_m   + \bar g_{n+m}( g_{n+m} \bar g_n  f_m- {R}^y_{n+m,n}) & \mbox{ if } 1\leq m \leq \min\{n, N-n-1\}\\
 0  & \mbox{ otherwise} \end{cases}\\
 \text{Case II: } & n > N-n+1  \\
  \frac{ \partial \calL}{2\partial g_n \partial {f}_m} &= \begin{cases}
\bar g_{n-m}(g_{n-m}\bar g_n \bar f_m-\bar{R}_{n,n-m}^y)   & \mbox{ if } N-n \le  m  \le n   \\
\bar g_{n-m}(g_{n-m}\bar g_n \bar f_m-\bar{R}_{n,n-m}^y) + |g_{n+m}|^2\bar f_m \bar g_n   & \mbox{ if } 1\leq m \leq \min\{n, N-n-1\}\\
 0  & \mbox{otherwise} \end{cases}\\ 
  \frac{ \partial \calL}{2\partial g_n \partial \bar{f}_m} &= \begin{cases}   
| g_{n-m}|^2 \bar g_n  f_m & \mbox{ if } N-n \le m \leq n \\
 | g_{n-m}|^2 \bar g_n  f_m   + \bar g_{n+m}( g_{n+m} \bar g_n  f_m- {R}^y_{n+m,n}) & \mbox{ if } 1\leq m \leq \min\{n, N-n-1\}\\
 0  & \mbox{ otherwise} \end{cases}\\
 \frac{ \partial \calL}{2\partial  \bar g_n \partial {g}_{m}}
&=\bar f_{m-n}(g_{m}\bar g_{n} f_{m-n}-{R}_{m,n}^y), ~~~~~m=n+1, \cdots, N-1.\\
 \frac{ \partial \calL}{2\partial  \bar g_n \partial {f}_{m}}
&= \frac{ \partial \calL}{2\partial   g_n \partial \bar{f}_{m}} \\
 \frac{ \partial \calL}{2\partial  \bar g_n \partial \bar{f}_{m}}
&= \frac{ \partial \calL}{2\partial   g_n \partial {f}_{m}} \\
\frac{ \partial \calL}{2\partial f_n \partial \bar{f}_i}
&=0, ~~~~~~~~i\neq n\\
 \frac{ \partial \calL}{2\partial f_n \partial \bar{f}_n} 
 &= \sum_{i=0}^{N-n-1} |g_i|^2 |g_{n+i}|^2 \\
  \frac{ \partial \calL}{2\partial f_n \partial {f}_m} 
 &= 0  \\
\end{align*}
}

\commentout{
\section{Local smoothness}

In this section, we want to prove the gradient of the objective function satisfies the Lipschitz condition on the bounded domain.
\begin{align*}
\frac{ \partial \calL}{\partial g } (f+\Delta f, g+\Delta g)-\frac{ \partial \calL}{\partial g} (f, g)&=\mathcal{H}^a(X_1+X_2+X_3),
\text{ where } \\
X_1&=\diag(| g+\Delta g |^2) \mathcal{H}(f)\diag(| g+\Delta g |^2)- \diag(| g|^2) \mathcal{H}(f)\diag(| g |^2),\\
X_2&=\diag(\bar g) R^y\diag(g)-\diag(\overline{g+\Delta g}) R^y\diag(g+\Delta g), ~~~~~~~~~~~~~~~~~\\
X_3&=\diag(| g+\Delta g |^2) \mathcal{H}(\Delta f)\diag(| g+\Delta g |^2). ~~~~~~~~~~~~~~~~~~~~~~~~~~~~~~~
\end{align*}

To estimate the norm of the gradient, we derive the following lemma. 

\begin{lemma} \label{estimate} \begin{itemize} \item For matrices $X$ and $Y$,  $ \|\mathcal{H}^a(X+Y)\| \leq \| \mathcal{H}^a(X) \|+ \mathcal{H}^a(Y) \|$. 
\item Let the matrix $X=\diag(g_1) \mathcal{H}(f) \diag(g_2)$ for vectors $g_1, g_2,$ and $f$ in $\mathbb{C}^N$, we have $\| \mathcal{H}^a(X)\| \leq 2 \|f\| \|g_1\| \|g_2\|.$ 
\item  For vectors $g_1, g_2 \in \mathbb{C}^N$, $\| g_1 g_2\| <\|g_1\| \|g_2\|.$
\end{itemize}
\end{lemma}
By computation, 
\begin{align*}
X_1=\diag(|g+\Delta g|^2) \mathcal{H}(f)\diag(g \overline{\Delta g})+ \diag(|g+\Delta g|^2) \mathcal{H}(f)\diag(\Delta g \overline{g+\Delta g}) \\+\diag(g \overline{\Delta g}) \mathcal{H}(f) \diag(|g|^2)+\diag(\Delta g(\overline{g+ \Delta g}))\mathcal{H}(f)\diag(|g|^2), \\
X_2=-\diag( \bar g)R^y\diag(\Delta g)-\diag(\overline{\Delta g})R^y\diag(g+\Delta g), ~~~~~~~~~~~~~~~~~~~~~~~~~~~~\\
X_3=\diag(|g+\Delta g|^2)\mathcal{H}(\Delta f)\diag(|g+\Delta g|^2).~~~~~~~~~~~~~~~~~~~~~~~~~~~~~~~~~~~~~~~~~~
\end{align*}

Using Lemma \ref{estimate}, we conclude that 
\begin{align*}
 \|\mathcal{H}^a(X_1) \| &\leq 2 \|f\|\| \Delta g\|(\| |g+\Delta g|^2 \| \|g\|+ \| |g+\Delta g|^2 \| \| g+\Delta g \| +\| |g|^2\| \|g+\Delta g\|), \\
 \|\mathcal{H}^a(X_2) \| &\leq  \|\Delta g\| (\|g\|+\|g+\Delta g\|) \|g_{true}\|^2 \|f_{true}\| ,\\
  \|\mathcal{H}^a(X_3) \| &\leq  \|\Delta f\| \| |g+\Delta g|^2\|^2. 
 \end{align*}
}

\bibliographystyle{plain}	
\bibliography{myref}		

\begin{thebibliography}{99} 
 

\bibitem{ADemanet}
A. Ahmed and L. Demanet, \lq\lq Leveraging diversity and sparsity in blind deconvolution,\rq\rq {\em IEEE Transactions on Information Theory} {\bf 64(6)}, pp.3975-4000, 2018.


\bibitem{RombergBD}
A. Ahmed, B. Recht and Justin Romberg, \lq\lq Blind deconvolution using convex programming,\rq\rq {\em IEEE Transactions on Information Theory} {\bf 60(3)}, pp.1711-1732, 2014.




\bibitem{BE_STFT}
T. Bendory and Y. C. Eldar and N. Boumal, \lq\lq Non-convex phase retrieval from STFT measurements,\rq\rq {\em IEEE Transactions on Information Theory} {\bf 64(1)}, pp.467-484, 2018.

\bibitem{Remi}
C. Bilen, G. Puy, R. Gribonval and L. Daudet, \lq\lq Convex optimization approaches for blind sensor calibration using sparsity,\rq\rq {\em IEEE Transactions on Signal Processing} {\bf 62(18)}, pp.4847-4856, 2014.




\bibitem{CandesCPAM}
E. J. Cand\`es and C. Fernandez-Granda, \lq\lq Towards a mathematical theory of super-resolution,\rq\rq {\em Communications on Pure and Applied Mathematics} {\bf 67(6)}, pp.906-956, 2014.

\bibitem{CandesWirtinger}
E. J. Cand\`es, X. Li and M. Soltanolkotabi, \lq\lq Phase retrieval via Wirtinger flow: Theory and algorithms,\rq\rq {\em IEEE Transactions on Information Theory} {\bf 61(4)}, pp.1985-2007, 2015.

\bibitem{ChiLift}
Y. Chi, \lq\lq Guaranteed blind sparse spikes deconvolution via lifting and convex optimization,\rq\rq {\em IEEE Journal of Selected Topics in Signal Processing} {\bf 10(4)}, pp.782-794, 2016.

\bibitem{ChiMismatch}
Y. Chi, L. L. Scharf, A. Pezeshki and A. R. Calderbank,  \lq\lq Sensitivity to basis mismatch in compressed sensing,\rq\rq {\em IEEE Transactions on Signal Processing} {\bf 59(5)}, pp.2182-2195, 2010.


\bibitem{cosse2017blind}
A. Cosse, \lq\lq From Blind deconvolution to Blind Super-Resolution through convex programming,\rq\rq  {arXiv:1709.09279}, 2017.


\bibitem{SpectralCS}
M. Duarte and R. G. Baraniuk, \lq\lq Spectral compressive sensing,\rq\rq {\em Applied and Computational Harmonic Analysis} {\bf 35(1)}, pp.111-129, 2013.

\bibitem{Eldar_SamplingBook}
Y. C. Eldar, {\em Sampling Theory: Beyond Bandlimited Systems}, Cambridge University Press, 2015.



\bibitem{FLSPIE}
A. Fannjiang and W. Liao, \lq\lq Mismatch and resolution in compressive imaging,\rq\rq {\em Wavelets and Sparsity XIV} {\bf 8138}, International Society for Optics and Photonics, 2011.

\bibitem{FL}
A. Fannjiang and W. Liao, \lq\lq Coherence pattern-guided compressive sensing with unresolved grids", {\em SIAM Journal on Imaging Sciences} {\bf 5(1)}, pp.179-202, 2012.


\bibitem{FannjiangYan}
A. C. Fannjiang, T. Strohmer and P. Yan, \lq\lq Compressed remote sensing of sparse objects,\rq\rq {\em SIAM Journal on Imaging Sciences} {\bf 3(3)}, pp.595-618, 2010.

\bibitem{FriedlanderWeiss}
B. Friedlander and A. J. Weiss, \lq\lq Eigenstructure methods for direction finding with sensor gain and phase uncertainties,\rq\rq {\em Circuits, Systems, and Signal Processing} {\bf 9(3)}, pp.271-300, 1990.


\bibitem{ArrayImaging}
H. Krim and M. Viberg, \lq\lq Two decades of array signal processing research: the parametric approach,\rq\rq {\em IEEE signal processing magazine} {\bf 13(4)}, pp.67-94, 1996.


\bibitem{lee2018fast}
K. Lee and N. Tian and J. Romberg, \lq\lq Fast and guaranteed blind multichannel deconvolution under a bilinear system model,\rq\rq {\em IEEE Transactions on Information Theory}, in press,  2018.



\bibitem{Leveque}
R. J. LeVeque, {\em Finite difference methods for ordinary and partial differential equations: steady-state and time-dependent problems}, Society for Industrial and Applied Mathematics, 2007.

\bibitem{WedinLi}
R. C. Li, \lq\lq Relative perturbation theory: II. Eigenspace and singular subspace variations,\rq\rq {\em SIAM Journal on Matrix Analysis and Applications} {\bf 20(2)}, pp.471-492, 1998.

\bibitem{LLSW}
X. Li, S. Ling, T. Strohmer and K. Wei, \lq\lq Rapid, robust, and reliable blind deconvolution via non-convex optimization,\rq\rq {\em Applied and Computational Harmonic Analysis}, in press.



\bibitem{LiEr}
Y. Li and M. H. Er, \lq\lq Theoretical analyses of gain and phase error calibration with optimal implementation for linear equispaced array,\rq\rq {\em IEEE Transactions on Signal Processing} {\bf 54(2)}, pp.712-723, 2006.

\bibitem{LiBresler}
Y. Li, K. Lee and Y. Bresler, \lq\lq Optimal sample complexity for blind gain and phase calibration,\rq\rq {\em IEEE Transactions on Signal Processing} {\bf 64(21)}, pp.5549-5556, 2016.


\bibitem{LiBreslerPCA}
Y. Li, K. Lee and Y. Bresler, \lq\lq Blind Gain and Phase Calibration for Low-Dimensional or Sparse Signal Sensing via Power Iteration,\rq\rq arXiv:1712.00111, 2017.




\bibitem{LFMUSIC}
W. Liao and A. Fannjiang, \lq\lq MUSIC for single-snapshot spectral estimation: Stability and super-resolution,\rq\rq {\em Applied and Computational Harmonic Analysis} {\bf 40(1)}, pp.33-67, 2016.




\bibitem{LingStrohmer}
S. Ling and T. Strohmer, \lq\lq Self-calibration and biconvex compressive sensing,\rq\rq {\em Inverse Problems} {\bf 31(11)}, pp.115002, 2015.

\bibitem{SWright}
J. Nocedal and S. Wright, {\em Numerical optimization}, Springer Science $\&$ Business Media, 2006.

\bibitem{PKailath}
A. Paulraj and Thomas Kailath, \lq\lq Direction of arrival estimation by eigenstructure methods with unknown sensor gain and phase,\rq\rq {\em IEEE International Conference on Acoustics, Speech, and Signal Processing (ICASSP)} Vol. 10. IEEE, 1985.

\bibitem{Prony}
G. R. B. de Prony, \lq\lq Essai Experimentale et Analytique\rq\rq , {\em J. de L'Ecole
Polytechnique} {\bf 2}, pp. 24-76, 1795.
 

\bibitem{Esprit}
R. Roy and T. Kailath, \lq\lq ESPRIT-estimation of signal parameters via rotational invariance techniques,\rq\rq {\em IEEE Transactions on acoustics, speech, and signal processing} {\bf 37(7)}, 984-995, 1989.



\bibitem{SchD}
 R. O. Schmidt, \lq\lq A signal subspace approach to multiple emitter location and spectral estimation\rq\rq,  Ph.D. thesis, Stanford Univ., Stanford, CA, Nov. 1981.


\bibitem{Sch}
 R. O. Schmidt, \lq\lq Multiple emitter location and signal parameter estimation\rq\rq, {\em IEEE Transactions on Antennas and Propagation} {\bf 34(3)}, pp.276-280, 1986. 

\bibitem{StewartSun}
G. W. Stewart and J. G. Sun, {\em Matrix perturbation theory (computer science and scientific computing)}, 1990.


\bibitem{Stoica}
P. Stoica and R. L. Moses, {\em Introduction to spectral analysis}, Vol. 1. Upper Saddle River: Prentice hall, 1997.

\bibitem{SunPR}
J. Sun, Q. Qu and J. Wright, \lq\lq A geometric analysis of phase retrieval,\rq\rq {\em 2016 IEEE International Symposium on Information Theory (ISIT)}, 2016.

\bibitem{SunDictionary}
J. Sun, Q. Qu and John Wright, \lq\lq Complete dictionary recovery over the sphere I: Overview and the geometric picture,\rq\rq {\em IEEE Transactions on Information Theory}  {\bf 63(2)}, pp.853-884, 2017.

\bibitem{Tang}
G. Tang, B. N. Bhaskar, P. Shah and B. Recht, \lq\lq Compressed sensing off the grid,\rq\rq {\em IEEE transactions on information theory} {\bf 59(11)}, pp.7465-7490, 2013.

\bibitem{Tropp}
J. A. Tropp, \lq\lq An introduction to matrix concentration inequalities,\rq\rq {\em Foundations and Trends in Machine Learning} {\bf 8.1-2}, pp.1-230, 2015. 

\bibitem{rv2018}
R. Vershynin, {\em High-Dimensional Probability An Introduction with Applications in Data Science}, Cambridge University Press (to appear), 2017.


\bibitem{WangTruncatedAmplitudeFlow}
G. Wang, G. B. Giannakis and Y. C. Eldar, \lq\lq Solving systems of random quadratic equations via truncated amplitude flow,\rq\rq {\em IEEE Transactions on Information Theory} {64(2)}, pp.773-794, 2018.


\bibitem{Wedin1972}
P. {\AA}. Wedin, \lq\lq Perturbation bounds in connection with singular value decomposition,\rq\rq {\em BIT Numerical Mathematics} {\bf 12(1)}, pp.99--111, 1972.



\bibitem{Weyl}
H. Weyl, \lq\lq Das asymptotische Verteilungsgesetz der Eigenwerte linearer partieller Differentialgleichungen (mit einer Anwendung auf die Theorie der Hohlraumstrahlung),\rq\rq {\em Mathematische Annalen} {\bf 71(4)}, pp.441-479, 1912.
 
\bibitem{WRS93} 
M. P. Wylie, S. Roy  and R. F. Schmitt, \lq\lq Self-calibration of linear equi-spaced (LES) arrays,\rq\rq {\em IEEE International Conference on Acoustics, Speech, and Signal Processing (ICASSP)} {\bf 1}, pp. 281-284, 1993. 

\bibitem{WRM94} 
M. P. Wylie, S. Roy, and H. Messer \lq\lq  Joint DOA Estimation and Phase Calibration of Linear Equispaced (LES) Arrays,\rq\rq {\em IEEE Transactions on Signal Processing} {\bf 42(12)}, pp. 3449-3459, 1994. 



\bibitem{yang2016super}
D. Yang, G. Tang and M. Wakin \lq\lq Super-resolution of complex exponentials from modulations with unknown waveforms,\rq\rq {\em IEEE Transactions on Information Theory}, {\bf 62(10)}, pp.5809-5830, 2016.


\bibitem{TuoLowRank}
Z. Tuo, Z. Wang and Han Liu, \lq\lq A non-convex optimization framework for low rank matrix estimation,\rq\rq {\em Advances in Neural Information Processing Systems}, 2015.



 
\end{thebibliography}
 
\end{document}